\newtheorem{theorem}{Theorem}[section]
\newtheorem{proposition}[theorem]{Proposition}
\newtheorem{lemma}[theorem]{Lemma}
\theoremstyle{definition}
\newtheorem{definition}[theorem]{Definition}
\newtheorem{remark}[theorem]{Remark}
\newtheorem{example}[theorem]{Example}
\theoremstyle{plain}
\newcommand{\thistheoremname}{}
\newtheorem*{genericthm}{\thistheoremname}
\newenvironment{problem}[1]
  {\renewcommand{\thistheoremname}{#1}\begin{genericthm}}{\end{genericthm}}
\newcommand{\dotminus}{\mathbin{\text{\@dotminus}}}
\newcommand{\@dotminus}{%
  \ooalign{\hidewidth\raise1ex\hbox{.}\hidewidth\cr$\m@th-$\cr}%
}
\renewcommand{\iff}{\Leftrightarrow}
\DeclarePairedDelimiter{\set}{\lbrace}{\rbrace}
\renewcommand{\vec}[1]{\mathbf{#1}}
\newcommand{\defeq}{=_{\textrm{\scriptsize{def}}}}
\newcommand{\nat}{\mathbb{N}}
\newcommand{\hash}{\mathrel{\#}}
\newcommand{\partialfn}{\rightharpoonup}
\newcommand{\finpartialfn}{\partialfn_{\textrm{\tiny fin}}}
\newcommand{\var}{\mathsf{Var}}
\newcommand{\nil}{\mathsf{nil}}
\newcommand{\emp}{\mathsf{emp}}
\newcommand{\psto}[2]{{#1} \mapsto {#2}}
\newcommand{\fv}[1]{FV(#1)}
\newcommand{\ASL}{\mathsf{ASL}}
\newcommand{\qf}[1]{\mathsf{qf}(#1)}
\newcommand{\trms}[1]{T(#1)}
\newcommand{\pres}{\mathsf{PbA}}
\newcommand{\abdterms}[2]{\mathcal{T}_{#1,#2}}
\newcommand{\htriple}[3]{\mbox{$\{#1\}\,#2\,\{#3\}$}}
\newcommand{\dom}[1]{\mathrm{dom}\left(#1\right)}
\newcommand{\val}{{\sf Val}}
\newcommand{\bigsepstar}{\mathop{\raisebox{-1ex}{{\Huge $*$}}}}
\def\bigsepcirc{\displaystyle\mathop{\raisebox{0ex}{$\bigcirc$}}}
\newcommand{\modelsPA}{\models}
\newcommand{\Dlt}{<_{\Delta}}
\newcommand{\Dleq}{\leq_{\Delta}}
\newcommand{\Deq}{=_{\Delta}}
\newcommand{\heaplet}[2]{\left\langle\psto{#1}{#2}\right\rangle}
\newcommand{\code}[1]{\text{\texttt{#1}}}
\newcommand{\arrcovalg}{\ensuremath{\mathsf{arrcov}}}
\newcommand{\arrcov}[3]{\ensuremath{\arrcovalg_{#1}(#2,#3)}}
\newcommand{\ptocovalg}{\ensuremath{\mathsf{ptocov}}}
\newcommand{\ptocov}[3]{\ensuremath{\ptocovalg_{#1}(#2,#3)}}
\newcommand{\mkarrays}[1]{\mathcal{#1}^{\mathsf{arr}}}
\newcommand{\mkptos}[1]{\mathcal{#1}^{\mathsf{pto}}}
\newcommand{\Aarrays}{\mkarrays{A}}
\newcommand{\Aptos}{\mkptos{A}}
\newcommand{\Xarrays}{\mkarrays{X}}
\newcommand{\Xptos}{\mkptos{X}}
\newcommand{\Barrays}{\mkarrays{B}}
\newcommand{\Bptos}{\mkptos{B}}
\newcommand{\Yarrays}{\mkarrays{Y}}
\newcommand{\Yptos}{\mkptos{Y}}
\newcommand{\HB}{\mathcal{H}_B}
\newcommand{\cutheap}[1]{\llbracket #1 \rrbracket^{s,h}}
\newcommand{\relarray}[3]{\mathsf{array}(#1,#2,#3)}
\newcommand{\absarray}[2]{\mathsf{array}(#1,#2)}
\newcommand{\ls}[2]{\mathsf{ls}\,#1\,#2}
\newcommand{\tool}[1]{\textsc{#1}}
\newcommand{\abstr}[1]{\lfloor #1\rfloor}
\newcommand{\NP}{\ensuremath{\mathsf{NP}}}
\newcommand{\CoNP}{\ensuremath{\mathsf{coNP}}}
\newcommand{\EXPTIME}{\ensuremath{\mathsf{EXP}}}
\newcommand{\EXP}{\EXPTIME}
\newcommand{\PTIME}{\ensuremath{\mathsf{PTIME}}}
\newcommand{\proofcase}[1]{\medskip\noindent{\bf Case {#1}:}}
\newcommand{\tikzarray}[6]{
\pgfmathsetmacro\sx{#3 - #1}
\pgfmathsetmacro\sy{#4 - #2}
\node (n#1#2#3#4) at (#1,#2) [draw,minimum width=\sx cm,minimum height=\sy cm,anchor=north west] {};
\node (l#1#2#3#4) at (#1,#2) [anchor=north west,text height=1.5ex] {#5};
\node (r#1#2#3#4) at (#3,#2) [anchor=north east,text height=1.5ex] {#6};
}
\newcommand{\tikzdashedarray}[6]{
\pgfmathsetmacro\sx{#3 - #1}
\pgfmathsetmacro\sy{#4 - #2}
\node (n#1#2#3#4) at (#1,#2) [draw,dashed,minimum width=\sx cm,minimum height=\sy cm,anchor=north west] {};
\node (l#1#2#3#4) at (#1,#2) [anchor=south west] {#5};
\node (r#1#2#3#4) at (#3,#2) [anchor=south east] {#6};
}
\newcommand{\tikzgap}[6]{
\pgfmathsetmacro\sx{#3 - #1}
\pgfmathsetmacro\sy{#4 - #2}
\node (n#1#2#3#4) at (#1,#2) [minimum width=\sx cm,minimum height=\sy cm,anchor=north west,pattern=north west lines] {};
\node (l#1#2#3#4) at (#1,#2) [anchor=south west] {#5};
\node (r#1#2#3#4) at (#3,#2) [anchor=south east] {#6};
}
\newcommand{\mytag}[2]{%
  \text{#1}%
  \@bsphack
  \protected@write\@auxout{}%
         {\string\newlabel{#2}{{\theequation#1}{\thepage}}}%
  \@esphack
}
\newcommand{\removelatexerror}{\let\@latex@error\@gobble}
\begin{document}

\title{Biabduction (and Related Problems) in Array Separation Logic}



 \authorinfo{James Brotherston}
            {University College London, UK}
            {J.Brotherston@ucl.ac.uk}

 \authorinfo{Nikos Gorogiannis}
            {Middlesex University, UK}
            {nikos.gorogiannis@gmail.com}

 \authorinfo{Max Kanovich}
            {University College London, UK
  and National Research University Higher School of Economics,
  Russian Federation}
            {M.Kanovich@ucl.ac.uk}

\maketitle

\begin{abstract}
We investigate \emph{array separation logic} ($\ASL$), a variant of symbolic-heap separation logic in which the data structures are either pointers or \emph{arrays}, i.e., contiguous blocks of allocated memory.  This logic provides a language for compositional memory safety proofs of 
imperative array programs.

We focus on the \emph{biabduction} problem for this logic, which has been established as the key to automatic specification inference at the industrial scale. 
We present an $\NP$ decision procedure for biabduction in $\ASL$ that produces solutions of reasonable quality, and we also show that the problem of finding a consistent solution is $\NP$-hard.

Along the way, we study satisfiability and entailment in our logic, giving decision procedures and complexity bounds for both problems. We show satisfiability to be $\NP$-complete, and entailment to be decidable with high complexity. The somewhat surprising fact that biabduction is much simpler than entailment is explained by the fact that, as we show, the element of choice over biabduction solutions enables us to dramatically reduce the search space.
\end{abstract}

\category{F.3.1}{Specifying and Verifying and Reasoning about Programs}{Logics of programs}
\category{D.2.4}{Software/Program Verification}{Assertion checkers}
\category{F.2}{Analysis of Algorithms and Problem Complexity}{}

\keywords Separation logic, arrays, biabduction, satisfiability, entailment, complexity.

\section{Introduction}
\label{sec:introduction}

In the last 15 years, \emph{separation logic}~\cite{Reynolds:02} has evolved from a novel way to reason about memory pointers to a mainstream technique for scalable program verification. Facebook's \tool{Infer}~\cite{Calcagno-etal:15} static analyser 
is perhaps the best known tool based on separation logic; other examples include \tool{SLAyer}~\cite{Berdine-Cook-Ishtiaq:11}, \tool{VeriFast}~\cite{Jacobs-etal:11} and the \tool{HiP} tool series~\cite{Chin-etal:12}.

Separation logic is based upon \emph{Hoare triples} of the form $\htriple{A}{C}{B}$, where $C$ is a program and $A,B$ are formulas in a logical language.  Its compositional nature, the key to scalable analysis, is supported by two main pillars.  The first pillar is the soundness of the following \emph{frame rule}:
\[\begin{prooftree}
\htriple{A}{C}{B}
\justifies
\htriple{A * F}{C}{B * F}
\using \mbox{(Frame})
\end{prooftree}\]
where the \emph{separating conjunction} $*$ is read, intuitively, as ``\emph{and separately in memory}'', and subject to the restriction that $C$ does not modify any free variables in 
$F$~\cite{Yang-OHearn:02}.

The second pillar is a tractable algorithm for the \emph{biabduction} problem~\cite{Calcagno-etal:11}: given formulas $A$ and $B$, find ``antiframe'' and ``frame'' formulas $X$, $Y$ respectively with
    \[
    A * X \models B * Y\ ,
    \]%
usually subject to the proviso that $A * X$ should be satisfiable.
Solving this problem enables us to infer specifications for whole 
programs given specifications for their individual
components~\cite{Calcagno-etal:11}.  E.g., if $C_1$
and $C_2$ have specifications $\htriple{A'}{C_1}{A}$ and
$\htriple{B}{C_2}{B'}$, we can use a solution $X,Y$ to the above
biabduction problem to construct a specification for $C_1;C_2$ as
follows, using the frame rule and the usual Hoare logic rules for
consequence ($\models$) and sequencing ($;$):  
\begingroup
\small 
\[\begin{prooftree}
\[
\[
\htriple{A'}{C_1}{A}
\justifies
\htriple{A' * X}{C_1}{A * X} \using \mbox{(Frame)}\]
\justifies
\htriple{A' * X}{C_1}{B * Y} \using \mbox{($\models$)}\]
\[
\htriple{B}{C_2}{B'}
\justifies
\htriple{B * Y}{C_2}{B' * Y} \using \mbox{(Frame)}\]
\justifies
\htriple{A' * X}{C_1;C_2}{B' * Y}
\using \mbox{($;$)}
\end{prooftree}\]
\endgroup
Bottom-up interprocedural analyses based on separation logic, such as Facebook \tool{Infer}, employ biabduction in this way to infer program specifications bottom-up from unannotated code. Typically, the underlying language of assertion formulas is based on the ``symbolic heap'' fragment of separation logic over linked lists~\cite{Berdine-Calcagno-OHearn:04}, which is known to be tractable~\cite{Cook-etal:11}. 

In this paper, we instead focus on a different, but similarly ubiquitous data structure for imperative programming, namely \emph{arrays}, which we view as contiguous blocks of allocated heap memory.  We propose an \emph{array separation logic} ($\ASL$) in which we replace the usual ``list segment'' predicate $\textsf{ls}$ of separation logic by an ``array'' predicate $\absarray{a}{b}$, which denotes a contiguous block of allocated heap memory from address $a$ to address $b$ (inclusive), as was first proposed in~\cite{OHearn:07}.  In addition, since we wish to reason about array bounds, we also allow our assertions to contain linear arithmetic. 
Thus, for example, a pointer $x$ to an memory block of length $n > 1$ and starting at $a$ can be represented in $\ASL$ by the assertion
\[
n > 1: x \mapsto a * \absarray{a}{a + n - 1}\ .
\]
The $\textsf{array}$ predicate only records the bounds of memory blocks, and not their contents; this is analogous to the abstraction from pointers to lists in standard separation logic. 
Indeed, 
the memory safety of array-manipulating programs typically depends only on the memory footprint of the arrays.  E.g., the usual {\tt quicksort} and {\tt mergesort} procedures for arrays work by partitioning the array and recursing based on a pivot chosen from among the allocated addresses.

Our focus in this paper is on the biabduction problem, as above, for $\ASL$. Solving this problem is, we believe, \emph{the} most critical step in building a bottom-up memory safety analysis  {\`a} la \tool{Infer} for array-manipulating programs. The first main contribution of the current work is a decision procedure for the (quantifier-free) biabduction problem in $\ASL$, which we present in Section~\ref{sec:biabduction}. It relies on the idea that, given $A$ and $B$, we can look for some consistent total ordering of all the array endpoints and pointer addresses in both $A$ and $B$, and impose this ordering, which we call a \emph{solution seed}, as the arithmetical part of the solution $X$.  Having done this, the computation of the ``missing'' arrays and pointers in $X$ and $Y$ becomes a polynomial-time process, and thus the entire algorithm runs in $\NP$-time. We demonstrate that, as well as being sound, this algorithm is in fact complete; a biabduction solution exists if and only if a solution seed exists.  We also show that the biabduction problem is $\NP$-hard, 
and give further bounds for cases involving quantifiers.

Along the way, we study the \emph{satisfiability} and \emph{entailment} problems in $\ASL$, and, as our second main contribution, we provide decision procedures and upper/lower complexity bounds for both problems. We find that satisfiability is $\NP$-complete, while entailment is decidable with very high complexity: it can be encoded in $\Pi^0_2$ Presburger arithmetic, and is also at least \mbox{$\Pi^P_2$-hard}.  
The fact that entailment is much harder than biabduction may at first sight appear surprising, since biabduction also seems to involve solving an entailment problem. 
However, in the biabduction problem, there is an element of \emph{choice} over $X$ and $Y$, and we can exploit this in a way that dramatically reduces the cost of checking these conditions. Namely, committing to a specific solution seed (see above) reduces biabduction to a simple computation rather than a search problem.

The remainder of this paper is structured as follows. Section~\ref{sec:examples} gives some examples showing how the $\ASL$ biabduction problem arises in verification practice; the syntax and semantics of $\ASL$ is then presented formally in Section~\ref{sec:language}.  We present algorithms and establish complexity bounds for satisfiability, biabduction and entailment for $\ASL$ in Sections~\ref{sec:satisfiability},~\ref{sec:biabduction} and~\ref{sec:entailment} respectively. Section~\ref{sec:related} surveys the related work, and Section~\ref{sec:conclusion} concludes.

Due to space limitations, most proofs of the results in this paper are either omitted or only sketched.  Full proofs are available in the supplementary material for referees.

\section{Motivating examples}
\label{sec:examples}

Here, we show two examples in order to illustrate how the biabduction problem arises in the context of verifying array programs, using $\ASL$ as the underlying assertion language.  In these examples, we often use a ternary \emph{base-offset} variant of the basic {\textsf{array}} predicate: $\relarray{a}{i}{j}$ is syntactic sugar for $\absarray{a+i}{a+j}$.  (See Remark~\ref{rem:addressing} for further details.)

We note that, in separation logic, Hoare triples $\htriple{A}{C}{B}$ have a \emph{fault-avoiding} interpretation, where the precondition $A$ guarantees that the code $C$ is memory safe~\cite{Yang-OHearn:02}.


\newcommand{\Cd}{\text{\lstinline|d|}}
\newcommand{\Cs}{\text{\lstinline|s|}}
\newcommand{\Cb}{\text{\lstinline|b|}}
\newcommand{\Cm}{\text{\lstinline|m|}}
\newcommand{\Cl}{\text{\lstinline|l|}}
\newcommand{\Ck}{\text{\lstinline|k|}}
\newcommand{\Cn}{\text{\lstinline|n|}}
\newcommand{\Cz}{\text{\lstinline|z|}}

\long\def\commentt#1{}

\begin{example}
\label{ex:biabduction}
A message \Cm{} of size \Ck{} must be inserted
at the beginning of a buffer \Cb{} of size \Cn{},
shifting the previous contents to the right, via
the following \lstinline|C| function \lstinline|shift_put|.

\begingroup
\footnotesize
\begin{lstlisting}[escapeinside={(*}{*)}]
void shift_put(char *m, int k, char *b, int n){
  memmove(b+k, b, n-k);     // (*$c_1$*)
  memcpy(b, m, k);          // (*$c_2$*)
}
\end{lstlisting}
\endgroup

The procedure\/ \lstinline|memmove(d,s,z)| copies a byte sequence of
length \Cz{} starting from address \Cs{} into the region starting at
address \Cd{}, even when the two regions overlap. The call $c_1$
should shift the previous contents of length $\Cn{}-\Ck{}$
from the beginning of the buffer to its end.
The relevant specification of the call $c_1$ is the Hoare triple
$\htriple{A}{c_1}{A}$ where (assuming non-negative \lstinline|int|s) $A$
is the following assertion in $\ASL$:
\[
\Ck<\Cn :
\relarray{\Cb}{0}{\Ck-1} * \relarray{\Cb}{\Ck}{\Cn-1}\ . 
\]


Similarly, \lstinline|memcpy(d,s,z)| copies \Cz{} bytes from \Cs{}  into \Cd{} but overlap is forbidden. The call $c_2$ above is specified by the triple $\htriple{B}{c_2}{B}$, where $B$ is the $\ASL$ formula
\[
\relarray{\Cm}{0}{\Ck-1} * \relarray{\Cb}{0}{\Ck-1}\ . 
\]%
It is easy to see that $A\not\models B$, so we cannot immediately combine these specifications to prove
 \lstinline|shift_put|. To overcome this, we solve the \emph{biabduction} problem: find formulas $X,Y$ such that \mbox{$A * X \models B * Y$}. One possible solution is
\[
 X = \relarray{\Cm}{0}{\Ck-1}\ , \quad Y = \relarray{\Cb}{\Ck}{\Cn-1}\ .
\]%
Using $A*X$ as the precondition of\/ \mbox{$c_1;c_2$}\ and the knowledge that \mbox{$A*X\models B*Y$} allows us to apply
 the derivation given earlier and\/ \emph{automatically abduce} the valid specification
$$
  \htriple{D}{\text{\lstinline|shift_put(m,k,b,n)|}}{D}
$$%
where (after merging arrays from \Cb{}) $D$ is the $\ASL$ assertion
\[
\Ck<\Cn : \relarray{\Cm}{0}{\Ck-1} * \relarray{\Cb}{0}{\Cn-1}\ . 
\]%
\end{example}

\long\def\commentt#1{}

\begin{example}
\label{ex:heapsort}
Here we show how to assemble a valid memory specification for
{\tt BUILD-MAX-HEAP}, an essential preparatory step
 in \emph{heapsort} (see e.g.~\cite{Cormen-etal:09}),
 starting from ``small specs'' for its atomic commands.

\begingroup
\footnotesize
\begin{lstlisting}[escapeinside={(*}{*)}]
void BUILD-MAX-HEAP(int n){
  int i = 1; T[i] = (*$b_{1}$*);     // root
  while (2*i <= n) {
    T[2*i] = (*$b_{2\code{i}}$*);           // left-hand child
    H(2*i);                // restore max-heap
    if (2*i+1 <= n) {
      T[2*i+1] = (*$b_{2\code{i}+1}$*);     // right-hand child
      H(2*i+1);            // restore max-heap
    }
    i = i+1;
  }
}
\end{lstlisting}
\endgroup

\noindent
Given a list of values $b_1,\ldots,b_n$, we insert them one-by-one into the array \code{T}, which is viewed
 as a \emph{binary tree}: for any index $i$, the indices $2i$ and $2i+1$
 are its left and right `children'.  Additionally, the values stored in the array should satisfy the \emph{max-heap property}: for every child $j$ with parent~$i$ say (so \mbox{$i = \lfloor j/2 \rfloor$}), we have \code{T[$j$]} $\leq$ \code{T[$i$]}. This property is maintained using the auxiliary function
 \code{H(k)}, which
 swaps up the newly added \code{T[k]} to the proper place along the path from \code{k} to the root, $1$:

\begingroup
\footnotesize
\begin{lstlisting}[escapeinside={(*}{*)}]
void H(int k){
 int j2 = k; int j1 = (*$\lfloor$*)j2/2(*$\rfloor$*);
 while ( j1 >= 1 && T[j1] < T[j2] ){
  swap(T[j1],T[j2]); j2 = j1; j1 = (*$\lfloor$*)j2/2(*$\rfloor$*);}}
\end{lstlisting}
\endgroup

\noindent
Fixing a base-offset $a$ for the array \code{T}, observe first that any command of the form \code{T[$j$] = $b_j$;}
obeys the obvious memory spec:
\[
\htriple
{\relarray{a}{j}{j}}
{\ \code{T[$j$] = $b_j$;}\ }{\psto{a + j}{b_j}}
\label{eq-j-heapsort}
\]
Now, writing $C_j$ for a command of the form \code{T[$j$] = $b_j$;}, observe that we can compose a specification for \mbox{$C_1;C_2;$}, as in the previous example, by solving the following biabduction problem: find $X$ and $Y$ such that
\[
 a + 1 \mapsto b_{1}\ *\ X \models
 \relarray{a}{2}{2} * Y\ .
     \label{eq-X-Y-heapsort}
\]
A minimal solution is quite evident: take $X=\relarray{a}{2}{2}$ and $Y= \relarray{a}{1}{1}$. Following the method outlined in the introduction, we can (automatically) generate the valid specification:
\[%
\htriple
{\relarray{a}{1}{1} * X}{\ C_{1};C_{2};\ }{a + 2 \mapsto b_{2} * Y}
\label{eq-C1-C2-heapsort-0}
\]
Taking into account that 
$a + 2\mapsto b_2 \models \relarray{a}{2}{2}$, and joining arrays, we get
\[
 \htriple{\relarray{a}{1}{2}}{\ C_{1};C_{2};\ }
            {\relarray{a}{1}{2}}
     \label{eq-C1-C2-heapsort}
\]%
As \code{H($2$)} manipulates \code{T[$1$]} and \code{T[$2$]} only,
 we can show that
\[%
 \htriple{\relarray{a}{1}{2}}{\ C_{1};C_{2};\code{H($2$)};\ }
            {\relarray{a}{1}{2}}
     \label{eq-C1-C2-H-heapsort}
\]
By iterating this process, 
 we get a valid specification
 for the unfolded \code{BUILD-MAX-HEAP(n)}:
{
\[\begin{array}{c}
\{\relarray{a}{1}{\code{n}}\} \\[0.3ex]
{C_{1};C_{2};\code{H($2$)};C_{3};\code{H($3$)};\dots C_{\code{n}};\code{H(n)};} \\[0.3ex]
\{\relarray{a}{1}{\code{n}}\}
\end{array}\]}%
which is a valid specification for \code{BUILD-MAX-HEAP(n)}:
\[%
\htriple{\relarray{a}{1}{\code{n}}}
    {\ \code{BUILD-MAX-HEAP(n)}\ }{\relarray{a}{1}{\code{n}}}
\label{eq-heapsort}
\]%
\end{example}


\section{Array separation logic, $\ASL$}
\label{sec:language}

In this section we present separation logic for arrays, $\ASL$, which
employs a similar \emph{symbolic heap} formula structure to that
in~\cite{Berdine-Calcagno-OHearn:04}, but which treats contiguous
\emph{arrays} in memory rather than linked list segments; we
additionally allow a limited amount of pointer arithmetic, given by a conjunction of atomic Presburger formulas.


\begin{definition}[Symbolic heap]
\label{defn:sym_heap} \emph{Terms} $t$, \emph{pure formulas} $\Pi$ and \emph{spatial formulas} $F$ are given by the following grammar:
\begin{align*}
t & \Coloneqq x \mid n \mid t + t \mid nt \\
\Pi & \Coloneqq t = t \mid t \neq t \mid t \leq t \mid t < t \mid \Pi \land \Pi \\
F & \Coloneqq \emp \mid \psto{t}{t} \mid \absarray{t}{t} \mid F * F \\ 
SH & \Coloneqq \exists \vec{z}.\ \Pi : F
\end{align*}
where $x$ ranges over an infinite set $\var$ of \emph{variables},
$\vec{z}$ over sets of variables, and
$n$ over natural number constants in $\nat$. A \emph{symbolic heap} is
given by \mbox{$\exists \vec{z}.\ \Pi : F$}, where $\vec{z}$ is a tuple of
(distinct) variables, $F$ is a spatial formula and $\Pi$ is a pure formula.
Whenever one of $\Pi, F$ is empty, we omit the colon.
We write $\fv{A}$ for the set of free variables occurring in a symbolic
heap $A$.
\end{definition}

 If $A = \exists \vec{z}.\ \Pi : F$ is a symbolic heap,
 then we write $\qf{A}$ for $\Pi : F$,
 the \emph{quantifier-free} part of $A$.

We interpret the above language in a simple stack-and-heap model, in which we take both locations and values to be natural numbers. A \emph{stack} is a function $s\colon\var \rightarrow \nat$. We extend stacks to interpret terms in the obvious way:
\[
s(n) = n,\;  s(t_1 + t_2) = s(t_1) + s(t_2),\; \mbox{ and } s(nt) = ns(t)\ .
\]
If $s$ is a stack, $z \in \var$ and $m \in \nat$, we write $s[z \mapsto v]$ for the stack defined as $s$ except that $s[z \mapsto v](z) = v$. We extend stacks pointwise to act on tuples of terms.

A \emph{heap} is a finite partial function $h\colon \nat \finpartialfn \nat$ mapping finitely many locations to values; we write $\dom{h}$ for the set of locations on which $h$ is defined, and $e$ for the empty heap that is undefined on all locations.
We write $\circ$ for \emph{composition} of domain-disjoint heaps: if $h_1$ and $h_2$ are heaps, then $h_1 \circ h_2$ is the union of $h_1$ and $h_2$ when $\dom{h_1}$ and $\dom{h_2}$ are disjoint, and undefined otherwise.

\begin{definition}
\label{defn:satisfaction}
The \emph{satisfaction relation} $s,h\models A$, where $s$ is a stack, $h$ a heap and $A$ a symbolic heap, is defined by structural induction on $A$ in Fig.~\ref{fig:satisfaction}.
\end{definition}

\begin{figure}
\[\begin{array}{@{}l@{\hspace{0.1cm}}c@{\hspace{0.2cm}}l@{}}
s,h \models t_1 \sim t_2 & \Leftrightarrow & s(t_1) \sim s(t_2) \;\; (\mathord{\sim} \in \{=,\neq,<,\leq\}) \\[0.5ex]
s,h \models \Pi_1 \land \Pi_2 & \Leftrightarrow &
    s,h \models \Pi_1 \text{ and } s,h\models \Pi_2 \\[0.5ex]
s,h \models \emp & \Leftrightarrow & h = e \\[0.5ex]
s,h \models \psto{t_1}{t_2} & \Leftrightarrow & \dom{h}= \set{s(t_1)}\mbox{ and }\\&& h(s(t_1)) = s(t_2) \\[0.5ex]
s,h \models \absarray{t_1}{t_2} & \iff & s(t_1) \leq s(t_2) \mbox{ and } \\ && \dom{h}=\{s(t_1), \ldots, s(t_2)\}\\[0.5ex]
s,h \models F_1 * F_2 & \Leftrightarrow &
    \exists h_1,h_2.\ h=h_1 \circ h_2 \mbox{ and } \\&& s,h_1 \models F_1 \mbox{ and } s,h_2 \models F_2 \\[0.5ex]
s,h \models \exists \vec{z}.\ \Pi : F & \Leftrightarrow& \exists\vec{m} \in \val^{|\vec{z}|}.\ s[\vec{z} \mapsto \vec{m}],h \models \Pi \\&& \mbox{ and } s[\vec{z} \mapsto \vec{m}],h \models F
\end{array}\]%
\caption{The $\ASL$ satisfaction relation (cf.~Defn~\ref{defn:satisfaction}).}
\label{fig:satisfaction}
\end{figure}

Satisfaction of pure formulas $\Pi$ does not depend on the heap; we write $s \models \Pi$ to mean that $s,h \models \Pi$ (for any heap $h$). We write $A \models B$ to mean that $A$ \emph{entails} $B$, i.e. that $s,h \models A$ implies $s,h \models B$ for all stacks $s$ and heaps $h$.

\begin{lemma}
\label{lem:semantics}
For all quantifier-free symbolic heaps $A$, if \mbox{$s,h \models A$} and $s,h' \models A$, then $\dom{h} = \dom{h'}$.
\end{lemma}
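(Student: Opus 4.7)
The plan is to prove the lemma by structural induction on the spatial part $F$ of the quantifier-free symbolic heap $A = \Pi : F$. The pure part $\Pi$ plays no role because, as noted just before the lemma, satisfaction of a pure formula does not depend on the heap, so $s,h \models \Pi$ iff $s,h' \models \Pi$. Thus, from the two assumptions $s,h \models A$ and $s,h' \models A$ we may immediately reduce to the claim that $s,h \models F$ and $s,h' \models F$ jointly imply $\dom{h} = \dom{h'}$.

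For the base cases, I would simply read off the domains from the clauses in Figure~\ref{fig:satisfaction}. When $F = \emp$, both $h$ and $h'$ are the empty heap $e$, so trivially $\dom{h} = \emptyset = \dom{h'}$. When $F = \psto{t_1}{t_2}$, the semantic clause forces $\dom{h} = \{s(t_1)\} = \dom{h'}$. When $F = \absarray{t_1}{t_2}$, the clause forces $\dom{h} = \{s(t_1),\ldots,s(t_2)\} = \dom{h'}$. The essential point in each base case is that the domain is determined entirely by the stack $s$, independently of how the heap values are resolved.

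For the inductive step $F = F_1 * F_2$, suppose $s,h \models F_1 * F_2$ and $s,h' \models F_1 * F_2$. Unfolding the $*$-clause gives decompositions $h = h_1 \circ h_2$ and $h' = h'_1 \circ h'_2$ with $s,h_i \models F_i$ and $s,h'_i \models F_i$ for $i = 1,2$. Applying the induction hypothesis to each conjunct yields $\dom{h_1} = \dom{h'_1}$ and $\dom{h_2} = \dom{h'_2}$. Since heap composition is defined as disjoint union of domains, we conclude
\[
\dom{h} = \dom{h_1} \cup \dom{h_2} = \dom{h'_1} \cup \dom{h'_2} = \dom{h'}.
\]

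There isn't really a hard step here; the lemma is essentially an observation that every spatial atom of $\ASL$ is \emph{footprint-determined} by the stack. The only subtlety worth flagging is why the proof genuinely needs the quantifier-free hypothesis: with an existential $\exists \vec{z}.\, \Pi : F$ the witnesses $\vec{m}$ for $s,h$ and $s,h'$ could differ, which would in turn allow the domains of $h$ and $h'$ (as read off from, say, an $\absarray{z}{z+k}$) to differ as well. Restricting to quantifier-free $A$ blocks exactly this source of non-determinism, and the induction then goes through cleanly.
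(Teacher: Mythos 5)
Your proof is correct and follows essentially the same route as the paper's: structural induction on the spatial part $F$, reading the domain off the semantic clauses for $\emp$, $\psto{t_1}{t_2}$ and $\absarray{t_1}{t_2}$, and using disjointness of $\circ$ in the $*$ case. The extra remark on why quantifier-freeness is needed is a nice addition but not required.
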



\begin{remark}
\label{rem:addressing}
Our \textsf{array} predicate employs \emph{absolute} addressing:
$\absarray{k}{\ell}$ denotes an array from $k$ to $\ell$. In practice,
one often reasons about arrays using \emph{base-offset} addressing,
where $\relarray{b}{i}{j}$ denotes an array from $b+i$ to $b+j$. We can
define such a ternary version of our \textsf{array} predicate,
overloading notation, by:
\[
\relarray{b}{i}{j} \;\defeq\; \absarray{b+i}{b+j}
\]%
Conversely, any $\absarray{k}{\ell}$ can be represented in base-offset style as $\relarray{0}{k}{\ell}$. The moral is that we may freely switch between absolute and base-offset addressing as desired.
\end{remark}

In order to obtain sharper complexity results, we will sometimes
 confine our attention to symbolic heaps in the following special\/
\emph{two-variable form}.

\begin{definition}   
\label{defn:twovar}
 A symbolic heap \mbox{$\exists\vec{z}.\ \Pi\colon F$} is said
 to be in {\em two-variable form\/} if
\begin{itemize}
\item[(a)] its pure part $\Pi$ is a conjunction of
 {\em `difference constraints'} of the form
 \mbox{$x=k$}, \mbox{$x=y+k$}, \mbox{$x\leq y+k$},
  \mbox{$x\geq y+k$}, \mbox{$x < y+k$}, and \mbox{$x > y+k$},
 where $x$ and $y$ are variables, and \mbox{$k\in\nat$};
 (notice that \mbox{$x\neq y$} is not here);

\item[(b)] its spatial part $F$ contains only formulas  
 of the form\ \mbox{$k\mapsto v$},
 \mbox{$\relarray{a}{0}{j}$},\ \mbox{$\relarray{a}{1}{j}$},
 and\/ \mbox{$\relarray{k}{j}{j}$},
 where $v$, $a$, and\/ $j$ are variables, and \mbox{$k\in\nat$}.
\end{itemize}
\end{definition}

\begin{remark}\label{r-diff}
 The unrestricted pure part of our language is already
 \mbox{$\NP$-}hard. However, when we restrict pure formulas to
 conjunctions of `difference constraints'
 \footnote
  {The first order theory of such
   constraints is sometimes called ``difference logic'' or, amusingly
   enough, ``separation logic''~\cite{Talupur-etal:04}!}
 as in Definition~\ref{defn:twovar}, their satisfiability can be decided in
 polynomial time~\cite{Cormen-etal:09}. Therefore, restricting our
 symbolic heaps to two-variable form readdresses the challenge of
 establishing relevant lower bounds to the spatial part of the
 language.
\end{remark}

\section{Satisfiability in $\ASL$}
\label{sec:satisfiability}

Here, we show that \emph{satisfiability} in $\ASL$ is $\NP$-complete.
This stands in contrast to the situation for symbolic-heaps over list segments,
where satisfiability is polynomial~\cite{Cook-etal:11},
and over general inductive predicates, where it is $\EXPTIME$-complete \cite{Brotherston-etal:14}. The problem is stated formally as follows:

\begin{problem}{Satisfiability problem for $\ASL$}
Given symbolic heap $A$, decide whether there is a stack $s$ and heap
$h$ with \mbox{$s,h \models A$}.  (W.l.o.g., we may consider $A$ to be
quantifier-free.)
\end{problem}

First, we show that satisfiability of a symbolic heap can be encoded as
a $\Sigma^0_1$ formula of \emph{Presburger arithmetic}.

\begin{definition}
\label{defn:presburger}
\emph{Presburger arithmetic} ($\pres$) is defined as the first-order theory of the natural numbers $\nat$ over the signature $\langle 0, s, + \rangle$, where $s$ is the successor function, and $0$ and $+$ have their usual interpretations. It is immediate that the relations $\neq$, $\leq$ and $<$ can be encoded (possibly introducing an existential quantifier), as can the operation of multiplication by a constant.
\end{definition}

Note that a stack is just a standard first-order valuation, and that any pure formula in $\ASL$ is also a formula of $\pres$.  Moreover, the satisfaction relations for $\ASL$ and $\pres$ coincide on such formulas.
Thus, we overload $\models$ to include the standard first-order satisfaction relation of $\pres$.

The intuition behind our encoding of $\ASL$ satisfiability in $\pres$ is simple: a symbolic heap is satisfiable
exactly when the pure part is satisfiable, each array is well-defined, and all pointers and arrays are non-overlapping with all of the others. For simplicity of exposition, we do this by abstracting away pointers with single-cell arrays.

\begin{definition}
\label{defn:abstraction}
Let $A$ be a quantifier-free symbolic heap, written (without loss of generality) in the form:
\[
\Pi : \textstyle\bigsepstar^n_{i=1}\absarray{a_i}{b_i} * \bigsepstar^m_{i=1}\psto{c_i}{d_i}\ . \\
\]%
We define its \emph{array abstraction} $\abstr{A}$ as
\[
\Pi : \textstyle\bigsepstar^n_{i=1}\absarray{a_i}{b_i} * \bigsepstar^m_{i=1}\absarray{c_i}{c_i}\ .
\]
\end{definition}

\begin{lemma}
\label{lem:abstraction}
Let $A$ be a quantifier-free symbolic heap and $s$ a stack.
Then\; $\exists h.\ s,h\models A \;\iff\; \exists h'.\ s,h'\models \abstr{A}$.

\end{lemma}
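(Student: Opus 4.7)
The plan is to prove the two directions separately, and in both cases to exploit the observation that the \emph{footprint} of a spatial formula is identical for $A$ and $\abstr{A}$: whether a sub-heap satisfies $\psto{c}{d}$ or $\absarray{c}{c}$, its domain is forced to be $\{s(c)\}$. The only difference between the two formulas is the value stored at that location, which for the array version is entirely unconstrained.

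For the forward direction, I would simply take $h' = h$. Given $s, h \models A$, decompose $h$ according to the separating conjunction structure of $A$ into disjoint sub-heaps for each spatial atom. Each $\psto{c_i}{d_i}$ is satisfied by some $h_i$ with $\dom{h_i} = \{s(c_i)\}$, and this same $h_i$ satisfies $\absarray{c_i}{c_i}$ since $s(c_i) \leq s(c_i)$ and the required domain $\{s(c_i), \ldots, s(c_i)\} = \{s(c_i)\}$ matches. The array atoms and the pure part $\Pi$ are untouched by $\abstr{\cdot}$, so $s, h \models \abstr{A}$.

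For the backward direction, given $s, h' \models \abstr{A}$, I would construct $h$ by surgically overwriting values at the abstracted pointer locations. Decompose $h'$ into sub-heaps $h'_1, \ldots, h'_n$ for the genuine arrays and $g'_1, \ldots, g'_m$ for the single-cell arrays arising from abstracted pointers, with $\dom{g'_i} = \{s(c_i)\}$. Define new single-cell heaps $g_i$ by $g_i(s(c_i)) = s(d_i)$, so that $g_i \models \psto{c_i}{d_i}$. Since $\dom{g_i} = \dom{g'_i}$, all the pairwise disjointness conditions that held for $h'$ continue to hold, so the composition $h = h'_1 \circ \cdots \circ h'_n \circ g_1 \circ \cdots \circ g_m$ is defined and witnesses $s, h \models A$.

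The main obstacle is bookkeeping rather than conceptual: one must carefully invoke the decomposition induced by $*$, verify that replacing each $g'_i$ with $g_i$ preserves the composition (which follows from $\dom{g_i} = \dom{g'_i}$), and handle the mild degeneracy that $\abstr{A}$ uses the same variables $c_i$ for array endpoints, so the $s(c_i) \leq s(c_i)$ side-condition in the semantics of $\absarray{\cdot}{\cdot}$ is trivially satisfied and contributes no new constraint on $s$. Lemma~\ref{lem:semantics} is not strictly needed, but it offers a useful sanity check: it guarantees that $\dom{h}$ is uniquely determined by $s$ and the spatial structure, and this common domain is exactly what makes the two directions dual to each other.
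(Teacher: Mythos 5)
Your proposal is correct and follows essentially the same route as the paper: the forward direction uses $\psto{c_i}{d_i}\models\absarray{c_i}{c_i}$ to reuse the heap unchanged, and the backward direction overwrites the contents of each abstracted single-cell array with the required value $s(d_i)$, noting that the domains coincide so the $\circ$-composition remains defined. Nothing is missing.
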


\begin{definition}
\label{defn:gamma}
Let $A$ be a quantifier-free symbolic heap, and let $\abstr{A}$
be of the form $\Pi : \bigsepstar^n_{i=1}\absarray{a_i}{b_i}$.
We define a corresponding formula $\gamma(A)$ of $\pres$ as
\[
\gamma(A) \defeq \Pi \wedge \!\!\bigwedge_{1 \leq i \leq n}a_i \leq b_i \wedge
\!\!\bigwedge_{1 \leq i < j \leq n} \!\!\!(b_i < a_j) \vee (b_j < a_i)\ .
\]%
\end{definition}
\noindent Note that $\gamma(A)$ is defined in terms of the abstraction $\abstr{A}$.

\begin{lemma}
\label{lem:gamma}
For any stack $s$ and any quantifier-free 
symbolic heap $A$, we have $s \modelsPA \gamma(A)\;\iff\;  \exists h.\ s,h \models A$.
\end{lemma}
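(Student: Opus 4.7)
The plan is to use Lemma~\ref{lem:abstraction} to immediately reduce the claim to the corresponding statement for the abstraction, namely
\[
s \modelsPA \gamma(A) \;\iff\; \exists h'.\ s, h' \models \abstr{A},
\]
where $\abstr{A} = \Pi : \bigsepstar^n_{i=1}\absarray{a_i}{b_i}$. Once we have this, both directions are essentially a matter of unfolding the semantics of $*$ and of the $\absarray{}{}$ predicate, together with the observation that pure formulas are the same in $\ASL$ and in $\pres$ and do not depend on the heap. Note that $\gamma(A)$ is defined using the same $\Pi$ and the same $a_i,b_i$, so these constants line up directly with the spatial part of $\abstr{A}$.

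For the $(\Leftarrow)$ direction, assume $s,h' \models \abstr{A}$. Applying the clause for $*$ in Figure~\ref{fig:satisfaction} repeatedly, decompose $h' = h_1 \circ \cdots \circ h_n$ with $s,h_i \models \absarray{a_i}{b_i}$ for each $i$. By the semantics of the array predicate, $s(a_i)\leq s(b_i)$ and $\dom{h_i} = \{s(a_i),\ldots,s(b_i)\}$. Well-definedness of the composition forces pairwise disjointness of the $\dom{h_i}$, which is exactly the condition that the closed integer intervals $[s(a_i),s(b_i)]$ and $[s(a_j),s(b_j)]$ are disjoint; since both are nonempty this is equivalent to $s(b_i) < s(a_j) \lor s(b_j) < s(a_i)$. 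Finally $s \models \Pi$ because pure satisfaction is heap-independent. So $s \modelsPA \gamma(A)$.

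For the $(\Rightarrow)$ direction, assume $s \modelsPA \gamma(A)$. Then $s \models \Pi$, each $s(a_i) \leq s(b_i)$, and for $i < j$ we have $s(b_i) < s(a_j) \lor s(b_j) < s(a_i)$. Define $h_i$ to be any heap with $\dom{h_i} = \{s(a_i),\ldots,s(b_i)\}$, e.g.\ sending each location in that interval to $0$; this is a finite partial function since the interval is a finite set, and $s,h_i \models \absarray{a_i}{b_i}$. The pairwise disjointness condition on the intervals ensures that $h := h_1 \circ \cdots \circ h_n$ is well-defined, and then $s,h \models \bigsepstar^n_{i=1}\absarray{a_i}{b_i}$ by the clause for $*$. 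Combined with $s \models \Pi$, we obtain $s,h \models \abstr{A}$, as required.

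There is no real obstacle here: the lemma is designed to be a clean bridge from the spatial semantics to a $\Sigma^0_1$ Presburger encoding, and the array abstraction step (Lemma~\ref{lem:abstraction}) has already absorbed the only mild subtlety, namely the mismatch between singleton-cell arrays and the pointer predicate. The only minor care needed is to produce an actual \emph{finite} heap in the forward direction, which is automatic because each interval $\{s(a_i),\ldots,s(b_i)\}$ is finite.
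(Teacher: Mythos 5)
Your proof is correct and follows essentially the same route as the paper's: reduce to the $\mapsto$-free abstraction via Lemma~\ref{lem:abstraction}, then unfold the semantics of $*$ and $\mathsf{array}$ in both directions, constructing the heap from the intervals $\{s(a_i),\ldots,s(b_i)\}$ in the forward direction. The only cosmetic difference is that the paper spells out the interval-disjointness fact (that $(b_i < a_j) \vee (b_j < a_i)$ implies the domains are disjoint) as an explicit four-case contradiction argument, which you compress into the standard observation that two nonempty closed intervals are disjoint iff one ends strictly before the other begins; this is a harmless shortcut.
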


\begin{proposition}
\label{prop:sat_NP}
Satisfiability for $\ASL$ is in $\NP$.
\end{proposition}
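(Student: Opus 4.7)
My plan is to combine Lemma~\ref{lem:gamma} with the classical fact that satisfiability of quantifier-free (equivalently, existential) Presburger arithmetic lies in $\NP$. By the problem statement we may restrict to quantifier-free $A$, in which case Lemma~\ref{lem:gamma} reduces the question ``is $A$ satisfiable in $\ASL$?'' to ``is $\gamma(A)$ satisfiable in $\pres$?''. It therefore suffices to establish two things: (i) $\gamma(A)$ can be produced from $A$ in polynomial time and has size polynomial in $|A|$; and (ii) satisfiability of such Presburger formulas is in $\NP$.

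For (i), I would write $A$ in the normal form of Definition~\ref{defn:abstraction} with $n$ array atoms and $m$ pointer atoms, so that $\abstr{A}$ has $n+m$ arrays. Then $\gamma(A)$ is the conjunction of $\Pi$, the $n+m$ bound atoms $a_i \leq b_i$, and the $\binom{n+m}{2}$ disjunctive non-overlap atoms $(b_i < a_j) \vee (b_j < a_i)$. Each piece is linear in $|A|$ and the whole construction is obviously polynomial-time. This step is essentially bookkeeping.

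For (ii), I would appeal to the standard small-witness property for linear arithmetic over $\nat$: any satisfiable Boolean combination of linear (in)equalities with integer coefficients admits a solution whose components have bit-size polynomial in the formula. The $\NP$ algorithm then proceeds by (a) guessing values of polynomial bit-length for the free variables of $A$, thereby fixing a candidate stack $s$, and (b) deterministically evaluating $\gamma(A)$ under $s$, which is polynomial since $\gamma(A)$ is a polynomial-size Boolean combination of atomic arithmetic comparisons. A \textbf{yes} answer is returned exactly when the guessed $s$ satisfies $\gamma(A)$, and correctness follows from Lemma~\ref{lem:gamma} together with the small-witness bound.

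I do not anticipate any substantial obstacle: the argument is essentially a direct composition of Lemma~\ref{lem:gamma} with a well-known complexity result. The only point worth a line of care is that terms in $\ASL$ include multiplication by natural number constants $nt$; under the standard convention that such constants are encoded in binary, the small-witness bound still applies, so the guess in step (a) remains of polynomial size. I would mention this briefly and cite the classical small-model result for linear arithmetic rather than reprove it.
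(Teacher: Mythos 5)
Your proposal is correct and follows essentially the same route as the paper: reduce via Lemma~\ref{lem:gamma} to satisfiability of the quadratic-size existential Presburger sentence $\exists\vec{x}.\,\gamma(A)$, and then invoke the $\NP$ upper bound for $\Sigma^0_1$ Presburger arithmetic (the paper cites Scarpellini for exactly the small-witness/guess-and-check argument you sketch). Your extra remark about binary-encoded coefficients in terms $nt$ is a sensible point of care but does not change the argument.
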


\begin{proof}
Letting $\vec{x}$ be a tuple of all free variables
of a symbolic heap $A$, the $\Sigma^0_1$ Presburger arithmetic sentence
$\exists\vec{x}.\gamma(A)$, where $\gamma$ is given by
Definition~\ref{defn:gamma}, is of size quadratic
in the size of $A$. By Lemma~\ref{lem:gamma}, 
$A$ is satisfiable iff $\exists \vec{x}.\ \gamma(A)$ is
satisfiable. Since the satisfiability problem for $\Sigma^0_1$
Presburger arithmetic is in $\NP$~\cite{Scarpellini:84}, so is 
satisfiability for $\ASL$. 
\end{proof}

\begin{remark}
\label{r-sat-2-n}
Symbolic-heap separation logic on list segments~\cite{Berdine-Calcagno-OHearn:04} enjoys the {\em small model property}: any satisfiable formula~$A$ has a model of size polynomial in the size of\/~$A$~\cite{Antonopoulos-etal:14}.  Unfortunately, this property fails for~$\ASL$. 
E.g., let $A_n$ be a symbolic heap of the form
$$ 
  (d_0=1)\wedge \textstyle\bigwedge_{i=0}^{n-1} (d_{i} < d_{i+1})\, \colon\,
 \bigsepstar_{i=0}^{n} \relarray{d_i}{0}{d_i}\ .
$$%
Then we have that \mbox{$\bigwedge_{i=0}^{n-1} (s(d_{i+1}) > 2s(d_{i}))$} for any model $(s,h)$ of $A_n$, which implies that \mbox{$s(d_{n}) > 2^n$}, and so $h$ occupies a contiguous memory block of at least \/ $2^n$ cells.
\end{remark}

We establish that satisfiability is in fact {\em $\NP$-hard\/} by reduction from the {\em \mbox{$3$}-partition problem}~\cite{Garey-Johnson:79}.

\begin{problem}{3-partition problem~\cite{Garey-Johnson:79}}
Given a bound \mbox{$B\in\nat$} and a sequence of natural numbers
\mbox{${\cal S} = (k_1,k_2,\dots,k_{3m})$} such that
\mbox{$\sum_{j=1}^{3m}k_j= mB$},
 and, in addition,
 \mbox{$B/4<k_j< B/2$}\/\ \ for all \mbox{$1 \leq j \leq 3m$},
 decide whether there is a 
a partition of the elements of $\cal{S}$ into $m$ groups
 of three, say
\[\{(k_{j_{i,1}}, k_{j_{i,2}}, k_{j_{i,3}}) \mid 1 \leq i \leq m\}, \]%
 such that\/\ \mbox{$k_{j_{i,1}}+k_{j_{i,2}}+k_{j_{i,3}} = B$}\ \
 for all\/  \mbox{$1 \leq i \leq m$}.
\end{problem}

\begin{definition}
\label{defn:3part_to_sat}
Given an instance $(B,\cal{S})$ of the 3-partition problem,
we define a corresponding symbolic heap $A_{B,{\cal S}}$.

For convenience, we use the ternary ``base-offset'' version of our arrays to define $A_{B,\cal{S}}$, as given by Remark~\ref{rem:addressing}. First we introduce \mbox{$(m+1)$} variables $d_i$ and \mbox{$3m$} variables $a_j$. The idea is that the $d_i$ act as single-cell delimiters between chunks of memory of length~$B$, while the $a_j$ serve to allocate arrays of length $k_j$ in the space between some pair of delimiters $d_i$ and $d_{i+1}$.  The arrangement is as follows:
\newcount \DBL
 \DBL=13\multiply\DBL by 18%
$$\begin{picture}(\DBL,25)%
  \ \raisebox{6pt}{\ldots}%
  $\stackrel{d_i}{\framebox(14,14)[c]{$\bullet$}}$
$\overbrace{%
\underbrace{%
  \framebox(14,14)[c]{$\cdot$}%
  \framebox(14,14)[c]{$\cdot$}%
  \framebox(14,14)[c]{$\cdot$}%
  \framebox(14,14)[c]{$\cdot$}%
  \framebox(14,14)[c]{$\cdot$}%
}_{k_{j_{i,1}}}
\underbrace{%
  \framebox(14,14)[c]{$\cdot$}%
  \framebox(14,14)[c]{$\cdot$}%
  \framebox(14,14)[c]{$\cdot$}%
  \framebox(14,14)[c]{$\cdot$}%
}_{k_{j_{i,2}}}
\underbrace{%
  \framebox(14,14)[c]{$\cdot$}%
  \framebox(14,14)[c]{$\cdot$}%
  \framebox(14,14)[c]{$\cdot$}%
}_{k_{j_{i,3}}}
}^{B}$
  $\stackrel{d_{i+1}}{\framebox(14,14)[c]{$\bullet$}}$
  \ \raisebox{6pt}{\ldots}%
\end{picture}$$%
\vspace*{0.5ex}

\noindent
Concretely, $A_{B,\cal{S}}$ is the following symbolic heap:
\[\begin{array}{l}
\textstyle\bigwedge_{i=1}^{m} (d_{i+1}= d_{i} + B + 1)\; \wedge \\[1ex]
\bigwedge_{j=1}^{3m} (d_{1}\leq a_j)\wedge(a_j+k_j< d_{m+1}) : \\[1ex]
\bigsepstar_{i=1}^{m+1} \relarray{d_i}{0}{0} * \bigsepstar_{j=1}^{3m} \relarray{a_j}{1}{k_j}\ .
\end{array}\]%
where the indexed ``big star'' notation abbreviates a sequence of \mbox{$*$-conjoined} formulas.
We observe that $A_{B,\cal{S}}$ is quantifier-free and in two-variable
form (cf. Defn.~\ref{defn:twovar}).
\end{definition}

\begin{lemma}
\label{lem:3part_to_sat}
Given a 3-partition problem instance $(B,\cal{S})$, and letting $A_{B,\cal{S}}$ be the symbolic heap given by Defn.~\ref{defn:3part_to_sat}, 
\[
A_{B,\cal{S}} \mbox{ is satisfiable} \iff
 \exists\, \mbox{complete 3-partition of } \cal{S} \mbox{ (w.r.t. $B$).}
\]
\end{lemma}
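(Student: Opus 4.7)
The plan is to prove both directions of the equivalence by exploiting the geometric structure of $A_{B,\mathcal{S}}$: the $m+1$ delimiter cells partition the closed interval $[d_1,d_{m+1}]$ into $m$ ``gaps'' of exactly $B$ cells each, while the $3m$ arrays $\relarray{a_j}{1}{k_j}$ (whose heap footprints have lengths $k_j$) must fit disjointly into these gaps.

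For the ($\Leftarrow$) direction, assume we have a partition of $\mathcal{S}$ into $m$ triples $(k_{j_{i,1}}, k_{j_{i,2}}, k_{j_{i,3}})$ with $k_{j_{i,1}}+k_{j_{i,2}}+k_{j_{i,3}}=B$. I would construct an explicit stack $s$ by setting $s(d_i) = (i-1)(B+1)$ for $1 \leq i \leq m+1$, and for each triple packing the three arrays contiguously just to the right of $d_i$: $s(a_{j_{i,1}}) = s(d_i)$, $s(a_{j_{i,2}}) = s(d_i) + k_{j_{i,1}}$, and $s(a_{j_{i,3}}) = s(d_i) + k_{j_{i,1}} + k_{j_{i,2}}$. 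A direct computation verifies all the arithmetic constraints and confirms that the heap footprints $\{s(d_i)\}$ and $\{s(a_j)+1,\ldots,s(a_j)+k_j\}$ are pairwise disjoint and together form a valid heap, so $A_{B,\mathcal{S}}$ is satisfied by this $s$ together with the corresponding heap.

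For the ($\Rightarrow$) direction, which is the more delicate step, suppose $s,h \models A_{B,\mathcal{S}}$. The pure constraints force the delimiter cells to be located at $s(d_1), s(d_1)+(B+1), \ldots, s(d_1)+m(B+1)$, and the spatial constraint forces each $a_j$-array to lie strictly between $s(d_1)$ and $s(d_{m+1})$ and to be disjoint from every delimiter and from every other array. Thus each array's footprint $\{s(a_j)+1,\ldots,s(a_j)+k_j\}$ falls inside one of the $m$ ``gaps'' $G_i = \{s(d_i)+1,\ldots,s(d_{i+1})-1\}$, each of cardinality exactly $B$. Grouping the indices $j$ by the gap $G_i$ into which their array falls yields a partition of $\{1,\ldots,3m\}$ into groups $I_1,\ldots,I_m$ with $\sum_{j\in I_i} k_j \leq B$ for each $i$. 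Since $\sum_{j=1}^{3m} k_j = mB$, each inequality must in fact be an equality.

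The main obstacle is verifying that every group $I_i$ has size exactly three, which is where the hypothesis $B/4 < k_j < B/2$ is crucial. If $|I_i| \leq 2$ then $\sum_{j \in I_i} k_j < 2 \cdot (B/2) = B$, contradicting $\sum_{j \in I_i} k_j = B$; if $|I_i| \geq 4$ then $\sum_{j \in I_i} k_j > 4 \cdot (B/4) = B$, again a contradiction. Hence each $|I_i| = 3$, and the groups $I_i$ give the required complete 3-partition of $\mathcal{S}$. Combined with Proposition~\ref{prop:sat_NP} and the fact that $A_{B,\mathcal{S}}$ is polynomial in the size of the input $(B,\mathcal{S})$ and in two-variable form, this reduction establishes $\NP$-hardness of satisfiability (even on the restricted two-variable fragment).
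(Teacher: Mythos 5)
Your proof is correct and follows essentially the same route as the paper's: the same explicit packing of each triple contiguously after its delimiter for the ($\Leftarrow$) direction, and for ($\Rightarrow$) the same observation that contiguity and disjointness force each array into a single $B$-sized gap, with $\sum_j k_j = mB$ forcing exact coverage and the bounds $B/4 < k_j < B/2$ forcing exactly three arrays per gap. The closing remark about $\NP$-hardness belongs to Theorem~\ref{thm:sat-NP-hard} rather than this lemma, but is harmless.
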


\begin{theorem}
\label{thm:sat-NP-hard}
The satisfiability problem for $\ASL$ is $\NP$-complete, even for quantifier-free and \emph{$\mapsto$-free} symbolic heaps
 in two-variable form.
\end{theorem}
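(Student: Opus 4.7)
The plan is to combine the $\NP$ upper bound from Proposition~\ref{prop:sat_NP} with the $\NP$-hardness reduction from 3-partition that has already been set up in Definition~\ref{defn:3part_to_sat} and Lemma~\ref{lem:3part_to_sat}. Since Proposition~\ref{prop:sat_NP} places satisfiability in $\NP$, the only thing left is to prove $\NP$-hardness, and to check that the witnessing formulas respect the three syntactic restrictions claimed in the theorem.

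For hardness I would use the map $(B,\mathcal{S}) \mapsto A_{B,\mathcal{S}}$ from Definition~\ref{defn:3part_to_sat}. Lemma~\ref{lem:3part_to_sat} already tells us that $A_{B,\mathcal{S}}$ is satisfiable iff $(B,\mathcal{S})$ admits a complete $3$-partition, so it only remains to check that this reduction is polynomial-time computable. The formula $A_{B,\mathcal{S}}$ uses $m+1$ variables $d_i$ together with $3m$ variables $a_j$, has $O(m)$ pure conjuncts, and has $4m+1$ spatial atoms, so its structural size is linear in $m$. The only numeric constants appearing in it are $B$, the $k_j$, and the small constants $0$ and $1$. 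The classical result of Garey and Johnson~\cite{Garey-Johnson:79} asserts that the $3$-partition problem is \emph{strongly} $\NP$-complete, i.e.\ it remains $\NP$-hard when all inputs are presented in unary; hence we may take the input size to be $\Theta(mB)$, which dominates the sizes of the explicit numeric constants in $A_{B,\mathcal{S}}$, and the reduction is polynomial under this encoding.

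Finally I would verify the syntactic side-conditions. By inspection of Definition~\ref{defn:3part_to_sat}, $A_{B,\mathcal{S}}$ is quantifier-free (no $\exists$ appears), uses no $\mapsto$ atom (all spatial atoms are of the form $\relarray{d_i}{0}{0}$ or $\relarray{a_j}{1}{k_j}$), and is already noted to be in two-variable form: the pure conjuncts are difference constraints $d_{i+1}=d_i+B+1$, $d_1 \leq a_j$, $a_j + k_j < d_{m+1}$, and the spatial atoms have exactly the shapes permitted by Definition~\ref{defn:twovar}(b). Combined with membership in $\NP$ and Lemma~\ref{lem:3part_to_sat}, this gives $\NP$-completeness under all three restrictions simultaneously.

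The main potential obstacle is the one flagged above, namely that a naive binary encoding of the constants $B$ and $k_j$ would only yield a pseudo-polynomial reduction; the strong $\NP$-completeness of $3$-partition is precisely what we need to dodge this issue, and is the reason for choosing $3$-partition rather than, say, \textsc{Subset Sum}. Every other step is routine bookkeeping given the lemmas already in place.
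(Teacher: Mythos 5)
Your proof is correct and follows essentially the same route as the paper's: the upper bound from Proposition~\ref{prop:sat_NP} combined with the reduction from $3$-partition given by Definition~\ref{defn:3part_to_sat} and Lemma~\ref{lem:3part_to_sat}. Your explicit appeal to the \emph{strong} $\NP$-completeness of $3$-partition to justify that the numeric constants do not blow up the reduction is a detail the paper leaves implicit, and it is exactly the right justification.
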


\begin{proof}
Proposition~\ref{prop:sat_NP} provides the upper bound.  For the lower bound, Defn.~\ref{defn:3part_to_sat} and Lemma~\ref{lem:3part_to_sat} establish a polynomial reduction from the \mbox{$3$-}partition problem.
\end{proof}

%


\section{Biabduction}
\label{sec:biabduction}

In this section, we turn to the central focus of this paper,
\emph{biabduction} for $\ASL$. 
In stating this problem, it is convenient to first lift the connective $*$ to symbolic heaps, as follows:
\[
(\exists \vec{x}.\ \Pi : F) * (\exists \vec{y}.\ \Pi' : F') = \exists \vec{x} \cup \vec{y}.\ \Pi \land \Pi' : F * F'\ ,
\]%
where we assume that the existentially quantified variables $\vec{x}$ and $\vec{y}$
are disjoint, and that no free variable capture occurs (this can always be avoided by $\alpha$-renaming).

\begin{problem}{Biabduction problem for $\ASL$}
 Given satisfiable symbolic heaps $A$ and\/~$B$,
 find symbolic heaps $X$ and\/~$Y$ such that
 \mbox{$A*X$} is satisfiable and \mbox{$A*X\models B*Y$}.
\end{problem}

We first consider quantifier-free biabduction,
i.e., where all of $A,B,X,Y$ are quantifier-free (\MakeUppercase sec.\nobreakspace \ref {sec:qfree-biabduction}).
The complexity of quantifier-free biabduction is investigated in \MakeUppercase sec.\nobreakspace \ref {sec:qfree-biabduction-bounds}.
We then show that when quantifiers appear in $B,Y$ which are appropriately restricted,
existence of solutions can be decided using the machinery we provide for the
quantifier-free case (\MakeUppercase sec.\nobreakspace \ref {sec:quantified-biabduction}).
In the same section we also characterise the complexity of biabduction in the
presence of quantifiers.

\subsection{An algorithm for quantifier-free biabduction}
\label{sec:qfree-biabduction}
We now present an algorithm for quantifier-free
biabduction.
Let $(A,B)$ be a biabduction problem and $(X,Y)$ a solution.
The intuition is that a model $(s,h)$ of both $A$ \emph{and} $B$ induces a total order
over the terms of $A,B$, dictating the form of the solution ($X, Y$).

\begin{figure*}
\begin{center}
\begin{tikzpicture}[y=-1cm]
\node (A) at (-1,0.25) {$A*X$};
\node (B) at (-1,1.25) {$B*Y$};

\tikzgap{0}{0}{1.8}{.5}{$c_1$}{$a_1 - 1$}
\tikzarray{1.8}{0}{5}{.5}{$a_1$}{$b_1$}
\tikzgap{5}{0}{7}{.5}{$b_1+1$}{$d_2$}

\tikzgap{8}{0}{10}{.5}{$c_3$}{$a_2 - 1$}
\tikzarray{10}{0}{12}{.5}{$a_2$}{$b_2$}
\tikzgap{12}{0}{14}{.5}{$b_2+1$}{$d_3$}

\tikzarray{0}{1}{2.1}{1.5}{$c_1$}{$d_1$}
\tikzgap{2.1}{1}{4.5}{1.5}{$d_1+1$}{$c_2 - 1$}
\tikzarray{4.5}{1}{7}{1.5}{$c_2$}{$d_2$}

\tikzarray{8}{1}{14}{1.5}{$c_3$}{$d_3$}

\end{tikzpicture}
\caption{Example showing solutions in Defn.~\ref{alg:biabduction}.
Arrays of $A,B$ are displayed as boxes and arrays in $X,Y$ as hatched rectangles.}
\label{fig:biabduction-example}
\end{center}
\end{figure*}
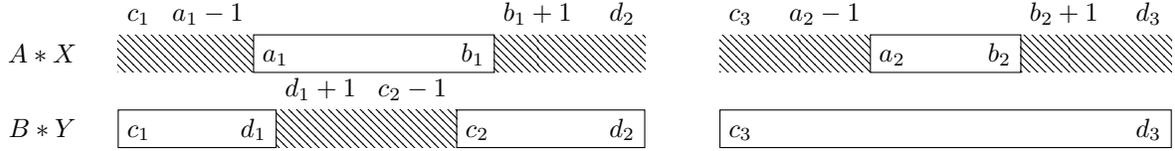

Consider \MakeUppercase fig.\nobreakspace \ref {fig:biabduction-example}, which depicts a biabduction instance $(A,B)$
and a solution $(X,Y)$ (in hatched pattern), where all array endpoints in $A,B$ are
totally ordered (on the horizontal axis). Using this order, we can compute
$X,Y$, by covering parts that $B$ requires but $A$ does not provide ($X$) and
by covering parts that $A$ requires but $B$ does not provide ($Y$).

We capture this intuition by (a) defining a $\pres$ formula $\beta(A,B$)
which is shown to be satisfiable whenever there is a solution for the biabduction
problem $(A,B)$ (\MakeUppercase defn.\nobreakspace \ref {defn:beta}, \MakeUppercase prop.\nobreakspace \ref {prop:solution-to-beta-sat}); (b) showing that if
$\beta(A,B)$ is satisfiable then there exists a formula $\Delta$
capturing the total order over the terms of $A,B$, which we call a \emph{solution seed} (\MakeUppercase defn.\nobreakspace \ref {defn:seed}, \MakeUppercase thm.\nobreakspace \ref {thm:beta-sat-implies-seed});
and (c) showing that if there is a solution seed $\Delta$ then we can
generate a solution $X,Y$ for the biabduction problem $(A,B)$
(\MakeUppercase defn.\nobreakspace \ref {alg:biabduction}, \MakeUppercase thm.\nobreakspace \ref {thm:seed-implies-solution}). These results and the way they compose are shown
in Figure~\ref{fig:biabd-overview}. 

\begin{figure}
\begin{center}
\small
\begin{tikzpicture}[node distance=3cm]
\node (A) [text width=4cm,align=center] {existence of biabduction solution for $(A,B)$};
\node (B) [text width=3cm,align=center,below left of=A] {satisfiability of $\beta(A,B)$};
\node (C) [text width=4cm,align=center,below right of=A] {existence of solution seed for $(A,B)$};

\draw[thick,->,bend right] (A) edge node [left] {\MakeUppercase prop.\nobreakspace \ref {prop:solution-to-beta-sat}} (B) ;
\draw[thick,->,bend right] (B) edge node [below] {\MakeUppercase thm.\nobreakspace \ref {thm:beta-sat-implies-seed}} (C);
\draw[thick,->,bend right] (C) edge node [right] {\MakeUppercase thm.\nobreakspace \ref {thm:seed-implies-solution}} (A);
\end{tikzpicture}
\end{center}
\caption{Results on quantifier-free biabduction.\label{fig:biabd-overview}}
\end{figure}
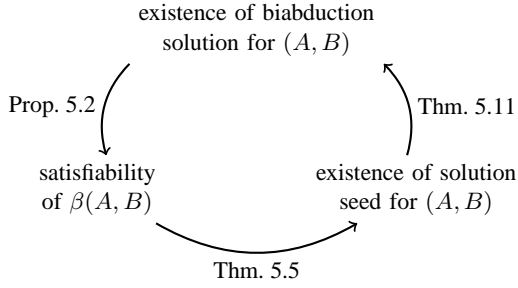
Finally, we show that the problem of finding a solution to a biabduction problem
is in $\NP$ and that our algorithm is complexity-optimal (\MakeUppercase prop.\nobreakspace \ref {prop:runtime_NP}).

\begin{definition}[The formula $\beta$]
\label{defn:beta}
Let $(A,B)$ be an instance of the biabduction problem, where
\[\begin{array}{l}
\textstyle A = \Pi : \bigsepstar^n_{i=1}\absarray{a_i}{b_i} * \bigsepstar^k_{i=1}\psto{t_i}{u_i} \\[15pt]
\textstyle B = \Pi' : \bigsepstar^m_{i=1}\absarray{c_i}{d_i} * \bigsepstar^{\ell}_{i=1}\psto{v_i}{w_i}
\end{array}\]
We define a formula $\beta(A,B)$ of $\pres$ as follows:
\[
\beta(A,B) \defeq
\begin{array}{l}
\gamma(A) \land \gamma(B)\ \land\\
\bigwedge_{j=1}^\ell\bigwedge_{i=1}^n (v_j<a_i \lor v_j>b_i)\ \land\\
\bigwedge_{i=1}^\ell\bigwedge_{j=1}^k (t_i\neq v_j\lor u_i=w_j )
\end{array}
\]
\end{definition}

\begin{proposition}
\label{prop:solution-to-beta-sat}
If the biabduction problem $(A,B)$ has a solution, then $\beta(A,B)$ is satisfiable.
\end{proposition}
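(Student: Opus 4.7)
The plan is to extract, from any solution $(X, Y)$ and any model $(s,h)$ of $A * X$, a stack that witnesses satisfiability of $\beta(A, B)$. Since $\beta(A, B)$ is a pure Presburger formula, only the stack matters; the heap will be used to derive each conjunct by exploiting the two decompositions of $h$ afforded by the assumptions that $s,h \models A * X$ and $s,h \models B * Y$.

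First I would invoke satisfiability of $A * X$ (built into the biabduction problem) to fix a model $(s, h)$, which by the entailment $A * X \models B * Y$ is also a model of $B * Y$. Thus there exist decompositions $h_A \circ h_X = h = h_B \circ h_Y$ with $s, h_A \models A$ and $s, h_B \models B$. Lemma~\ref{lem:gamma} then yields $s \modelsPA \gamma(A)$ and $s \modelsPA \gamma(B)$, discharging the first conjunct of $\beta(A, B)$. The third conjunct, asserting $t_i \neq v_j \lor u_i = w_j$ for every $A$-pointer $t_i \mapsto u_i$ and $B$-pointer $v_j \mapsto w_j$, is immediate: if $s(t_i) = s(v_j)$ then both decompositions force $h(s(t_i)) = s(u_i)$ and $h(s(v_j)) = s(w_j)$, so $s(u_i) = s(w_j)$.

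The main obstacle is the middle conjunct, stating that no $B$-pointer may lie inside an $A$-array. I expect this to require a perturbation argument, since the straight reading of the two decompositions alone does not obviously rule out such overlaps. Suppose for contradiction that some $B$-pointer $v_j \mapsto w_j$ and $A$-array $\absarray{a_i}{b_i}$ satisfy $s(a_i) \leq s(v_j) \leq s(b_i)$. Because $A$'s spatial atoms are $*$-composed in $h_A$ and $A$ is itself satisfiable, the location $s(v_j)$ is covered by the array $\absarray{a_i}{b_i}$ and by no other $A$-atom inside $h_A$. Since the $\textsf{array}$ predicate constrains only the domain and not the contents, the modified heap $h'_A \defeq h_A[s(v_j) \mapsto k]$ still satisfies $A$ for any $k$; choosing $k \neq s(w_j)$, the composition $h' \defeq h'_A \circ h_X$ satisfies $A * X$ and hence $B * Y$. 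But any decomposition $h' = h'_B \circ h'_Y$ with $s, h'_B \models B$ forces $h'(s(v_j)) = s(w_j)$, contradicting the choice of $k$ and thereby establishing the middle conjunct. Combining the three verifications gives $s \modelsPA \beta(A, B)$, completing the plan.
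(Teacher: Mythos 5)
Your proposal is correct and follows essentially the same route as the paper's own proof: fix a common model of $A*X$ and $B*Y$, use Lemma~\ref{lem:gamma} for the $\gamma(A)\wedge\gamma(B)$ conjunct, read the pointer-agreement conjunct directly off the two decompositions, and rule out a $B$-pointer inside an $A$-array by perturbing the array's contents at that cell to break $\psto{v_j}{w_j}$ while preserving $A*X$. No gaps; the perturbation step is exactly the paper's argument.
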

\begin{proof} (Sketch) If $X,Y$ is a solution for the problem $(A,B)$, then
any model $s,h$ of $A*X$ (which exists, by assumption) is also a model of $B*Y$.
We then show that $s\models\beta(A,B)$, using Lemma\nobreakspace \ref {lem:gamma} for the first
conjunct of $\beta$, and the assumption that $A*X\models B*Y$ for the second
and third conjuncts.
\end{proof}

Given a biabduction problem of the form in Defn.~\ref{defn:beta},
we define a set of terms, $\abdterms{A}{B}$, by:
\[
\abdterms{A}{B} \defeq
\begin{array}{@{}l}
\trms{A} \cup \trms{B}\ \cup \\
\{ b_i + 1 \mid i\in[1,n]\} \cup \{d_i + 1\mid i\in[1,m]\}\ \cup\\
\{ t_i + 1 \mid i\in[1,k]\} \cup \{v_i + 1\mid i\in[1,\ell]\}
\end{array}
\]
where $\trms{-}$ denotes the set of all terms in a symbolic heap.

\begin{definition}[Solution seed]
\label{defn:seed}
A \emph{solution seed} for a biabduction problem $(A,B)$ in the form of Defn.~\ref{defn:beta}
is a pure formula \mbox{$\Delta = \bigwedge_{i\in I} \delta_i$}  such that:
\begin{enumerate}
\item\label{item:delta} $\Delta$ is satisfiable, and $\Delta \modelsPA \beta(A,B)$;
\item\label{item:terms} for all $i\in I$, the conjunct $\delta_i$ is of the form $(t < u)$ or $(t = u)$, where $t,u \in \abdterms{A}{B}$;
\item\label{item:order} for all $t,u \in \abdterms{A}{B}$, there exists $i\in I$ such that
$\delta_i$ is $(t < u)$ or $(u < t)$ or $(t = u)$.
\end{enumerate}
\end{definition}

\begin{lemma}
\label{lem:delta-ordering}
Let $\Delta$ be a solution seed for a biabduction problem $(A,B)$. $\Delta$ induces
a total order on $\abdterms{A}{B}$: for any $e,f\in\abdterms{A}{B}$,
$\Delta\modelsPA e < f$ or $\Delta\modelsPA e = f$ or  $\Delta\modelsPA f < e$.
%
\end{lemma}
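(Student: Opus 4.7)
The plan is to unpack Definition~\ref{defn:seed} directly: each of the three required entailments corresponds to a conjunct that is literally present in $\Delta$, so the lemma almost spells itself out. The only substantive check is that the satisfiability of $\Delta$ forces the three options to be mutually exclusive, so that we really obtain a well-defined total order rather than an inconsistent relation.

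Concretely, my first step is to pick arbitrary $e,f \in \abdterms{A}{B}$ and invoke condition~(\ref{item:order}) of Definition~\ref{defn:seed} to obtain an index $i \in I$ such that $\delta_i$ is literally one of $e < f$, $f < e$, or $e = f$. Since $\Delta$ is the finite conjunction $\bigwedge_{i \in I} \delta_i$, we have $\Delta \modelsPA \delta_i$ trivially, giving the required disjunction $\Delta \modelsPA e < f$ or $\Delta \modelsPA f < e$ or $\Delta \modelsPA e = f$. This already establishes the formal statement of the lemma.

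For the stronger ``induces a total order'' reading, I would next verify that these alternatives cannot coexist. If, say, $\Delta \modelsPA e < f$ and $\Delta \modelsPA e = f$ both held, then any stack $s$ satisfying $\Delta$ (which exists by condition~(\ref{item:delta})) would satisfy both $s(e) < s(f)$ and $s(e) = s(f)$ in $\nat$, a contradiction; the other pairwise incompatibilities are handled symmetrically. Reflexivity (via $=$) and transitivity of the induced relation are then inherited from the ambient natural-number order by fixing any witness stack satisfying $\Delta$ and reading off $s(e)$ vs.~$s(f)$ for each pair.

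The main obstacle: there genuinely is none. The lemma is essentially a repackaging of Definition~\ref{defn:seed} into the language of Presburger entailment; the real content is the upstream construction of a seed with properties (\ref{item:delta})--(\ref{item:order}), which is carried out in Theorem~\ref{thm:beta-sat-implies-seed}, and the downstream extraction of $X,Y$ from the resulting total order, which is the subject of Theorem~\ref{thm:seed-implies-solution}. The present lemma is the hinge that allows those two results to fit together.
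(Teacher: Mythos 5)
Your proof is correct and is exactly the argument the paper intends (it omits the proof of this lemma entirely, treating it as immediate from Definition~\ref{defn:seed}): condition~(\ref{item:order}) puts one of the three atoms literally among the conjuncts of $\Delta$, so the corresponding entailment is trivial, and condition~(\ref{item:delta}) (satisfiability) rules out any two of them holding simultaneously. Your remark that transitivity and consistency of the induced order follow from evaluating in the natural numbers is the right justification for the notation $\Dlt$, $\Dleq$, $\Deq$ introduced immediately after the lemma.
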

This lemma justifies abbreviating
$\Delta \modelsPA e < f$ by $e \Dlt f$; $\Delta \modelsPA e \le f$ by $e\Dleq f$; and,
$\Delta \modelsPA e = f$ by $e \Deq f$.

\begin{theorem}
\label{thm:beta-sat-implies-seed}
If $\beta(A,B)$ is satisfiable,
then there exists a solution seed $\Delta$ for the biabduction problem $(A,B)$.
\end{theorem}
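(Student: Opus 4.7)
The approach is to use any satisfying assignment of $\beta(A,B)$ to read off a total order on $\abdterms{A}{B}$, and then take the conjunction of the corresponding ordering literals as the solution seed $\Delta$. Concretely, I would invoke the hypothesis to pick a stack $s$ with $s \modelsPA \beta(A,B)$. For each unordered pair $\{e,f\} \subseteq \abdterms{A}{B}$, I would then add to $\Delta$ exactly one of the literals $e < f$, $f < e$, or $e = f$, choosing the one satisfied by $s$ (which is unique by trichotomy on $\nat$). This immediately gives clauses (2) and (3) of Definition~\ref{defn:seed}, and satisfiability of $\Delta$ is witnessed by $s$.

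The remaining obligation, $\Delta \modelsPA \beta(A,B)$, is the crux of the argument. I would decompose $\beta(A,B)$ into its constituent conjuncts and establish entailment for each. Beginning with $\gamma(A)$ (and symmetrically $\gamma(B)$): every atom of the pure part $\Pi$ has the shape $t \sim t'$ for some ${\sim} \in \{=, \neq, <, \leq\}$ with both $t, t' \in \trms{A} \subseteq \abdterms{A}{B}$, and since $s$ satisfies the atom, the literal placed in $\Delta$ for the pair $\{t,t'\}$ Presburger-entails it (for instance $t < t' \modelsPA t \neq t'$ and $t < t' \modelsPA t \leq t'$). The well-definedness conjuncts $a_i \leq b_i$ and the binary non-overlap disjuncts $b_i < a_j \vee b_j < a_i$ are handled in the same way: $s$ forces at least one disjunct, and that disjunct is pinned down by the corresponding literal of $\Delta$.

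For the pointer-versus-array non-overlap clauses $v_j < a_i \vee v_j > b_i$ of $\beta(A,B)$, an identical disjunctive argument applies. The one case that requires a little more care is the pointer-equality clause $t_i \neq v_j \vee u_i = w_j$: I would split on whether $s(t_i) = s(v_j)$. When they differ, $\Delta$ contains a strict ordering literal between $t_i$ and $v_j$, which entails $t_i \neq v_j$. When they coincide, the hypothesis $s \modelsPA \beta(A,B)$ forces $s(u_i) = s(w_j)$, and since both $u_i$ and $w_j$ lie in $\trms{A} \cup \trms{B}$, the equality literal $u_i = w_j$ must have been the one selected for that pair and so appears in $\Delta$.

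The main obstacle is really just the thoroughness of this case analysis on the shapes of atoms, coupled with the key structural observation that every term appearing anywhere in $\beta(A,B)$ already lies in $\abdterms{A}{B}$; this is what guarantees that the total order induced by $s$ on $\abdterms{A}{B}$ is fine-grained enough to entail all of $\beta(A,B)$, rather than just a proper subset of its constraints.
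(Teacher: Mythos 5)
Your proposal is correct and follows essentially the same route as the paper: the authors likewise define $\Delta$ as the conjunction, over all pairs $e,f \in \abdterms{A}{B}$, of the ordering literal satisfied by a stack $s \modelsPA \beta(A,B)$, then verify the three conditions of Definition~\ref{defn:seed} by the same conjunct-by-conjunct case analysis on $\gamma(A)$, $\gamma(B)$, the pointer--array disjointness clauses, and the pointer-value clauses. Your closing observation that every term occurring in $\beta(A,B)$ already lies in $\abdterms{A}{B}$ is exactly the structural fact the paper's argument also relies on.
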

\begin{proof} (Sketch)
Supposing $s \modelsPA \beta(A,B)$, we define $\Delta$ as:
\[
\Delta \defeq
\textstyle\bigwedge_{\substack{e,f \in \abdterms{A}{B} \\ s(e) < s(f)}}\, e < f
\;\wedge\;
\bigwedge_{\substack{e,f \in \abdterms{A}{B} \\ s(e) = s(f)}}\, e = f .
\]%
We then show that $\Delta$ satisfies \MakeUppercase defn.\nobreakspace \ref {defn:seed}.
\end{proof}

We now present a way to compute a solution $(X,Y)$ given a solution seed $\Delta$.
They key ingredient is the $\arrcovalg$ algorithm, given in Fig.~\ref{alg:cover}. Intuitively,
$\arrcovalg$ takes a solution seed $\Delta$ and the endpoints of an $\absarray{c_j}{d_j}$ in $B$,
and constructs arrays for $X$ in such a way so that every model of
$A*X$ includes a submodel that satisfies $\absarray{c_j}{d_j}$.
To do this, arrays in $A$ contribute to the coverage of
$\absarray{c_j}{d_j}$ and, in addition, the newly created arrays do not overlap
with those of $A$ (or themselves) for reasons of consistency.

Note that in $\arrcovalg$ we sometimes need to generate terms denoting the
predecessor of the start of an array, even though there is no predecessor function
in $\pres$. We achieve this by introducing primed terms $a'_i$, and add pure constraints
that induce this meaning ($a_i+1 = a'_1$). This is done on demand by $\arrcovalg$
in order to avoid the risk of trying to decrement a zero-valued term, thus obtaining
an inconsistent formula.

\begin{figure*}
\removelatexerror
\small
\centering
\begin{tabular}{@{}c@{}c@{}}
\begin{minipage}{.475\textwidth}
\vspace{1cm}
\begin{function}[H]
\SetKwProg{Fn}{Function}{}{}
\SetCommentSty{textit}
\Fn{\arrcov{A,\Delta}{e}{f}}{
\KwData{a quantifier-free symbolic heap $A$; \phantom{xxxxxxxxxx} solution seed $\Delta$; terms $e$, $f$ in $\abdterms{A}{B}$}
\KwResult{quantifier-free symbolic heap}
\BlankLine
\tcp{work with $\mapsto$-abstraction of $A$}
\textbf{let} $\left(\Pi:\bigsepstar_{i=1}^{n+k}\absarray{{\hat a}_i}{{\hat b}_i}\right) = \abstr{A}$\label{ln:let-abstr}\;
\BlankLine
\If{$f \Dlt e$}
{
  \label{ln:u-less-than-t}
  \tcp{nothing to cover}
  \Return{\label{ln:return-emp}$\emp$}\;
}
\BlankLine
\If{$\exists i\in[1,n+k].\ {\hat a}_i \Dleq e \Dleq {\hat b}_i$}{
  \label{ln:t-covered}
  \tcp{left endpoint $e$ covered by $\absarray{{\hat a}_i}{{\hat b}_i}$}
  \Return{\label{ln:recurse-through}$\arrcov{A,\Delta}{{\hat b}_i+1}{f}$}\;
}
\BlankLine
\tcp{left endpoint $f$ not covered}
$E := \{ {\hat a}_j \mid e \Dlt {\hat a}_j \Dleq f \text{ for } j\in[1,n+k] \}$
\label{ln:def-e}\;
\If{$E = \emptyset$}
{
  \label{ln:e-empty}
  \tcp{no part of $\absarray{e}{f}$ covered}
  \Return{\label{ln:array-t-u}$\absarray{e}{f}$}\;
}
\BlankLine
\tcp{middle of $\absarray{e}{f}$ covered by $\absarray{{\hat a}_i}{{\hat b}_i}$}
${\hat a}_i := \min_\Delta(E)$\label{ln:i-minimal}\;
\Return{\label{ln:array-t-ai}$({\hat a'}_i + 1 = {\hat a}_i :\absarray{e}{{\hat a'}_i}) * \arrcov{A,\Delta}{{\hat b}_i+1}{f}$}\;
}
\end{function}
\end{minipage}
&
\begin{minipage}[c]{.525\textwidth}
\begin{function}[H]
\SetKwProg{Fn}{Function}{}{}
\SetCommentSty{textit}
\Fn{\ptocov{A,\Delta}{e}{f}}{
\textbf{let} $\left(\Pi : \bigsepstar^n_{i=1}\absarray{a_i}{b_i} * \bigsepstar^k_{i=1}\psto{t_i}{u_i}\right) = A$\;
\If{$\exists i\in[1,k].\ t_i \Deq e$}{
  \Return{\label{ln:pto-pto}$\emp$}\;
}
\If{$\exists i\in[1,n].\ a_i \Dleq e \Dleq b_i$}{
  \Return{\label{ln:pto-arr}$\emp$}\;
}
\label{ln:pto-notcovered}\Return{$\psto{e}{f}$}\;
}
\end{function}
\hrule
\begin{tikzpicture}[y=-1cm]
\node[anchor=north west,text width=8cm] (text) at (-2.75,-0.5)
{
\begin{itemize}[nosep]
\item \textbf{Arrays of $A$ / $B$} appear as boxes with indicated bounds.
\item \textbf{Arrays of $X$} appear in a hatched pattern.
\item \textbf{Recursive calls} appear as dashed boxes with parameters.
\item \textbf{Terms $a'_i$} are shown as $a_i - 1$ for readability.
\end{itemize}
};

\node (A1) at (-1,2.25) {$A*X$};
\node (B1) at (-1,3) {$B$};
\node (L1) at (-2.25,2.6) {Line~\ref{ln:recurse-through}:};

\tikzarray{0}{2}{2}{2.5}{$a_i$}{$b_i$}
\tikzdashedarray{2}{2.6}{5}{3.4}{}{$\arrcov{\sigma}{b_i+1}{u}$}
\tikzarray{1}{2.75}{5}{3.25}{$t$}{$u$}

\node (A2) at (-1,4.25) {$A*X$};
\node (B2) at (-1,5) {$B$};
\node (L2) at (-2.25,4.6) {Line~\ref{ln:array-t-u}:};

\tikzgap{1}{4}{4}{4.5}{$t$}{$u$}
\tikzarray{1}{4.75}{4}{5.25}{$t$}{$u$}

\node (A3) at (-1,6.25) {$A*X$};
\node (B3) at (-1,7) {$B$};
\node (L2) at (-2.25,6.6) {Line~\ref{ln:array-t-ai}:};

\tikzgap{0}{6}{1.5}{6.5}{$t$}{$a_i - 1$}
\tikzarray{1.5}{6}{3}{6.5}{$a_i$}{$b_i$}
\tikzdashedarray{3}{6.6}{6}{7.4}{}{$\arrcov{\sigma}{b_i+1}{u}$}

\tikzarray{0}{6.75}{6}{7.25}{$t$}{$u$}

\end{tikzpicture}
\end{minipage}
\end{tabular}
\caption{
Left: the function $\arrcov{A,\Delta}{e}{f}$.
Top right: the function $\ptocov{A,\Delta}{e}{f}$.
Bottom right: arrays of $A$, $B$, $X$ relevant to each \textbf{return} statement
in the $\arrcovalg$ function.}
\label{alg:cover}
\end{figure*}
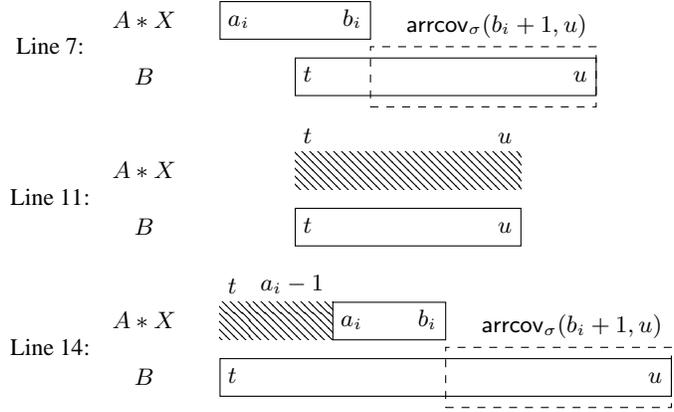

\begin{definition}[The formulas $X,Y$]
\label{alg:biabduction}
Let $\Delta$ be a solution seed for a biabduction problem $(A,B)$ in the form given in Defn.~\ref{defn:seed}.
The formulas $X,Y$ are defined as follows:
\[\begin{array}{@{}c@{}}
\displaystyle\Theta_X : F_X \defeq \hfill \\
\qquad\bigsepstar_{j=1}^{m} \arrcov{A,\Delta}{c_j}{d_j} * \bigsepstar_{j=1}^{\ell} \ptocov{A,\Delta}{v_j}{w_j} \\[15pt]
\displaystyle\Theta_Y : F_Y \defeq \hfill \\
\qquad \bigsepstar_{i=1}^{n} \arrcov{B,\Delta}{a_i}{b_i} * \bigsepstar_{i=1}^{k} \ptocov{B,\Delta}{t_i}{u_i} \\[15pt]
\hat \Delta \defeq \Delta\land\Theta_X\land\Theta_Y \\[5pt]
X  \defeq \hat \Delta : F_X \qquad
Y \defeq \hat \Delta : F_Y
\end{array}\]
\end{definition}

Every quantifier-free formula $A$ of $\ASL$ is \emph{precise} \cite{OHearn-Yang-Reynolds:04}
(by structural induction):
for any model $s,h$ there exists \emph{at most one} subheap $h'$ of $h$ such that $s,h'\models A$.
This motivates the following notation. We will write
$\cutheap{A}$ to denote the unique subheap $h'\subseteq h$ such that $s,h'\models A$, when it exists.

\begin{proposition}
\label{prop:termination}
Let $(A,B)$ be a biabduction problem of the form shown in Defn.~\ref{defn:seed}.
Let $\Delta$ be a solution seed and terms $e,f \in \abdterms{A}{B}$.
The call $\arrcov{A,\Delta}{e}{f}$:
\begin{enumerate}
\item always terminates, issuing up to $n+k$ recursive calls;
\item returns a formula (for some $q\in\nat$ and sets $I,J\subseteq\nat$)
\[\bigwedge_{i\in I} a_i = a_i'+1 \land
\bigwedge_{i\in J} t_i = t_i'+1  : \bigsepstar_{i=1}^q \absarray{l_i}{r_i}\]
where for all $i\in[1,q]$, $l_i \in \abdterms{A}{B}$;
\item for every $i\in[1,q]$, $\hat \Delta \modelsPA e \le l_i \le r_i \le f$;
\item for every $i\in[1,q-1]$, $\hat \Delta \models r_i<l_{i+1}$.
\end{enumerate}
\end{proposition}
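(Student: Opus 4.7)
The plan is to establish items (1)--(4) jointly by strong induction on a measure that directly bounds the recursion depth. Define
\[
\mu_\Delta(e) \defeq \lvert \{\, i \in [1, n+k] \mid e \Dleq \hat b_i \,\} \rvert,
\]
counting those arrays of $\abstr{A}$ whose right endpoint is not strictly $\Delta$-below $e$. Both recursive branches pass $\hat b_i + 1$ as the new left endpoint for some array $\absarray{\hat a_i}{\hat b_i}$ that is $\Delta$-guaranteed to satisfy $e \Dleq \hat b_i$: either because the case guard $\hat a_i \Dleq e \Dleq \hat b_i$ holds (line~\ref{ln:recurse-through}), or because $\hat a_i \in E$ so $e \Dlt \hat a_i \Dleq \hat b_i$ (line~\ref{ln:array-t-ai}). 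Hence $\mu_\Delta$ strictly decreases at each recursion, and since $\mu_\Delta(e) \le n + k$ uniformly, item~(1) follows.

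For items (2)--(4), I would proceed by induction on the same measure, distinguishing the four return points. The base cases are immediate: line~\ref{ln:return-emp} (under the guard $f \Dlt e$) returns $\emp$, so $q = 0$ and all conditions are vacuous; line~\ref{ln:array-t-u} (under $E = \emptyset$) returns $\absarray{e}{f}$, giving $q = 1$ with $l_1 = e \in \abdterms{A}{B}$ by hypothesis on $e$, and items~(3)--(4) hold trivially. In the recursive case line~\ref{ln:recurse-through} the output is passed through unchanged; since $\hat b_i + 1 \in \abdterms{A}{B}$ (it is either $b_i + 1$ or $t_i + 1$, both explicitly included by construction of $\abdterms{A}{B}$), the IH applies and items (2)--(4) transfer, with the outer bound $e \Dleq l_j$ of item~(3) strengthened via $e \Dleq \hat b_i \Dlt \hat b_i + 1 \Dleq l_j$ using the case guard.

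The main obstacle is line~\ref{ln:array-t-ai}, where we prepend a fresh array $\absarray{e}{\hat a'_i}$ to the recursive call's output and insert the predecessor constraint $\hat a_i = \hat a'_i + 1$ into $\hat \Delta$. I would handle this by separately checking consistency and the geometric invariants. Consistency of the accumulated primed constraints over the whole recursion is preserved because each case guard $e \Dlt \hat a_i$ together with $e \ge 0$ forces $\hat a_i \ge 1$ in $\Delta$, so a natural-number predecessor legitimately exists. For item~(3), $e \Dleq \hat a'_i$ follows from $e \Dlt \hat a_i$ and $\hat a'_i + 1 = \hat a_i$, while $\hat a'_i \Dleq f$ comes from $\hat a_i \in E$, which gives $\hat a_i \Dleq f$. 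The gap condition of item~(4) at the boundary between $\absarray{e}{\hat a'_i}$ and the recursive call's first array $\absarray{l_2}{r_2}$ reduces to $\hat a'_i \Dlt l_2$; this follows by combining the IH's $\hat b_i + 1 \Dleq l_2$ with $\hat a'_i \Dlt \hat a_i \Dleq \hat b_i$. All subsequent gap and bound conditions transfer directly from the IH, completing the induction.
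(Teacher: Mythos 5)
Your proof is correct and follows essentially the same route as the paper's: induction over the recursion with a case analysis on the four return points, checking the endpoint bounds and gap conditions at lines~\ref{ln:array-t-u} and~\ref{ln:array-t-ai}. Your explicit measure $\mu_\Delta$ is just a slightly more formal packaging of the paper's observation that the sequence of right endpoints $\hat b_{i_j}$ passed to recursive calls is $\Dlt$-increasing over the $n+k$ arrays of $\abstr{A}$.
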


\begin{lemma}
\label{lem:alg_sat}
Let $(A,B)$ be a biabduction instance, $\Delta$ a solution seed
and $X$ as in Defn.~\ref{alg:biabduction}.
Then, $A*X$ is satisfiable.
\end{lemma}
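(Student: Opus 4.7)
The plan is to apply Lemma~\ref{lem:gamma} to reduce satisfiability of $A*X$ to satisfiability of the Presburger formula $\gamma(A*X)$, and then exhibit a stack $s$ with $s \modelsPA \gamma(A*X)$. Since $\Delta$ is a solution seed, it is satisfiable by Defn.~\ref{defn:seed}(1); let $s_0 \modelsPA \Delta$. I would extend $s_0$ to $s$ by setting $s(a'_i) := s_0(\hat{a}_i) - 1$ for every primed variable $a'_i$ introduced by $\arrcovalg$. This is well-defined because $a'_i$ is only generated on line~\ref{ln:array-t-ai}, at which point $e \Dlt \hat{a}_i$ for some $e \in \abdterms{A}{B}$, forcing $s_0(\hat{a}_i) \geq 1$. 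Then $s \modelsPA \hat\Delta$, and since $\hat\Delta \modelsPA \Delta \modelsPA \beta(A,B)$ implies $\gamma(A)$ and the pure parts of both $A$ and $B$, the pure conjuncts of $\gamma(A*X)$ and all well-formedness/non-overlap conjuncts internal to $A$ are satisfied.

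It remains to verify the well-formedness and pairwise non-overlap conjuncts of $\gamma(A*X)$ involving items contributed by $F_X$. Well-formedness of each array emitted by $\arrcovalg$ is immediate from Proposition~\ref{prop:termination}(3). For non-overlap I would proceed by case analysis on the origin of the two items being compared. Two arrays emitted by a single call $\arrcov{A,\Delta}{c_j}{d_j}$: Proposition~\ref{prop:termination}(4) combined with $l_i \leq r_i$ gives strict separation. Two items emitted by distinct calls (two $\arrcov$ calls, two $\ptocov$ calls, or one of each): Proposition~\ref{prop:termination}(3) confines them to intervals $[c_j,d_j]$, $[c_{j'},d_{j'}]$ or to pointer locations $v_{j'}$, all of which are pairwise disjoint by $\gamma(B) \subseteq \beta(A,B)$. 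A pointer emitted by $\ptocovalg$ against anything in $A$: by construction $\ptocovalg$ only emits $\psto{v_j}{w_j}$ when $v_j$ coincides with no $A$-pointer location (line~\ref{ln:pto-pto}) and lies in no $A$-array (line~\ref{ln:pto-arr}), so the conflict cannot arise.

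The main obstacle is the last class of cases: disjointness of arrays emitted by $\arrcovalg$ from arrays and pointers of $A$. Since $\arrcovalg$ works with $\abstr{A}$, in which $A$'s pointers appear as single-cell arrays, it suffices to show that every array emitted by a call $\arrcov{A,\Delta}{e}{f}$ is disjoint from every $\absarray{\hat{a}_i}{\hat{b}_i}$ of $\abstr{A}$. I would prove this by induction on the recursion depth of $\arrcovalg$. In the base case (line~\ref{ln:array-t-u}) the return $\absarray{e}{f}$ is reached only when $E = \emptyset$ and no $\hat{a}_i$ satisfies $\hat{a}_i \Dleq e \Dleq \hat{b}_i$; combined with the non-overlap of $\abstr{A}$'s arrays from $\gamma(A)$, this forces $[e,f]$ to lie in a single gap of $\abstr{A}$'s coverage. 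In the recursive case (line~\ref{ln:array-t-ai}), $\hat{a}_i = \min_\Delta(E)$ together with the failure of the test on line~\ref{ln:t-covered} at $e$ places $\absarray{e}{\hat{a}_i - 1}$ strictly in the gap immediately to the left of $\absarray{\hat{a}_i}{\hat{b}_i}$, disjoint from every array of $\abstr{A}$; the tail $\arrcov{A,\Delta}{\hat{b}_i+1}{f}$ is then handled by the induction hypothesis.
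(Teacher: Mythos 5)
Your proof is correct, but it reaches the conclusion by a different decomposition than the paper. The paper's proof constructs an explicit model: after building the same stack (including the same well-definedness argument for the primed variables $a'_i, t'_i$), it lays down concrete heaps for $A$'s arrays and pointers, \emph{defines} each $\Xarrays_i$ semantically as the set of cells in $[s(c_i),s(d_i)]$ not already covered by $A$'s footprint, checks pairwise disjointness of all these heaplets, and then proves (by contradiction in both directions) that this ``gap-filling'' heap is exactly the one denoted by the formula $\arrcovalg$ returns. You instead avoid constructing a heap altogether: you route everything through Lemma~\ref{lem:gamma}, so that satisfiability of $A*X$ reduces to a purely arithmetical conjunction of well-formedness and non-overlap constraints, which you then discharge case by case, proving the one nontrivial family of constraints (emitted arrays versus $A$'s footprint) by induction on the recursion depth of $\arrcovalg$. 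The two arguments need the same core fact about the algorithm --- that every array it emits avoids the arrays of $\abstr{A}$ --- and both must inspect the return sites at lines~\ref{ln:array-t-u} and~\ref{ln:array-t-ai} to get it; the paper extracts it from the semantic characterisation of $\Xarrays_i$, while you prove it directly as an invariant of the recursion. Your route buys a cleaner separation of concerns (all model-existence reasoning is delegated to the already-proved Lemma~\ref{lem:gamma}, leaving only constraint checking), at the cost of an explicit enumeration of the pair cases; the paper's route produces a concrete witness model, which makes the disjointness bookkeeping slightly more uniform. Your case analysis is complete as far as I can check (including the $\ptocovalg$ guards combined with the totality of $\Delta$ from Lemma~\ref{lem:delta-ordering}), and your treatment of the primed variables matches the paper's; just make sure the stack extension explicitly covers the primed terms arising from $\Theta_Y$ as well as $\Theta_X$, since $\hat\Delta$ (and hence the pure part of $A*X$) contains both.
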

\begin{proof} (Sketch)
We first obtain a stack $s$ by unpacking \MakeUppercase defn.\nobreakspace \ref {defn:seed}. We extend it
to primed terms $a'_i$, $t'_i$, $c'_i$, $v'_i$ in a way that preserves satisfaction of $\Delta$.
Using this stack $s$, we then define appropriate heaps for the constituent parts of $A$ and $X$
and show that they are pairwise disjoint, thus constructing a heap that
satisfies $A*X$.
\end{proof}

\begin{definition}[The sequences $\Barrays,\Bptos,\Yarrays,\Yptos$]
\label{defn:subheaps}
Let $(A,B)$ be a biabduction problem, $\Delta$ a solution seed, $X,Y$ as
defined in \ref{alg:biabduction} and $s,h$ a model such that $s,h\models A*X$.
Then we define the following sequences $\Barrays,\Bptos,\Yarrays,\Yptos$ of subheaps of $h$,
such that:
\[\begin{array}{r@{\,\defeq\;}ll}
\Barrays_i& \cutheap{\absarray{c_i}{d_i}} & i\in[1,m] \\
\Bptos_i&\cutheap{\psto{v_i}{w_i}} & i\in[1,\ell]\\
\Yarrays_i &\cutheap{\arrcov{B,\Delta}{a_i}{b_i}} & i\in[1,n]\\
\Yptos_i &\cutheap{\ptocov{B,\Delta}{t_i}{u_i}} & i\in[1,k]
\end{array}\]
\end{definition}

\begin{lemma}
\label{lem:seqs-well-defined}
\label{lem:totality}
\label{lem:disjointness}
All heaps in $\Barrays,\Bptos,\Yarrays,\Yptos$ exist (that is, they are well-defined). Also,
\begin{enumerate}
\item For any sequence of heaps $\mathcal{S}$ of $\Barrays$,$\Bptos$,$\Yarrays$,$\Yptos$,
and any distinct $i,j\in[1,|\mathcal{S}|]$, $\mathcal{S}_i\hash\mathcal{S}_j$.
\item For any two distinct sequences of heaps $\mathcal{S},\mathcal{T}$ of $\Barrays$,
$\Bptos$, $\Yarrays$, $\Yptos$,
and any $i,j$, 
$\mathcal{S}_i\hash\mathcal{T}_j$.
\item $
\dom{h} \subseteq
\displaystyle\bigcup_{i=1}^m \Barrays_i \cup
\bigcup_{i=1}^\ell \Bptos_i \cup
\bigcup_{i=1}^n \Yarrays_i \cup
\bigcup_{i=1}^k \Yptos_i
$.
\end{enumerate}
\end{lemma}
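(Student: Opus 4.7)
The plan is to prove well-definedness first and then items~(1)--(3) in order. The two workhorses throughout are Proposition~\ref{prop:termination}, which pins down the shape of every $\arrcovalg$-output, and the entailments $\Delta\modelsPA\gamma(A)$ and $\Delta\modelsPA\gamma(B)$ that come from $\Delta\modelsPA\beta(A,B)$ in Definition~\ref{defn:seed}; the latter force all arrays and (abstracted) pointers of $A$---and separately those of $B$---to occupy pairwise disjoint addresses in every model of $\Delta$. Each subheap in Definition~\ref{defn:subheaps} is of the form $\cutheap{\phi}$ for a quantifier-free $\phi$, so preciseness takes care of uniqueness and well-definedness reduces to exhibiting a subheap of $h$ satisfying $\phi$. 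For atoms of $Y$, Proposition~\ref{prop:termination}(3) places every array produced by $\arrcov{B,\Delta}{a_i}{b_i}$ inside $[s(a_i),s(b_i)]\subseteq\dom{h}$, and $\ptocov{B,\Delta}{t_i}{u_i}$ only emits a pointer at $s(t_i)\in\dom{h}$. For atoms of $B$, the key step is a covering invariant which I will prove by induction on the recursion of $\arrcovalg$: the (abstracted) atoms of $A$ intersecting $[c_j,d_j]$ together with the arrays returned by $\arrcov{A,\Delta}{c_j}{d_j}$ partition $[s(c_j),s(d_j)]$; since $s,h\models A*X$ places all of these in $\dom{h}$, the required $B$-subheap exists.

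\textbf{Items~(1) and (2): disjointness.} Within $\Barrays$, $\Bptos$, and across the pairing $\Barrays$--$\Bptos$, disjointness follows directly from $\gamma(B)$. For $\Yarrays$, distinct entries $\Yarrays_i,\Yarrays_j$ have domains inside the $\gamma(A)$-disjoint intervals $[s(a_i),s(b_i)]$ and $[s(a_j),s(b_j)]$, while Proposition~\ref{prop:termination}(4) provides disjointness internal to a single $\arrcovalg$-call. For $\Yptos$, each entry is at most a singleton at $s(t_i)$, and $\gamma(A)$ separates the $t_i$. The cross-sequence cases involving $Y$ and $B$ split as follows: $\Yarrays_j$ against a $B$-atom is handled by inspecting lines~\ref{ln:t-covered} and~\ref{ln:i-minimal} of $\arrcovalg$, which only emit arrays strictly in the $\Delta$-gaps between (abstracted) $B$-atoms; $\Yptos_j$ against a $B$-atom holds because $\ptocovalg$ returns $\emp$ precisely when $t_j$ lies inside some $B$-array or coincides with some $B$-pointer; finally, $\Yarrays_j$ against $\Yptos_k$ is disposed of by $\gamma(A)$, which isolates $[s(a_j),s(b_j)]$ from $s(t_k)$.

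\textbf{Item~(3) and the main obstacle.} Decompose $h=h_A\circ h_X$ with $s,h_A\models A$ and $s,h_X\models X$. Any $\ell\in\dom{h_A}$ either lies in some $[s(a_i),s(b_i)]$ or equals some $s(t_i)$: in the first case, applying the covering invariant with the roles of $A$ and $B$ swapped, $\ell$ belongs to some $B$-atom (hence to $\Barrays_k\cup\Bptos_k$) or to an array of $\arrcov{B,\Delta}{a_i}{b_i}$ (hence to $\Yarrays_i$); in the second case, either $\ell$ is covered by a $B$-atom, or $\ptocovalg$ emits $\psto{t_i}{u_i}$ and $\ell\in\Yptos_i$. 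Any $\ell\in\dom{h_X}$ lies in some array returned by $\arrcov{A,\Delta}{c_j}{d_j}$ or equals some $s(v_j)$, which by Proposition~\ref{prop:termination}(3) or the definition of $\ptocovalg$ is contained in $[s(c_j),s(d_j)]$ respectively $\{s(v_j)\}$, hence in $\Barrays_j\cup\Bptos_j$. The main technical obstacle is exactly the covering invariant: an induction on the recursive structure of $\arrcovalg$ (with a short base-case argument for $\ptocovalg$) showing that its output, together with the intersecting (abstracted) atoms of the first argument, exactly tiles the target interval. Every non-trivial claim in items~(1)--(3) reduces to this invariant, so it must be stated precisely enough to carry through each of the three recursive cases of Figure~\ref{alg:cover} and robust enough to compose across both $\arrcovalg$ and $\ptocovalg$.
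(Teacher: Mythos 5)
Your overall architecture matches the paper's: disjointness within and across the $B$-sequences comes from $\gamma(B)$, within the $Y$-sequences from $\gamma(A)$ together with Proposition~\ref{prop:termination}, the $Y$-versus-$B$ cases from inspecting where $\arrcovalg$ and $\ptocovalg$ emit output, and item~(3) together with the existence of the $\Barrays_i$ rests on the covering property of $\arrcovalg$, which the paper also establishes (somewhat tersely) by walking the recursion. Making that covering invariant an explicit induction is the right move.

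There is, however, one genuine gap: the existence of the subheaps $\Bptos_i=\cutheap{\psto{v_i}{w_i}}$. You reduce existence of all $B$-subheaps to showing that the relevant addresses lie in $\dom{h}$, via the covering invariant. That suffices for the array atoms of $B$, because $\mathsf{array}$ is insensitive to stored values, but for a pointer atom $\psto{v_i}{w_i}$ you additionally need $h(s(v_i))=s(w_i)$, and domain coverage says nothing about this. Your plan explicitly extracts only $\gamma(A)$ and $\gamma(B)$ from $\beta(A,B)$, yet the value constraint is exactly what the other two conjuncts of $\beta$ (Definition~\ref{defn:beta}) are for: the second conjunct rules out $s(v_i)$ falling inside an array of $A$ (where the contents of $h$ are arbitrary and need not equal $s(w_i)$), and the third forces $u_j=w_i$ whenever $t_j=v_i$, so that if $s(v_i)$ coincides with an $A$-pointer the stored value agrees; in the remaining case $s(v_i)\notin\dom{\cutheap{A}}$, the call $\ptocov{A,\Delta}{v_i}{w_i}$ returns $\psto{v_i}{w_i}$ itself, which is how $X$ supplies the correct value. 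Without invoking these two conjuncts (inherited into $\Delta$ via condition~\ref{item:delta} of Definition~\ref{defn:seed}), the step ``the required $B$-subheap exists'' is not established for the pointer atoms of $B$. The rest of your argument goes through essentially as in the paper.
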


%
%
%

\begin{theorem}
\label{thm:seed-implies-solution}
Given a solution seed $\Delta$ for the biabduction problem $(A,B)$,
the formulas $X$ and $Y$, as computed by Defn.~\ref{alg:biabduction},
form a solution for that instance.
\end{theorem}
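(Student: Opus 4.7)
The plan is to prove the two requirements of a biabduction solution separately: (i) satisfiability of $A*X$, which is already delivered by Lemma~\ref{lem:alg_sat}, so the real work is (ii) the entailment $A*X \models B*Y$. Accordingly, I would fix an arbitrary stack/heap $(s,h)$ with $s,h\models A*X$ and show $s,h\models B*Y$ by decomposing $h$ according to the sequences $\Barrays,\Bptos,\Yarrays,\Yptos$ from Defn.~\ref{defn:subheaps}.

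First I would dispose of the pure parts. Since the pure part of $X$ is $\hat\Delta = \Delta\wedge\Theta_X\wedge\Theta_Y$ and $s,h\models A*X$, we get $s\modelsPA \hat\Delta$. By Defn.~\ref{defn:seed}(\ref{item:delta}), $\Delta\modelsPA\beta(A,B)$, and $\beta(A,B)$ has $\gamma(B)$ as a conjunct, whose definition in turn includes $\Pi'$ (the pure part of $B$). Hence $s\modelsPA \Pi'$, covering the pure part of $B$; the pure part of $Y$ is $\hat\Delta$ itself and is satisfied directly.

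For the spatial part, I would invoke Lemma~\ref{lem:seqs-well-defined}: all subheaps in $\Barrays,\Bptos,\Yarrays,\Yptos$ are well-defined, pairwise domain-disjoint, and their domains jointly cover $\dom{h}$. Thus $h$ factors uniquely as
\[
h \;=\; \Bigl(\bigcirc_{i=1}^m \Barrays_i\Bigr) \circ \Bigl(\bigcirc_{i=1}^\ell \Bptos_i\Bigr)
       \circ \Bigl(\bigcirc_{i=1}^n \Yarrays_i\Bigr) \circ \Bigl(\bigcirc_{i=1}^k \Yptos_i\Bigr)\ .
\]
Group the first two compositions as $h_B$ and the last two as $h_Y$. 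By the defining equations in Defn.~\ref{defn:subheaps} and the precision of quantifier-free $\ASL$ formulas, $s,\Barrays_i \models \absarray{c_i}{d_i}$ and $s,\Bptos_i \models \psto{v_i}{w_i}$, so $s,h_B$ satisfies the spatial part of $B$; likewise $s,h_Y$ satisfies $F_Y$ by construction. Combined with the pure parts above, this yields $s,h\models B*Y$, as required.

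The main obstacle I anticipate is purely bookkeeping: confirming that the decomposition above is genuinely a $*$-decomposition. This reduces to (a) pairwise disjointness of the four sequences (Lemma~\ref{lem:seqs-well-defined}(1)--(2)), (b) the covering property (Lemma~\ref{lem:seqs-well-defined}(3)), and (c) the fact that each $\cutheap{\cdot}$ subheap is a subheap of $h$ by definition. The subtlety is that $\arrcovalg$ may introduce primed-variable auxiliary terms into $\Theta_X$ and $\Theta_Y$, but because these constraints are conjoined into $\hat\Delta$ and $s$ was chosen to satisfy $\hat\Delta$, all such auxiliary equalities already hold in $s$, so no additional condition on $(s,h)$ is required. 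Once this observation is made, the theorem follows essentially by packaging the earlier results.
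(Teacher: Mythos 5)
Your proposal is correct and follows essentially the same route as the paper's own proof: satisfiability via Lemma~\ref{lem:alg_sat}, the pure part of $B*Y$ via $\hat\Delta\models\Delta\models\beta(A,B)\models\gamma(B)$, and the spatial part by decomposing $h$ into the subheaps $\Barrays,\Bptos,\Yarrays,\Yptos$ of Defn.~\ref{defn:subheaps} and appealing to their well-definedness, pairwise disjointness, and coverage of $\dom{h}$ (Lemmas~\ref{lem:seqs-well-defined}--\ref{lem:totality}). No gaps; the bookkeeping you flag is exactly what those lemmas discharge.
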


\begin{proof}
That $(X,Y)$ is a solution means that $A*X$ is satisfiable and that $A*X\models B*Y$.
The first requirement is fulfilled by Lemma~\ref{lem:alg_sat}. Here, we show the second.

Let $s,h$ be a model of $A*X$.
We need to show that $s,h\models B*Y$. Using Defn.~\ref{alg:biabduction},
we have:
\[A*X = \Pi \land \hat\Delta : F_{A*X} \quad\text{ and }\quad
B*Y = \Pi' \land \hat\Delta : F_{B*Y}
\]
It is easy to see that $s\models \Pi'\land\hat\Delta$: by assumption, $s\models \hat\Delta$,
and as $\hat\Delta\models\Delta$ (\MakeUppercase defn.\nobreakspace \ref {alg:biabduction}) and $\Delta\modelsPA \gamma(B)$
(\MakeUppercase defn.\nobreakspace \ref {defn:seed}), it follows that $s\models \Pi'$ as well (\MakeUppercase defn.\nobreakspace \ref {defn:gamma}).

It remains to show that $s,h\models F_{B*Y}$.
Recall that \mbox{$F_{B*Y} = F_B * F_Y$} and that
\[\begin{array}{r@{\,}l}
F_B &= \displaystyle\bigsepstar^m_{i=1}\absarray{c_i}{d_i} * \bigsepstar^{\ell}_{i=1}\psto{v_i}{w_i}\\
F_Y & = \displaystyle\bigsepstar_{i=1}^{n} \arrcov{B,\Delta}{a_i}{b_i} * \bigsepstar_{i=1}^{k} \ptocov{B,\Delta}{t_i}{u_i}
\end{array}\]
We will do this by (a) defining a subheap $h' \subseteq h$ for each atomic formula $\sigma$ in $F_{B*Y}$,
such that $s,h'\models\sigma$. Having done this we will need
(b) to show that all such subheaps are disjoint, and that (c) their disjoint union equals $h$.

The sequences $\Barrays,\Bptos,\Yarrays,\Yptos$ from \MakeUppercase defn.\nobreakspace \ref {defn:subheaps}, by construction,
fulfil requirement (a) above, given they are well-defined as guaranteed by Lemma\nobreakspace \ref {lem:seqs-well-defined} (main statement).
Requirement (b) is covered by items 1 and 2 of Lemma\nobreakspace \ref {lem:disjointness}.
Finally, requirement (c) is covered by item 3 of Lemma\nobreakspace \ref {lem:totality}.
\end{proof}

\begin{remark}
The solutions obtained via \MakeUppercase defn.\nobreakspace \ref {alg:biabduction} are constructed from
terms in $\abdterms{A}{B}$. This is syntactically optimal in the sense that
$X,Y$ are as `symbolic' as $A,B$ are.

Our solutions are potentially stronger than required.
Applying \MakeUppercase defn.\nobreakspace \ref {alg:biabduction} to Example\nobreakspace \ref {ex:biabduction} gives us
several solutions, corresponding to the number of ways
$\relarray{\Cb}{0}{\Cn-1}$ and $\relarray{\Cm}{0}{\Ck-1}$ can be situated in
memory in relation to each other. However, it can be seen that some solutions
can be merged into one, weaker solution. For instance,
\[\begin{array}{r@{\,}l}
X_1 & = \Cm+\Ck\leq\Cb : \relarray{\Cm}{0}{\Ck-1} \\
X_2 & = \Cb+\Cn\leq\Cm : \relarray{\Cm}{0}{\Ck-1}
\end{array}\]
can be merged into the more natural $X =\relarray{\Cm}{0}{\Ck-1}$.

Our method is, also, complete in the following sense.
Suppose $(X,Y)$ is a solution that does not impose a total order over $\abdterms{A}{B}$.
Then, there exists a solution $(X',Y')$ computable by our method, such that
$X'\models X$ and $Y'\models Y$.
\end{remark}

\subsection{Complexity of quantifier-free biabduction in $\ASL$}
\label{sec:qfree-biabduction-bounds}

\begin{lemma}
\label{lem:delta-polytime}
Let $(A,B)$ be a biabduction instance and
$\Delta$ a formula satisfying \MakeUppercase conditions\nobreakspace \ref {item:terms} and\nobreakspace  \ref {item:order} of
\MakeUppercase defn.\nobreakspace \ref {defn:seed}. Let $\Gamma=\bigwedge\bigvee \pi$ be a formula where
$\pi$ is of the form $t< u$ or $t = u$ and $t,u\in\abdterms{A}{B}$. Then,
checking $\Delta\modelsPA \Gamma$ is in $\PTIME$.
\end{lemma}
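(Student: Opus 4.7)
The plan is to decide $\Delta\modelsPA\Gamma$ by reducing it to polynomial-time graph computations on the finite set $\abdterms{A}{B}$.

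First, I would construct a data structure that captures the preorder on $\abdterms{A}{B}$ induced by $\Delta$. Using the equality conjuncts of $\Delta$, I merge terms into equivalence classes via union-find. Using the strict-inequality conjuncts of $\Delta$, together with the PA-tautological facts $e < e+1$ whenever both $e$ and $e+1$ lie in $\abdterms{A}{B}$ (and analogously for the primed-successor terms introduced by $\arrcovalg$), I build a directed graph on the equivalence classes and compute its transitive closure. All of this is clearly polynomial in $|\abdterms{A}{B}|$, hence in the size of the input.

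Second, I check internal consistency of the resulting structure: no equivalence class should be strictly less than itself in the closure. If consistency fails, then $\Delta$ is PA-unsatisfiable and $\Delta\modelsPA\Gamma$ holds vacuously. Otherwise, I observe that by condition~\ref{item:order} of Definition~\ref{defn:seed}, every pair in $\abdterms{A}{B}$ is directly related in $\Delta$; combined with the closure, this yields a total preorder in which, for every $e,f\in\abdterms{A}{B}$, exactly one of $e\Dlt f$, $e\Deq f$, $f\Dlt e$ holds, and each such relation can be queried in constant time once the structure has been built.

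Third, I evaluate $\Gamma$ clause by clause. Each literal $\pi$ is of the form $t<u$ or $t=u$ with $t,u\in\abdterms{A}{B}$; the relation between $t$ and $u$ is read off the precomputed structure, and $\Delta\modelsPA\pi$ iff the lookup matches $\pi$. A disjunctive clause is entailed by $\Delta$ iff at least one of its literals is, and $\Delta\modelsPA\Gamma$ iff every clause is entailed. Both iterations are polynomial, giving an overall $\PTIME$ bound.

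The step requiring the most care is soundness of this purely syntactic check relative to PA semantics, i.e.\ showing that no PA-entailment of a literal on $\abdterms{A}{B}$ from $\Delta$ is missed by the transitive-equational closure plus the successor rule. I would establish this by producing, from the saturated closure, an explicit ranking of the equivalence classes into $\nat$ that respects all asserted strict and successor facts; any atomic literal not forced by the closure is then falsified in that model and hence is not PA-entailed by $\Delta$.
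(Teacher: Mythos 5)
Your algorithm is sound and does run in polynomial time, but it takes a considerably heavier route than the paper's, and the one step you yourself flag as delicate is exactly the step that does not hold up as written. The paper's proof exploits condition~\ref{item:order} of Defn.~\ref{defn:seed} far more directly: since \emph{every} pair $t,u\in\abdterms{A}{B}$ is related by an explicit conjunct of $\Delta$ (one of $t<u$, $u<t$, $t=u$), deciding $\Delta\modelsPA\pi$ for an atomic $\pi$ reduces to a linear scan for $\pi$ among the conjuncts of $\Delta$: if $\pi$ occurs, entailment is immediate; if not, one of the two mutually exclusive alternatives occurs as a conjunct, and (for satisfiable $\Delta$) that refutes the entailment. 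No union--find, no transitive closure and no model construction are needed; disjunctions and conjunctions are then dispatched exactly as in your third step.

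The genuine weak point is your completeness argument via ``ranking the equivalence classes into $\nat$''. The elements of $\abdterms{A}{B}$ are not atoms but arithmetic terms ($b_i+1$, sums, constant multiples, numerals), so a stack realising your ranking must satisfy $s(b_i+1)=s(b_i)+1$, $s(nt)=n\,s(t)$, and so on; an arbitrary order-respecting assignment of naturals to classes need not be realisable by any stack, so the claimed countermodel may simply not exist (e.g.\ a closure containing $b_i<u$ and $u<b_i+1$ is acyclic, yet unsatisfiable over $\nat$). Fortunately you do not need that argument: totality of the induced relation (which you already extract from condition~\ref{item:order}), soundness of your closure rules, and satisfiability of $\Delta$ together yield completeness, because an entailed-but-underived literal would force $\Delta$ to entail two mutually exclusive atoms. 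Note finally that both your proof and the paper's tacitly assume $\Delta$ is satisfiable --- an unsatisfiable $\Delta$ entails everything, which neither syntactic test reports --- and your acyclicity check does not detect all such cases; in the paper this is harmless only because the lemma is invoked alongside a separate satisfiability check on $\Delta$.
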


\begin{proposition}
\label{prop:runtime_NP}
Deciding if there is a solution for a biabduction problem $(A,B)$,
and constructing it if it exists, can be done in $\NP$.
\end{proposition}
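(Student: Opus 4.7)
The plan is a standard guess-and-verify $\NP$ algorithm that directly assembles the pipeline established by Prop.~\ref{prop:solution-to-beta-sat}, Thm.~\ref{thm:beta-sat-implies-seed}, and Thm.~\ref{thm:seed-implies-solution} (summarised in Fig.~\ref{fig:biabd-overview}).

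First, I would observe that $|\abdterms{A}{B}|$ is linear in $|A|+|B|$, so any total order on this set can be encoded by polynomially many atomic constraints of the form $t<u$ or $t=u$. A candidate solution seed $\Delta$ thus fits in polynomial space; the nondeterministic machine begins by guessing such a $\Delta$ together with a polynomially-bounded Presburger assignment $s$ (whose size bound is justified by~\cite{Scarpellini:84}, as in the proof of Prop.~\ref{prop:sat_NP}) intended to witness its satisfiability.

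Next, the machine verifies that $\Delta$ satisfies Defn.~\ref{defn:seed}. The syntactic conditions (items~\ref{item:terms} and~\ref{item:order}) are obvious polynomial checks on the form and coverage of conjuncts. For item~\ref{item:delta}, satisfiability of $\Delta$ is certified by evaluating it under the guessed $s$ in polynomial time, while the entailment $\Delta \modelsPA \beta(A,B)$ is polynomial by Lemma~\ref{lem:delta-polytime}: unfolding Defn.~\ref{defn:beta}, every conjunct of $\beta(A,B)$, including those coming from $\gamma(A)$ and $\gamma(B)$, is of the shape $\bigwedge\bigvee\pi$ with $\pi\in\{t<u,\,t=u\}$ over terms in $\abdterms{A}{B}$, exactly as required by that lemma. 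Given a verified seed, the machine then computes and outputs $X$ and $Y$ as prescribed by Defn.~\ref{alg:biabduction}; by Prop.~\ref{prop:termination} each call to $\arrcovalg$ issues at most $n+k$ recursive calls with polynomial work per call, and $\ptocovalg$ is manifestly polynomial, so $X$ and $Y$ are polynomial-sized and computable in polynomial time.

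Soundness of the procedure is immediate from Thm.~\ref{thm:seed-implies-solution}: any $\Delta$ that passes verification yields a genuine solution $(X,Y)$. Completeness follows from chaining Prop.~\ref{prop:solution-to-beta-sat} with Thm.~\ref{thm:beta-sat-implies-seed}: the existence of any solution forces $\beta(A,B)$ to be satisfiable, which in turn yields some seed $\Delta$ together with a witness $s$, and our nondeterministic guess will hit this pair on some branch. The principal obstacle is keeping the verification of item~\ref{item:delta} polynomial; the entailment half is delegated wholesale to Lemma~\ref{lem:delta-polytime}, whereas satisfiability of $\Delta$---which does not obviously reduce to pure difference logic, since $\abdterms{A}{B}$ may contain arbitrary Presburger sums---is absorbed into the overall nondeterministic guess by bundling a Presburger witness, thereby preserving the total $\NP$ budget.
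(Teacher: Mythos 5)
Your proposal is correct and follows essentially the same route as the paper's own proof: guess a total order on $\abdterms{A}{B}$ together with a polynomially-bounded satisfying assignment (justified by \cite[Theorem~6]{Scarpellini:84}), verify the solution-seed conditions using Lemma~\ref{lem:delta-polytime} after splitting $\beta(A,B)$ into $\bigwedge\bigvee\pi$ form, construct $(X,Y)$ via Defn.~\ref{alg:biabduction} with the polynomial bound from Prop.~\ref{prop:termination}, and derive completeness from the cycle Prop.~\ref{prop:solution-to-beta-sat} / Thm.~\ref{thm:beta-sat-implies-seed} / Thm.~\ref{thm:seed-implies-solution}. Your explicit remark that satisfiability of $\Delta$ must be witnessed by the guessed assignment rather than decided as difference logic matches the paper's treatment exactly.
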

\begin{proof}(Sketch)
We guess a total order over $\abdterms{A}{B}$ and a polynomially-sized assignment of values $s$
(\cite[Theorem~6]{Scarpellini:84})
to all terms in $\abdterms{A}{B}$. We convert this order to a formula $\Delta$
and check if $s\models\Delta$ (thus showing the satisfiability of $\Delta$) and
whether $\Delta\models\beta(A,B)$. If all these conditions hold, we use
\MakeUppercase defn.\nobreakspace \ref {alg:biabduction} and obtain formulas $X,Y$. By \MakeUppercase prop.\nobreakspace \ref {prop:termination}
and Lemma\nobreakspace \ref {lem:delta-polytime} this process runs in $\PTIME$.
\end{proof}

We establish $\NP$-hardness of quantifier-free biabduction by reduction from the
3-partition problem, similarly to satisfiability in Section~\ref{sec:satisfiability}.

\begin{definition}\label{d-3part-to-biabd}
 Similar to Definition~\ref{defn:3part_to_sat}, given an instance
 \mbox{$(B,\cal{S})$} of the \mbox{$3$-}partition problem,
 we define corresponding symbolic heaps $\widetilde{A}_{B,{\cal S}}$
 and $\widetilde{B}_{B,{\cal S}}$,
 such that both are satisfiable, quantifier-free,
 \mbox{$\mapsto$-}free  and in two-variable
 form. First, $\widetilde{A}_{B,\cal{S}}$ is:
$$
\bigwedge_{i=1}^{m} (d_{i+1}= d_{i}+B+1) \colon
 \bigsepstar_{i=1}^{m+1} \relarray{d_i}{0}{0}\ .
$$%
 Clearly, $\widetilde{A}_{B,{\cal S}}$ is satisfiable, and
 the variables $d_i$ act as single-cell delimiters
 between memory chunks of length~$B$:
\newcount \DBL
 \DBL=14\multiply\DBL by 18%
$$\begin{picture}(\DBL,25)%
  \ \raisebox{6pt}{\ldots}%
  \framebox(14,14)[c]{}
  \framebox(14,14)[c]{$\bullet$}
$\overbrace{%
  \framebox(14,14)[c]{}%
  \framebox(14,14)[c]{}%
  \framebox(14,14)[c]{}%
  \framebox(14,14)[c]{}%
  \framebox(14,14)[c]{}%
  \framebox(14,14)[c]{}%
  \framebox(14,14)[c]{}%
  \framebox(14,14)[c]{}%
  \framebox(14,14)[c]{}%
  \framebox(14,14)[c]{}%
}^{B}$
  \framebox(14,14)[c]{$\bullet$}
  \framebox(14,14)[c]{}%
  \ \raisebox{6pt}{\ldots}%
\end{picture}$$%
\vspace*{1ex}

\noindent
We define $\widetilde{B}_{B,\cal{S}}$ as the following symbolic heap
 (essentially a relaxed version of\/ $A_{B,\cal{S}}$
 from Definition~\ref{defn:3part_to_sat}):
\[\begin{array}{c}%
  \displaystyle\bigwedge_{i=1}^{m} (d_{i+1}> d_{i}) \wedge
  \bigwedge_{j=1}^{3m} (d_{1}\leq a_j)\wedge(a_j+k_j< d_{m+1}) \colon
 \\
 \displaystyle\bigsepstar_{i=1}^{m+1} \relarray{d_i}{0}{0}
 * \displaystyle\bigsepstar_{j=1}^{3m} \relarray{a_j}{1}{k_j}\ .
\end{array}\]%
 $\widetilde{B}_{B,{\cal S}}$ is satisfiable,
 since the ``liberal'' \mbox{$\bigwedge_{i=1}^{m} (d_{i+1}> d_{i})$}
 allows us to allocate arrays of any length $k_j$ anywhere
 in the unbounded space between the delimiters $d_1$ and $d_{m+1}$.
 E.g.,

\newcount \DBL
 \DBL=14\multiply\DBL by 18%
$$\begin{picture}(\DBL,25)%
  \ \raisebox{6pt}{\ldots}%
  \framebox(14,14)[c]{}%
  \framebox(14,14)[c]{$\bullet$}
$\overbrace{%
\underbrace{%
  \framebox(14,14)[c]{$\cdot$}%
  \framebox(14,14)[c]{$\cdot$}%
  \framebox(14,14)[c]{$\cdot$}%
  \framebox(14,14)[c]{$\cdot$}%
  \framebox(14,14)[c]{$\cdot$}%
}_{k_{j_{i,1}}} 
\underbrace{%
  \framebox(14,14)[c]{$\cdot$}%
  \framebox(14,14)[c]{$\cdot$}%
}_{k_{j_{i,2}}}
\underbrace{%
  \framebox(14,14)[c]{$\cdot$}%
  \framebox(14,14)[c]{$\cdot$}%
  \framebox(14,14)[c]{$\cdot$}%
}_{k_{j_{i,3}}}
}^{\mbox{unrestricted}}$
  \framebox(14,14)[c]{$\bullet$}%
  \framebox(14,14)[c]{$\cdot$}%
  \ \raisebox{6pt}{\ldots}%
\end{picture}$$%
\vspace*{1ex}
\end{definition}

\noindent The correctness of our reduction
 is established by the next lemma.

\begin{lemma}\label{l-beta-gamma}
Let\/ $A_{B,\cal{S}}$ be the symbolic heap given
 by Definition~\ref{defn:3part_to_sat}.
 Then we have the Presburger equivalence
\[
\beta(\widetilde{A}_{B,{\cal S}},\widetilde{B}_{B,{\cal S}})
\equiv \gamma(A_{B,{\cal S}})\ .
\]%
\end{lemma}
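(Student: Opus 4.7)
The plan is to unfold all three formulas per Definitions~\ref{defn:beta} and~\ref{defn:gamma} and verify directly that the resulting Presburger formulas coincide up to logical equivalence. The first step is to observe that both $\widetilde{A}_{B,\mathcal{S}}$ and $\widetilde{B}_{B,\mathcal{S}}$ are $\mapsto$-free; consequently, the last two conjuncts in the definition of $\beta$, which range over $\mapsto$ atoms of the two operands, are empty conjunctions, so
\[
\beta(\widetilde{A}_{B,\mathcal{S}},\widetilde{B}_{B,\mathcal{S}}) \;\equiv\; \gamma(\widetilde{A}_{B,\mathcal{S}}) \wedge \gamma(\widetilde{B}_{B,\mathcal{S}})\ .
\]

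Next, I would write out the three $\gamma$-formulas explicitly. The key observation is that, after identifying the delimiter arrays $\relarray{d_i}{0}{0}$ which appear in both $\widetilde{A}_{B,\mathcal{S}}$ and $\widetilde{B}_{B,\mathcal{S}}$, the spatial atoms of $A_{B,\mathcal{S}}$ are exactly the union of those of $\widetilde{A}_{B,\mathcal{S}}$ and $\widetilde{B}_{B,\mathcal{S}}$. Hence the pairwise non-overlap clauses in $\gamma(A_{B,\mathcal{S}})$ split into three families: $d_i$ versus $d_j$ (supplied by either $\gamma(\widetilde{A}_{B,\mathcal{S}})$ or $\gamma(\widetilde{B}_{B,\mathcal{S}})$); $d_i$ versus $\relarray{a_j}{1}{k_j}$ (supplied by $\gamma(\widetilde{B}_{B,\mathcal{S}})$); and $\relarray{a_i}{1}{k_i}$ versus $\relarray{a_j}{1}{k_j}$ (supplied by $\gamma(\widetilde{B}_{B,\mathcal{S}})$). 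The endpoint-ordering conjuncts ($d_i \leq d_i$ trivially, and $a_j + 1 \leq a_j + k_j$) match up in the same fashion.

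It then remains to match the pure parts. For the forward direction, the pure part of $\gamma(\widetilde{A}_{B,\mathcal{S}})$ contains $\bigwedge_{i=1}^{m}(d_{i+1} = d_i + B + 1)$, which entails $\gamma(\widetilde{B}_{B,\mathcal{S}})$'s weaker $d_{i+1} > d_i$; together with $d_1 \leq a_j$ and $a_j + k_j < d_{m+1}$ from $\gamma(\widetilde{B}_{B,\mathcal{S}})$, this reconstructs exactly the pure part of $\gamma(A_{B,\mathcal{S}})$. Conversely, the pure part of $\gamma(A_{B,\mathcal{S}})$ implies both of the other pure parts, and, as already argued, the disjointness and endpoint clauses are syntactically present on both sides.

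The main obstacle is not conceptual but bookkeeping; the only subtlety worth flagging is that $\beta(\widetilde{A}_{B,\mathcal{S}},\widetilde{B}_{B,\mathcal{S}})$ imposes disjointness only \emph{within} each operand (via the individual $\gamma$'s) and not \emph{between} them, so the duplicated delimiter arrays in $\widetilde{A}_{B,\mathcal{S}}$ and $\widetilde{B}_{B,\mathcal{S}}$ are not forced to be mutually disjoint. This is exactly consistent with $A_{B,\mathcal{S}}$ listing each delimiter just once, so no spurious constraints arise on either side and the equivalence holds as stated.
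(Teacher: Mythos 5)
Your proof is correct and follows the same route the paper intends: the paper's own proof is only the one-line remark that the claim ``follows from Defns.~\ref{d-3part-to-biabd},~\ref{defn:beta} and~\ref{defn:gamma}'', and your unfolding supplies exactly the missing bookkeeping — the $\mapsto$-related conjuncts of $\beta$ vanish, the spatial atoms of $A_{B,\cal S}$ are the union of those of $\widetilde{A}_{B,\cal S}$ and $\widetilde{B}_{B,\cal S}$ so the disjointness and well-definedness clauses coincide, and the pure parts entail each other via $d_{i+1}=d_i+B+1 \Rightarrow d_{i+1}>d_i$. Your closing observation that $\beta$ imposes no cross-operand disjointness (so the shared delimiters cause no inconsistency) is exactly the right subtlety to flag.
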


\begin{proof}(Sketch)
Follows from Defns.~\ref{d-3part-to-biabd},~\ref{defn:beta} and~\ref{defn:gamma}.
\end{proof}

\begin{theorem}\label{t-biadjunction-NP-hard}
 The biabduction problem for $\ASL$ is $\NP$-hard, even for problem
 instances \mbox{$(A,B)$} such that $A,B$ are satisfiable,
 quantifier-free, \mbox{$\mapsto$-}free and in two-variable form.
\end{theorem}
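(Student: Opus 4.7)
The plan is to reduce from the 3-partition problem using the symbolic heaps $\widetilde{A}_{B,\cal{S}}$ and $\widetilde{B}_{B,\cal{S}}$ constructed in Definition~\ref{d-3part-to-biabd}. These formulas are already shown to be satisfiable, quantifier-free, $\mapsto$-free, and in two-variable form, so they satisfy the syntactic side-conditions demanded by the theorem statement. The reduction itself is manifestly polynomial-time, since both formulas have size linear in the encoding of $(B,\cal{S})$.

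The heart of the argument is a chain of equivalences between the existence of a biabduction solution for $(\widetilde{A}_{B,\cal{S}}, \widetilde{B}_{B,\cal{S}})$ and the existence of a complete 3-partition of $\cal{S}$ with respect to $B$. I would compose previously established results as follows:
\[
\begin{array}{rl}
& (\widetilde{A}_{B,\cal{S}}, \widetilde{B}_{B,\cal{S}}) \text{ has a biabduction solution} \\
\iff & \beta(\widetilde{A}_{B,\cal{S}}, \widetilde{B}_{B,\cal{S}}) \text{ is satisfiable in } \pres \\
\iff & \gamma(A_{B,\cal{S}}) \text{ is satisfiable in } \pres \\
\iff & A_{B,\cal{S}} \text{ is satisfiable in } \ASL \\
\iff & (B,\cal{S}) \text{ admits a complete 3-partition.}
\end{array}
\]
The first equivalence combines Proposition~\ref{prop:solution-to-beta-sat} (forward direction) with Theorems~\ref{thm:beta-sat-implies-seed} and~\ref{thm:seed-implies-solution} (backward direction, which together turn $\beta$-satisfiability into a solution seed and thence into an actual $(X,Y)$). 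The second equivalence is immediate from Lemma~\ref{l-beta-gamma}. The third is Lemma~\ref{lem:gamma}, and the fourth is Lemma~\ref{lem:3part_to_sat}. Combined with the polynomial size of the reduction and the known $\NP$-hardness of 3-partition, this yields $\NP$-hardness of biabduction under all the stated restrictions.

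The main obstacle, in my view, is essentially bookkeeping rather than mathematical: verifying that the biabduction-to-$\beta$ equivalence is genuinely bidirectional for the specific class of problem instances considered, and that the two-variable, $\mapsto$-free form is preserved throughout the reduction. Both are already addressed by Definition~\ref{d-3part-to-biabd} and the surrounding lemmas, so in practice the proof reduces to citing Lemma~\ref{l-beta-gamma} together with the composition diagram in Figure~\ref{fig:biabd-overview}, and concluding that the reduction from 3-partition that worked for satisfiability (Theorem~\ref{thm:sat-NP-hard}) transfers to biabduction essentially unchanged.
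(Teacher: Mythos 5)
Your proposal is correct and follows essentially the same route as the paper: a polynomial reduction from 3-partition via the instances $(\widetilde{A}_{B,\cal{S}}, \widetilde{B}_{B,\cal{S}})$ of Definition~\ref{d-3part-to-biabd}, with the identical chain of equivalences citing Proposition~\ref{prop:solution-to-beta-sat}, Theorems~\ref{thm:beta-sat-implies-seed} and~\ref{thm:seed-implies-solution}, Lemma~\ref{l-beta-gamma}, Lemma~\ref{lem:gamma}, and Lemma~\ref{lem:3part_to_sat}. The only difference is the direction in which you write the chain, which is immaterial.
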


\begin{proof}
By reduction from the \mbox{$3$}-partition problem (see
Section~\ref{sec:satisfiability}). Given an instance $(B,\cal{S})$ of
this problem, let $A_{B,{\cal S}}$, $\widetilde{A}_{B,{\cal S}}$, and $\widetilde{B}_{B,{\cal S}}$ be the symbolic heaps given by
Defns.~\ref{defn:3part_to_sat} and~\ref{d-3part-to-biabd}.  Note that $\widetilde{A}_{B,{\cal S}},
\widetilde{B}_{B,{\cal S}}$ are satisfiable, quantifier- and $\mapsto$-free,
 and in two variable form.  Then we have
\[\begin{array}{@{}c@{\;}l@{}l@{}}
& \mbox{$\exists$ complete 3-partition on~$\cal{S}$\/ (w.r.t.~$B$)} \\
\iff & \mbox{$A_{B,{\cal S}}$ is satisfiable} & \mbox{(Lemma~\ref{lem:3part_to_sat})} \\
\iff & \mbox{$\gamma(A_{B,{\cal S}})$ is satisfiable} & \mbox{(Lemma~\ref{lem:gamma})} \\
\iff &  \mbox{$\beta(\widetilde{A}_{B,{\cal S}},\widetilde{B}_{B,{\cal S}})$ is satisfiable}
& \mbox{(Lemma~\ref{l-beta-gamma})} \\
\iff & \mbox{$\exists$ biabduction solution for $(\widetilde{A}_{B,{\cal S}}, \widetilde{B}_{B,{\cal S}})$}
\\ & \mbox{(Prop.~\ref{prop:solution-to-beta-sat} / Thm.~\ref{thm:beta-sat-implies-seed} / Thm.~\ref{thm:seed-implies-solution})}
\end{array}\]
This completes the reduction.
\end{proof}

\subsection{Biabduction for $\ASL$ with quantifiers}
\label{sec:quantified-biabduction}

Here we show two complementary results about biabduction in the case where $B$ contains existential quantifiers.  First, we show that if the quantifiers are appropriately restricted, then the biabduction problem is equivalent to the quantifier-free case (and thus $\NP$-solvable). At the same time, if \emph{no} restrictions are placed on the quantifiers, then the problem becomes $\Pi^P_2$-hard in Stockmeyer's \emph{polynomial-time hierarchy}~\cite{Stockmeyer:77}.

\begin{proposition}
\label{prop:biabd_qf}
Let $A$ be quantifier-free, and let $B$ be such that no variable appearing in the RHS of a $\mapsto$ formula is existentially bound.
Then an instance $(A,B)$ of the biabduction problem for $\ASL$ has a solution
if and only if $(A,\qf{B})$ has a solution.
\end{proposition}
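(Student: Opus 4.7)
The plan is to prove each direction separately. For the backward direction, I would observe that $\qf{B} \models B$ (since $B$ is obtained from $\qf{B}$ by existentially quantifying $\vec{z}$), and this lifts by monotonicity of $*$ to $\qf{B} * Y' \models B * Y'$. Hence any solution $(X', Y')$ for $(A, \qf{B})$ is immediately a solution for $(A, B)$.

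For the forward direction, I would use the characterisation established earlier in this section: by Theorems~\ref{thm:beta-sat-implies-seed} and~\ref{thm:seed-implies-solution}, the problem $(A, \qf{B})$ has a solution whenever $\beta(A, \qf{B})$ is satisfiable (treating the variables $\vec{z}$ as free). So, starting from a solution $(X, Y)$ for $(A, B)$, it suffices to exhibit a stack $s^*$ with $s^* \modelsPA \beta(A, \qf{B})$.

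The construction is a value-modification trick. Given a model $(s_0, h_0) \models A * X$, I would choose a value $K \in \nat$ different from $s_0(w_j)$ for every pointer $v_j \mapsto w_j$ of $\qf{B}$, and define $h^*$ by overwriting every cell lying in an array of $A$ with $K$. Since $\absarray{\cdot}{\cdot}$ places no constraint on values, and the domain of $X$ is disjoint from that of $A$'s arrays, the modified model still satisfies $(s_0, h^*) \models A * X$. Applying the entailment $A * X \models B * Y$ then yields witnesses $\vec{m}^*$ such that $s^* \defeq s_0[\vec{z} \mapsto \vec{m}^*]$ satisfies $\qf{B} * Y$ on $h^*$. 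The restriction on $B$ enters crucially here: since no $w_j$ is existentially bound, $s^*(w_j) = s_0(w_j)$, so the cell at $s^*(v_j)$ must hold the value $s_0(w_j) \neq K$, which forces $s^*(v_j)$ outside every array of $A$. This is exactly the third conjunct of $\beta(A, \qf{B})$; the fourth conjunct (pointer compatibility) follows similarly, because any collision of $s^*(v_j)$ with an $A$-pointer address $t_k$ forces $s_0(u_k) = s_0(w_j)$. The conjuncts $\gamma(A)$ and $\gamma(\qf{B})$ follow from Lemma~\ref{lem:gamma} applied to the respective submodels.

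The main obstacle is that the naive attempt of reusing $(X, Y)$ directly as a solution for $(A, \qf{B})$ fails: different models of $A * X$ may require different existential witnesses for $\vec{z}$, so $s, h \models B * Y$ does not in general imply $s, h \models \qf{B} * Y$ for the original stack $s$. The value-modification trick sidesteps this by producing a specific model in which the only admissible witnesses $\vec{m}^*$ are forced to place $\qf{B}$'s pointers outside $A$'s arrays, matching exactly the structural requirement of $\beta$.
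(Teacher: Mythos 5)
Your proof is correct, but the forward direction takes a genuinely different route from the paper's. Both arguments share the same crucial trick --- exploiting the value-insensitivity of $\absarray{\cdot}{\cdot}$ to repopulate $A$'s arrays with a value $K$ distinct from the (necessarily unquantified) data values $w_j$ of $B$'s pointers, so that any witness for $\vec{z}$ is forced to place those pointers outside $A$'s arrays. Where you diverge is in how the conclusion is extracted: you show that the resulting stack $s^*$ satisfies $\beta(A,\qf{B})$ and then invoke the completeness chain of Theorems~\ref{thm:beta-sat-implies-seed} and~\ref{thm:seed-implies-solution} to obtain \emph{some} solution for $(A,\qf{B})$, whereas the paper constructs an explicit solution $(X',Y)$ by conjoining onto $X$ the equalities $x = s'(x)$ for all relevant variables, and then re-verifies $A*X' \models \qf{B}*Y$ directly (using Lemma~\ref{lem:semantics} to equate heap domains and a case analysis on how a clashing pointer address is covered in $A*X$). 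Your route is more modular and shorter given the Section~\ref{sec:qfree-biabduction} machinery --- in particular, since $\beta(A,\qf{B})$ never mentions $Y$, your $K$ need only avoid the data values of $\qf{B}$'s pointers rather than those of $Q*Y$ as in the paper, and you avoid redoing the entailment check; the price is that the solution you obtain is whatever the seed-based construction emits, while the paper's proof exhibits a solution that visibly extends the given $(X,Y)$. The backward direction is identical to the paper's. Two details worth making explicit if you write this up fully: rewriting $(\exists\vec{z}.\,\qf{B})*Y$ as $\exists\vec{z}.(\qf{B}*Y)$ uses the convention that $\vec{z}$ is disjoint from the variables of $Y$, and preservation of $A*X$ under the overwrite also relies on $A$'s own pointers being domain-disjoint from $A$'s arrays (not just $X$), which holds by the semantics of $*$.
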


\begin{proof}(Sketch)
Let $B = \exists\vec{z}.\ Q$, where $Q=\qf{B}$ is quantifier-free. The $(\Leftarrow)$ direction is trivial.  For the ($\Rightarrow$) direction, suppose $A * X$ is satisfiable and $A * X \models (\exists\vec{z}.\ Q) * Y$. Since the free variables in $Y$ are disjoint from $\vec{z}$, this can be rewritten as $A * X \models \exists\vec{z}.\ (Q * Y)$.  There is a stack-heap pair $(s,h)$ such that $s,h \models A*X$ and, moreover, we may choose $h$ such that $h(x) \neq s(w)$ for all formulas of the form $\psto{v}{w}$ occurring in $Q * Y$, and for all $x$ such that $\absarray{a}{b}$ occurs in $A * X$ and $s(a) \leq s(x) \leq s(b)$.

Now, since $A * X \models (\exists\vec{z}.\ Q) * Y$, we get $s,h \models \exists\vec{z}.\ (Q * Y)$, meaning that $s[\vec{z} \mapsto \vec{m}], h \models Q * Y$ for some $\vec{m}$.
We define an extension of the symbolic heap $X$ by:
\[
X' \defeq \big(\textstyle\bigwedge_{x \in \fv{A,X,Q,Y}} x =s[\vec{z}\mapsto \vec{m}](x) \big) * X
\]
We then verify that $(X',Y)$ is a solution for $(A,Q)$. Our choice of $h$ is crucial in verifying that $A * X \models Q * Y$.
\end{proof}

The construction of $h$ in the proof of Prop.~\ref{prop:biabd_qf} affords some insight into the reasons for the restrictions on our existential quantifiers: the contents of the arrays in $h$ are chosen different to the data values occurring in the $\mapsto$-formulas in $B$.  If any such values are quantified, this may become impossible.  Indeed, $X = Y = \emp$ is a trivial biabduction solution for $\absarray{x}{x} * X \models (\exists y.\ x \mapsto y) * Y$, but no solution exists if we remove the quantifier.

In order to obtain the $\Pi^P_2$ lower bound for biabduction with unrestricted quantifiers,
 we exhibit a reduction from the following \emph{colourability} problem, from~\cite{Ajtai-etal:00}.

\begin{problem}
{$2$-round $3$-colourability problem}
Given an undirected graph \mbox{$G=(V,E)$} with $n$~vertices
 $v_1$, \dots, $v_k$, $v_{k+1}$, \dots $v_n$,
 let $v_1$, $v_2$, \dots, $v_k$ be its {\em leaves}.
 The 2-round 3-colourability problem is to decide
 whether every \mbox{$3$-}colouring of the leaves can be extended to a
\mbox{$3$-}colouring of the whole graph $G$,
 such that no two adjacent vertices share the same colour.
\end{problem}

\begin{definition}
\label{d-colour-biabduct}
Let \mbox{$G=(V,E)$} be an undirected graph with $n$~vertices and $k$~leaves. We define a pair of symbolic heaps, intended to simulate the colourability game on~$G$: \ \ $A_{G}$ will encode an arbitrary \mbox{$3$-}colouring of the leaves, and\/ $B_{G}$ will encode a perfect \mbox{$3$-}colouring of the whole~$G$.

We use $c_{i,1}$ to denote one of the colours, $1$, $2$, or~$3$, the vertex~$v_i$ is marked by. We mark also each edge \mbox{$(v_i,v_j)$} by $\widetilde{c_{ij}}$, the colour ``complementary'' to $c_{i,1}$ and\/ $c_{j,1}$.

As for the leaves\/~$v_i$, we introduce \mbox{$k$} distinct numbers
$d_1, \ldots, d_k$ so that the value $c_i$ stored in the
location~$d_i$ can be used subsequently to identify the colour
$c_{i,1}$ marking $v_i$, e.g., with the help of \mbox{$(c_{i,1}-1 \equiv c_i\ (\bmod {3}))$}~.

To encode the fact that no two adjacent vertices $v_i$ and $v_j$ share the same colour, we use $c_{i,1}$, $c_{j,1}$, and $\widetilde{c_{ij}}$ as the addresses for three consecutive cells within a memory chunk of length~$3$ given by $\relarray{e_{ij}}{1}{3}$, which forces these colours to form a {\em permutation} of \mbox{$(1,2,3)$}.
 (The base-offset addresses $e_{ij}$ are chosen to ensure
 that all the arrays in question are disjoint.)

\noindent Concretely, we define $A_G$ to be the following symbolic heap:
\[ \textstyle\bigsepstar_{i=1}^{k} \relarray{d_i}{1}{1} *
\textstyle\bigsepstar_{(v_i,v_j)\in E} 
\,\relarray{e_{ij}}{1}{3}\ .\]%
\noindent We define $B_G$ as follows:
\[\begin{array}{@{}l} \exists \vec{z}.\ \big(
\textstyle\bigwedge_{i=1}^{n}(1\leq{c_{i,1}}\leq 3) \wedge
\textstyle\bigwedge_{(v_i,v_j)\in E}(1\leq \widetilde{c_{ij}}\leq 3)
\\[1ex]
 \wedge\ \
\textstyle\bigwedge_{i=1}^{k}\,(c_{i,1}-1 \equiv c_i\ (\bmod {3}))
\colon \\[1ex] 
 \textstyle\bigsepstar_{i=1}^{k} d_i \mapsto c_i \ *
 \textstyle\bigsepstar_{(v_i,v_j)\in E} 
 \relarray{e_{ij}}{c_{i,1}}{c_{i,1}}
\\[2ex]
 * \textstyle\bigsepstar_{(v_i,v_j)\in E}\,
   \relarray{e_{ij}}{c_{j,1}}{c_{j,1}}
   *\relarray{e_{ij}}{\widetilde{c_{ij}}}{\widetilde{c_{ij}}} \big).
\end{array}\]%
 where the existentially quantified variables $\vec{z}$
 are all variables occurring in\/~$B_{G}$ that are not mentioned
 explicitly in\/~$A_{G}$.
\end{definition}

\begin{lemma}\label{l-colour-biabduct}
Let $G$ be a
\mbox{$2$-}round \mbox{$3$-}colouring instance. The biabduction problem
\mbox{$(A_G,B_G)$} has a solution iff there is a winning
strategy for 
 colouring~$G$, where $A_G$ and $B_G$ are
 the symbolic heaps given by Defn.~\ref{d-colour-biabduct}.  
\end{lemma}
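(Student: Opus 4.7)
Both directions turn on the same structural observation: the size-3 arrays $\relarray{e_{ij}}{1}{3}$ appearing in $A_G$ force any triple of disjoint single-cell arrays inside them (as required by $B_G$ for each edge $(v_i,v_j)\in E$) to occupy three distinct offsets in $\{1,2,3\}$; hence $c_{i,1}$, $c_{j,1}$ and $\widetilde{c_{ij}}$ must form a permutation of $\{1,2,3\}$, and in particular $c_{i,1}\neq c_{j,1}$. Combined with the modular pure constraint, the colour of each leaf is recoverable from the contents of its cell.

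For the $(\Leftarrow)$ direction I would take $X = Y = \emp$ and show $A_G\models B_G$ (satisfiability of $A_G*X=A_G$ is immediate). Given any $(s,h)\models A_G$, read off a leaf colouring $\chi$ from the contents of the leaf cells, apply the assumed winning strategy to extend $\chi$ to a proper $3$-colouring $\hat\chi$ of $G$, and use $\hat\chi$ to witness the existentials of $B_G$: take $c_{i,1}:=\hat\chi(v_i)$ for every vertex, $\widetilde{c_{ij}}$ to be the unique third colour for each edge, and $c_i$ so as to match the heap contents at each leaf. The heap cells required by $B_G$ then match exactly those already present in $h$ via $A_G$, and all the pure constraints of $B_G$ hold by construction of the witnesses.

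For the $(\Rightarrow)$ direction, let $(X,Y)$ be a biabduction solution and let $\chi$ be an arbitrary $3$-colouring of the leaves. Pick some $(s_0,h_0)\models A_G*X$, which exists by the satisfiability condition, and obtain $h_1$ by overwriting the value stored at each leaf cell with a natural number congruent to $\chi(v_i)-1$ modulo~$3$. The key technical step is verifying that $(s_0,h_1)\models A_G*X$ still holds: the only atoms of $A_G*X$ whose footprint contains a leaf cell are the leaf arrays in $A_G$ (the spatial part of $X$ is $*$-disjoint from that of $A_G$, and the pure parts depend only on the stack, which is unchanged), and Lemma~\ref{lem:semantics} together with the semantic clause for $\mathsf{array}$ tells us that quantifier-free $\ASL$ formulas are sensitive only to the footprint of the heap, not its contents. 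By the entailment $A_G*X\models B_G*Y$ we then obtain $(s_0,h_1)\models B_G*Y$, which yields witnesses $c_i,c_{i,1},\widetilde{c_{ij}}$ for the existentials of $B_G$. The $\mapsto$-atom for each leaf combined with the modular constraint forces $c_{i,1}=\chi(v_i)$, and the permutation observation above ensures $c_{i,1}\neq c_{j,1}$ for every edge; thus $v_i\mapsto c_{i,1}$ is a proper $3$-colouring of $G$ extending $\chi$, as required.

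The hardest part will be the heap-surgery argument in the $(\Rightarrow)$ direction: one must be sure that no part of the biabduction solution $(X,Y)$ can implicitly pin down the contents of the leaf cells, as this would artificially exclude ``bad'' leaf colourings from the analysis. This is ruled out by the combination of $*$-disjointness of $X$ from $A_G$, the fact that pure $\ASL$ formulas constrain only the stack, and the content-insensitivity of arrays; $Y$ may of course enforce specific heap contents on the right-hand side, but this is exactly what lets us read the extended colouring back out of the modified model $(s_0,h_1)$.
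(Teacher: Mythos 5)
Your proof is correct and takes essentially the same route as the paper's: the $(\Leftarrow)$ direction is the paper's argument that a winning strategy yields $A_G\models B_G$ (so $X=Y=\emp$ works), and your $(\Rightarrow)$ direction is precisely the paper's heap-surgery step --- overwrite the leaf cells to encode an arbitrary leaf colouring, note that $A_G*X$ remains satisfied because those cells lie only in the content-insensitive leaf arrays of $A_G$ and outside the footprint of $X$, and extract a proper extension of the colouring from the witnesses of $B_G$'s existentials via the permutation-of-$\{1,2,3\}$ observation. (One small caution: quantifier-free $\ASL$ formulas are not in general insensitive to heap contents --- $\mapsto$-atoms are not --- but your argument only needs content-insensitivity of the array atoms covering the modified cells, which is also all the paper uses.)
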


\begin{theorem}\label{t-biabduct-P2-hard} The biabduction problem
\mbox{$(A,B)$} for $\ASL$ is \mbox{$\Pi_2^P$-}hard, even if $A$ is
quantifier-free and \mbox{$\mapsto$-}free.
\end{theorem}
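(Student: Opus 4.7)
The plan is to finish the reduction from the \mbox{$2$-round $3$-colourability} problem that has been set up in Defn.~\ref{d-colour-biabduct} and Lemma~\ref{l-colour-biabduct}. The $2$-round $3$-colourability problem is known to be \mbox{$\Pi_2^P$-complete} by the result of Ajtai et al.~\cite{Ajtai-etal:00}, so it suffices to show that the map \mbox{$G \mapsto (A_G, B_G)$} is a polynomial-time many-one reduction to biabduction, and that the instances it produces satisfy the syntactic restrictions stated in the theorem.

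First I would verify the syntactic side conditions. Inspecting Defn.~\ref{d-colour-biabduct}, the formula $A_G$ is a pure \mbox{$*$-}conjunction of \textsf{array} atoms over the variables $d_i$ and $e_{ij}$, with no pure constraints and no quantifiers; in particular it is quantifier-free and \mbox{$\mapsto$-}free, as required. I would also confirm that the sizes of $A_G$ and $B_G$ are polynomial in $|V|+|E|$: the number of fresh variables $c_{i,1}$, $\widetilde{c_{ij}}$, $c_i$, $d_i$, $e_{ij}$ is linear, the pure part of $B_G$ is a linear conjunction of bounded-size difference/mod constraints (with the mod-$3$ constraints expressible in Presburger using a constant-size existential), and the spatial parts consist of \mbox{$O(|V|+|E|)$} atoms.

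Having done this, the correctness of the reduction is immediate from Lemma~\ref{l-colour-biabduct}: $G$ admits a winning strategy for the $2$-round colouring game iff $(A_G, B_G)$ admits a biabduction solution. Chaining this equivalence with the \mbox{$\Pi_2^P$-hardness} of the colourability problem yields \mbox{$\Pi_2^P$-hardness} of biabduction under the claimed restrictions.

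The main obstacle is really hidden inside Lemma~\ref{l-colour-biabduct}, which we are entitled to assume; once assumed, the theorem is just an assembly step. For the record, the intuition behind that lemma is that in any model of $A_G * X$ the antiframe $X$ must commit to concrete values $c_i \in \{0,1,2\}$ stored at each $d_i$ (reflecting the adversary's colouring of the leaves), while the existential quantifiers $\vec z$ in $B_G$ simulate the prover's extension to a full $3$-colouring: the requirement that \mbox{$\relarray{e_{ij}}{c_{i,1}}{c_{i,1}}$}, \mbox{$\relarray{e_{ij}}{c_{j,1}}{c_{j,1}}$}, and \mbox{$\relarray{e_{ij}}{\widetilde{c_{ij}}}{\widetilde{c_{ij}}}$} be \mbox{$*$-disjoint} sub-arrays of the three-cell block \mbox{$\relarray{e_{ij}}{1}{3}$} in $A_G$ forces $(c_{i,1}, c_{j,1}, \widetilde{c_{ij}})$ to be a permutation of $(1,2,3)$, which in turn forces $c_{i,1} \ne c_{j,1}$ for each edge. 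Thus existence of a biabduction witness for every choice of $c_i$ (i.e., for every leaf colouring) corresponds exactly to the $\Pi_2^P$-shaped quantifier alternation of the colouring game, completing the reduction.
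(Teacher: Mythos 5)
Your proposal is correct and follows essentially the same route as the paper: the paper's proof of Theorem~\ref{t-biabduct-P2-hard} consists precisely of invoking Lemma~\ref{l-colour-biabduct} together with the $\Pi_2^P$-hardness of the $2$-round $3$-colourability problem from~\cite{Ajtai-etal:00}. Your additional checks that $A_G$ is quantifier- and $\mapsto$-free and that the map $G \mapsto (A_G,B_G)$ is polynomial-time are left implicit in the paper but are exactly the right details to make explicit.
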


\begin{proof}
Follows from Lemma~\ref{l-colour-biabduct}.
\end{proof}

\section{Entailment}
\label{sec:entailment}

In this section, we investigate the \emph{entailment} problem for $\ASL$.  We establish an upper bound of $\Pi^{\mathrm{EXP}}_1$ in the \emph{weak $\EXP$ hierarchy}~\cite{Hartmanis85} via an encoding into $\Pi^0_2$ $\pres$, and a lower bound of $\Pi^P_2$ in the \emph{polynomial-time hierarchy}~\cite{Stockmeyer:77}. Moreover, for quantifier-free entailments, we show that the problem becomes $\CoNP$-complete.

\begin{problem}{Entailment problem for $\ASL$}
Given symbolic heaps $A$ and $B$, decide whether $A \models B$.

As in the biabduction problem, $A$ may be considered quantifier-free, but the existential quantifiers in $B$ may not mention any variable appearing in the RHS of a $\mapsto$-formula.
\end{problem}

The intuition underlying our encoding of entailment into Presburger arithmetic is as follows: There exists a countermodel for $A \models B$ iff there exists a stack $s$ that induces a model for $A$ (captured by $\gamma(A)$ from Defn.~\ref{defn:gamma} / Lemma~\ref{lem:gamma}) and, for every instantiation of the existentially quantified variables in $B$ (say $\vec{z}$), one of the following holds under $s$:
\begin{enumerate}
\item the quantifier-free body $\qf{B}$ of $B$ becomes unsatisfiable (captured by $\neg\gamma(\qf{B})$); or
\item some heap location is covered by an array or pointer in $A$, but not by any array or pointer in $B$, or vice versa; or
\item the LHS of some pointer in B is covered by an array in $A$ (and therefore we can choose the contents of the array different to the ``correct'' data contents of the pointer); or
\item some pointer in $B$ is covered by a pointer in $A$, but their data contents disagree.
\end{enumerate}

Similar to Prop.~\ref{prop:biabd_qf}, this intuition also explains the reason for our restriction on existential quantification in the entailment problem: if we are allowed to quantify over the RHS of $\mapsto$ formulas, then item 3 above might or might not be sufficient to construct a countermodel.   For example, there is a countermodel for $\absarray{x}{x} \models \exists y.\ y \leq 3 : x \mapsto y$, and for $\absarray{x}{x} \models x \mapsto y$,  but not for $\absarray{x}{x} \models \exists y.\ \psto{x}{y}$.

\begin{definition}
\label{defn:entail_phi}
Let $A$ and $B$ be $\mapsto$-free symbolic heaps, with spatial parts as follows:
\[\begin{array}{rl}
A: & \absarray{a_1}{b_1} * \ldots * \absarray{a_n}{b_n} \\
B: & \absarray{c_1}{d_1} * \ldots * \absarray{c_m}{d_m}
\end{array}\]
Then we define the formula $\phi(A,B)$ of $\pres$ to be
\[
\exists x.\ \textstyle\bigvee_{i=1}^{n}a_i \leq x \leq b_i \wedge \bigwedge_{j=1}^{m}(x < c_j) \vee (x > d_j)\ ,
\]
where $x$ is a fresh variable. We lift $\phi(-,-)$ to arbitrary symbolic heaps by $\phi(A,B) = \phi(\abstr{\qf{A}},\abstr{\qf{B}})$, i.e. by ignoring quantifiers and abstracting pointers to arrays using $\abstr{-}$ from Defn.~\ref{defn:abstraction}.
\end{definition}

\begin{lemma}\label{l-kill-phi}
 We can rewrite \mbox{$\phi(A,B)$}
 as a quantifier-free formula at only polynomial cost.
\end{lemma}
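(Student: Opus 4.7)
The plan is to explicitly eliminate the single existential quantifier in $\phi(A,B)$ by showing that any satisfying $x$ can, without loss of generality, be chosen from a small set of candidate terms built from the endpoints of the arrays in $A$ and $B$.

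First, I would push the existential through the outer disjunction, writing
\[
\phi(A,B) \;\equiv\; \bigvee_{i=1}^{n} \exists x.\ (a_i \leq x \leq b_i) \wedge \bigwedge_{j=1}^{m}\bigl((x < c_j) \vee (x > d_j)\bigr)\ .
\]
For each fixed $i$, the inner problem is: does the interval $[a_i,b_i]$ contain a point lying outside every interval $[c_j,d_j]$? The key combinatorial observation is that if such a point exists, then the \emph{least} such point $x_0$ must be either $a_i$ itself or $d_j+1$ for some $j$. Indeed, if $x_0 > a_i$ then $x_0 - 1 \in [a_i,b_i]$ and, by minimality of $x_0$, is covered by some $[c_j,d_j]$; since $x_0$ is not covered, this forces $x_0 - 1 = d_j$.

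Using this, I would define $\chi(t,i) \defeq (a_i \leq t \leq b_i) \wedge \bigwedge_{k=1}^{m}\bigl((t < c_k) \vee (t > d_k)\bigr)$ and claim
\[
\phi(A,B) \;\equiv\; \bigvee_{i=1}^{n}\Bigl(\chi(a_i,i) \;\vee\; \bigvee_{j=1}^{m}\chi(d_j+1,i)\Bigr)\ .
\]
The $(\Leftarrow)$ direction is immediate: each disjunct furnishes a concrete witness for $x$. For $(\Rightarrow)$, one takes the minimal witness $x_0$ guaranteed by $\phi(A,B)$ and argues by the observation above that it equals one of the named candidate terms, so the corresponding $\chi$-disjunct holds. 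Both directions work purely over $\nat$ and use only the ordering of the endpoints, so no further Presburger reasoning is needed.

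Finally, I would tally the size: the right-hand side is a disjunction of $n(m+1)$ formulas $\chi(t,i)$, each a conjunction of $O(m)$ atomic inequalities obtained by substituting a term of size $O(1)$ (relative to $A,B$) into $\chi$. Hence the quantifier-free rewrite has size $O(nm^2)$, which is polynomial in $|A|+|B|$. The main obstacle is the witness-minimality argument justifying the candidate set; once that is in place, everything else is syntactic bookkeeping.
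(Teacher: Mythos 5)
Your proof is correct and follows essentially the same route as the paper: both replace the existential by the finite candidate set $\{a_i\} \cup \{d_j + 1\}$, yielding a polynomial-size disjunction of substitution instances. The only difference is cosmetic --- you justify completeness by taking the \emph{least} uncovered point and showing it must be $a_i$ or $d_j+1$, whereas the paper picks the maximal $d_{j_0}$ below a given witness; both arguments are sound and give the same quantifier-free formula.
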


\begin{definition}
\label{defn:entail_coding}
Let $A$ and $B$ be symbolic heaps with $A$ quantifier-free:
\[\begin{array}{rl}
A: & \Pi : \bigsepstar^n_{i=1}\absarray{a_i}{b_i} * \bigsepstar^k_{i=1}\psto{t_i}{u_i} \\
B: & \exists \vec{z}.\ \Pi' : \bigsepstar^m_{j=1}\absarray{c_j}{d_j} * \bigsepstar^\ell_{j=1}\psto{v_j}{w_j}
\end{array}\]
where the existentially quantified variables $\vec{z}$ are disjoint from all variables in $A$.  We define formulas $\psi_1(A,B)$, $\psi_2(A,B)$ and $\chi(A,B)$ of $\pres$ as follows:
\[\begin{array}{@{}l@{\hspace{0.2cm}}l@{}}
\psi_1(A,B) =  & \bigvee_{i=1}^n\bigvee_{j=1}^{\ell} a_i \leq v_j \leq b_i\ , \\
\psi_2(A,B) = & \bigvee_{i=1}^k\bigvee_{j=1}^{\ell}(t_i=v_j) \wedge (u_i \neq w_j) \mbox{ , and} \\[0.5ex]
\chi(A,B) = & \gamma(A) \wedge \forall\vec{z}.\big(\neg\gamma(\qf{B}) \vee \phi(A,B)\ , \\
&\quad \mathrel{\vee} \phi(B,A) \vee \psi_1(A,B) \vee \psi_2(A,B)\big)
\end{array}\]%
where $\gamma(-)$ is the encoding of satisfiability (\MakeUppercase defn.\nobreakspace \ref {defn:gamma}), and $\phi(-,-)$ is given by Defn.~\ref{defn:entail_phi}.
\end{definition}

\begin{lemma}
\label{lem:entail_coding}
For any instance $(A,B)$ of the $\ASL$ entailment problem above, and for any stack $s$,
\[
s \modelsPA \chi(A,B) \;\iff\; \exists h.\ s,h \models A\ \mbox{ and } s,h\not\models B\ .
\]%
\end{lemma}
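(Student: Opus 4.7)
The plan is a bidirectional argument, showing that at any stack $s$, the Presburger formula $\chi(A,B)$ captures exactly the existence of a heap satisfying $A$ but not $B$.

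For the $(\Leftarrow)$ direction, I would assume $s,h \models A$ and $s,h \not\models B$ and extract each conjunct of $\chi(A,B)$ in turn. The conjunct $\gamma(A)$ is immediate from Lemma~\ref{lem:gamma}. For the universally quantified body, I would fix an arbitrary $\vec{m}$ and set $s' = s[\vec{z}\mapsto\vec{m}]$, then exhibit one of the five disjuncts at $s'$. If $\qf{B}$ is unsatisfiable at $s'$, then $\neg\gamma(\qf{B})$ holds by Lemma~\ref{lem:gamma}. Otherwise, I would analyse why $s',h \not\models \qf{B}$: either $\dom{h}$ (which equals the footprint of $\abstr{A}$ at $s$) disagrees with the footprint of $\abstr{\qf{B}}$ at $s'$, yielding $\phi(A,B)$ or $\phi(B,A)$ depending on the direction of the mismatch; or the footprints agree but some pointer $\psto{v_j}{w_j}$ of $B$ fails. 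In that last case, the overlapping location $s'(v_j)\in\dom{h}$ either sits inside some array $[s(a_i),s(b_i)]$ of $A$ (giving $\psi_1$), or coincides with some pointer cell $s(t_i)$ of $A$ with $h(s(t_i))=s(u_i)\neq s(w_j)$ (giving $\psi_2$). Here I use crucially that $\vec{z}$ does not appear on the RHS of $B$-pointers, so $s'(w_j)=s(w_j)$.

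For the $(\Rightarrow)$ direction, I would start from $s\modelsPA\gamma(A)$ and use Lemma~\ref{lem:gamma} to produce a heap satisfying $A$; but I must choose this heap carefully. Specifically, I would pick $h$ with $s,h\models A$ such that $h(x)\notin\{s(w_1),\ldots,s(w_\ell)\}$ for every $x$ sitting inside some array footprint of $A$. This is possible because the exclusion set is finite (and, by the quantifier restriction, independent of $\vec{m}$), whereas the data values in $A$-arrays are freely assignable by the semantics of $\absarray{-}{-}$. Then for any $\vec{m}$ and $s'=s[\vec{z}\mapsto\vec{m}]$, the disjunction in $\chi$ yields one of five witnesses of failure at $s'$: the cases $\neg\gamma(\qf{B})$, $\phi(A,B)$, $\phi(B,A)$, $\psi_2$ each produce $s',h\not\models \qf{B}$ directly via unsatisfiability, footprint mismatch, or a pointer-value clash at an overlapping cell $t_i=v_j$. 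The case $\psi_1(A,B)$ is where the construction pays off: $s'(v_j)$ lies in some $A$-array, and by construction $h(s'(v_j))\neq s(w_j)=s'(w_j)$, so the $B$-pointer at $v_j$ fails.

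The main obstacle I anticipate is the $(\Rightarrow)$ direction: a \emph{single} heap $h$ must defeat every instantiation $\vec{m}$ of $\vec{z}$ simultaneously. The quantifier restriction on the RHS of $\mapsto$-formulas in $B$ is precisely what makes this feasible, since the forbidden set of array-cell values $\{s(w_1),\dots,s(w_\ell)\}$ is then finite and $\vec{m}$-independent. The remaining work is largely bookkeeping: verifying that $\abstr{-}$ faithfully records footprints (so $\phi$ detects exactly the mismatches needed), and matching each disjunct of $\chi$ to the appropriate violation of $\qf{B}$.
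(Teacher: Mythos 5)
Your proposal is correct and follows essentially the same route as the paper's proof: both directions rest on Lemma~\ref{lem:gamma}, an exhaustive case split over the five disjuncts, and — the key step — choosing the witness heap in the $(\Rightarrow)$ direction so that array cells of $A$ avoid the ($\vec{m}$-independent, thanks to the quantifier restriction) values $s(w_1),\dots,s(w_\ell)$, which is exactly how the paper handles the $\psi_1$ case. The only cosmetic difference is that the paper runs the $(\Rightarrow)$ direction by contradiction (assuming $s,h\models B$) while you argue it directly for each instantiation $\vec{m}$; these are interchangeable.
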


\begin{theorem}
\label{thm:ent_decide}
Entailment for $\ASL$ is in $\Pi^{\mathrm{EXP}}_1$.  If the number of variables in $A,B$ is fixed then the problem is in $\Pi^P_2$, and if $B$ is quantifier-free then the problem is in $\CoNP$.
\end{theorem}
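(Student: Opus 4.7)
The plan is to reduce $\ASL$ entailment to a validity question in Presburger arithmetic via the encoding of Definition~\ref{defn:entail_coding}. By Lemma~\ref{lem:entail_coding}, a countermodel to $A \models B$ exists iff there is a stack $s$ with $s \modelsPA \chi(A,B)$; hence $A \models B$ holds iff the Presburger sentence $\forall \vec{x}.\ \neg \chi(A,B)$ is valid, where $\vec{x}$ collects the free variables of $\chi(A,B)$.

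The next step is to analyse the quantifier structure of this sentence. Inside $\chi(A,B)$, the only genuine quantifier is the $\forall \vec{z}$ over the variables bound existentially in $B$: the subformulas $\gamma(A)$, $\gamma(\qf{B})$, $\psi_1(A,B)$ and $\psi_2(A,B)$ are quantifier-free by inspection, and each $\phi(\cdot,\cdot)$ can be rewritten quantifier-free at polynomial cost by Lemma~\ref{l-kill-phi}. Pushing the outer negation inward, $\forall \vec{x}.\ \neg \chi(A,B)$ is therefore equivalent, up to a polynomial blow-up, to a $\Pi^0_2$ Presburger sentence of shape $\forall \vec{x}.\ \exists \vec{z}.\ \Phi$ with $\Phi$ quantifier-free.

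The three bounds then follow by invoking standard complexity results for such sentences. For the general case, validity of a polynomially-sized $\Pi^0_2$ Presburger sentence lies in $\Pi^{\mathrm{EXP}}_1$ of the weak $\EXP$ hierarchy~\cite{Hartmanis85}, giving the first bound. When the number of free variables of $A$ and $B$ is fixed, the resulting sentence has fixed Presburger dimension, and classical small-model theorems for fixed-dimension Presburger arithmetic yield a polynomial witness for the outer $\forall$; the inner $\exists \vec{z}$ can then be resolved with a single call to an $\NP$ oracle (via Proposition~\ref{prop:sat_NP}), placing the problem in $\Pi^P_2$. Finally, when $B$ is quantifier-free the tuple $\vec{z}$ is empty and $\chi(A,B)$ becomes entirely quantifier-free, so validity of $\forall \vec{x}.\ \neg \chi(A,B)$ reduces to unsatisfiability of an existential Presburger formula, which is in $\CoNP$ by Proposition~\ref{prop:sat_NP}.

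The main obstacle I anticipate is the fixed-variable case: one must verify that bounding the program variables actually bounds the Presburger dimension of the final sentence to a constant, carefully accounting for the fresh witness variable introduced when eliminating quantifiers in $\phi$ (Lemma~\ref{l-kill-phi}) and for the existentially bound $\vec{z}$ (which is itself bounded by the variables of $B$), and then invoke a sufficiently sharp small-model bound so that the inner existential quantifier can indeed be resolved within $\NP$ rather than higher up in the hierarchy.
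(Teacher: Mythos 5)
Your proposal is correct and follows essentially the same route as the paper: reduce $A \models B$ to validity of $\forall \vec{x}.\ \neg\chi(A,B)$ via Lemma~\ref{lem:entail_coding}, flatten the inner $\phi$-subformulas with Lemma~\ref{l-kill-phi} to obtain a $\Pi^0_2$ Presburger sentence, and then read off the three bounds from the known complexity of Presburger validity in the general, fixed-variable, and quantifier-free cases (the paper cites Haase, Gr\"adel, and Stockmeyer respectively for these, where you instead sketch the fixed-dimension small-model argument and appeal to Proposition~\ref{prop:sat_NP}). The concern you flag about the fresh witness variable in $\phi$ is exactly what Lemma~\ref{l-kill-phi} is there to dispose of, so no gap remains.
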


\begin{proof}
Let $A$ and $B$ be symbolic heaps with $A$ quantifier-free.  Letting $\vec{x}$ be a list of all free variables in $A$ and $B$, we observe that $\exists \vec{x}.\ \chi(A,B)$ is a $\Sigma^0_3$ $\pres$ sentence of size polynomial in the size of $A$ and $B$.
By Lemma~\ref{lem:entail_coding}, we have that $\exists \vec{x}.\ \chi(A,B)$ is satisfiable if and only if $A \not\models B$. Therefore, $A \models B$ if and only if the $\Pi^0_3$ $\pres$ sentence $\forall \vec{x}.\ \neg\chi(A,B)$ is satisfiable.

However, according to Lemma~\ref{l-kill-phi}, we can eliminate the existential quantifier from the subformulas $\phi(A,B)$ and $\phi(B,A)$ inside $\chi(A,B)$, still at only polynomial cost.  Writing $\chi'(A,B)$ for the formula so obtained,
$\forall \vec{x}. \neg\chi'(A,B)$ then becomes a $\Pi^0_2$ sentence.

Satisfiability in $\Pi^0_2$ Presburger arithmetic is in $\Pi^{\mathrm{EXP}}_1$~\cite{Haase:14}.  If the set of variables in $A$ and $B$ has fixed size $k$, then the decision sentence above has exactly $k+1$ quantifiers, in which case satisfiability is in $\Pi^P_2$~\cite{Gradel:88}. Finally, if $B$ is quantifier-free, the decision sentence is a $\Pi^0_1$ formula and so can be decided in $\CoNP$ time~\cite{Stockmeyer:77}.
\end{proof}

In order to obtain the $\Pi^P_2$ lower bound for entailment, we exhibit a reduction from the same colourability problem as in Section~\ref{sec:quantified-biabduction}.

\begin{definition}
\label{defn:colour_to_entail} (cf.~Definition~\ref{d-colour-biabduct})
Let \mbox{$G=(V,E)$} be an undirected graph with $n$~vertices and $k$~leaves. To simulate the colourability game on~$G$, we define a pair of \mbox{$\mapsto$-}free symbolic heaps: a quantifier-free \mbox{$A_{G}$}, to encode an arbitrary \mbox{$3$-}colouring of the leaves, and an existentially quantified \mbox{$B_{G}$}, to encode a perfect \mbox{$3$-}colouring of the whole~$G$.

We use $c_{i,1}$ to denote the colour the vertex~$v_i$ is marked by. We mark also each edge \mbox{$(v_i,v_j)$} by
 $\widetilde{c_{ij}}$, ``complementary'' to $c_{i,1}$ and\/ $c_{j,1}$.

We encode the fact that no two adjacent vertices $v_i$ and $v_j$ share the same colour in accordance with Definition~\ref{d-colour-biabduct}. (The numbers $e_{ij}$ are chosen to ensure that all the arrays in question are disjoint.)

\noindent Concretely, we define $A_G$ to be the following symbolic heap:
\[\textstyle\bigwedge_{i=1}^{k}(1\leq{c_{i,1}}\leq 3)\colon
\textstyle\bigsepstar_{(v_i,v_j)\in E}\, \relarray{e_{ij}}{1}{3}\ .\]%

\noindent We define $B_G$ as follows:
\[\begin{array}{@{}l} \exists \vec{z}.\ \big(
\textstyle\bigwedge_{i=1}^{n}(1\leq{c_{i,1}}\leq 3) \wedge
\textstyle\bigwedge_{(v_i,v_j)\in E}(1\leq \widetilde{c_{ij}}\leq 3)
\colon \\[1ex]
 \textstyle\bigsepstar_{(v_i,v_j)\in E}\,
 \relarray{e_{ij}}{c_{i,1}}{c_{i,1}} \\
 * \textstyle\bigsepstar_{(v_i,v_j)\in E}\,
   \relarray{e_{ij}}{c_{j,1}}{c_{j,1}}
   *\relarray{e_{ij}}{\widetilde{c_{ij}}}{\widetilde{c_{ij}}} \big).
\end{array}\]%
 where the existentially quantified variables $\vec{z}$
 are all variables occurring in\/~$B_{G}$ that are not mentioned
 explicitly in\/~$A_{G}$.
\end{definition}

\begin{lemma}
\label{lem:colour_to_entail}
Let $G$ be a 2-round \mbox{3-}colouring instance,
 and let $A_G$ and $B_G$ be the symbolic heaps given
 by Defn.~\ref{defn:colour_to_entail}. Then, we have
\[
A_G \models B_G \;\iff\;
 \exists\ \mbox{winning strategy for colouring $G$.}
\]%
\end{lemma}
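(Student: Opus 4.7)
The plan is to show each direction by reading off the tight correspondence between models of $A_G$ (resp.\ $B_G$) and partial/total $3$-colourings of~$G$. Note first that both formulas are $\mapsto$-free, and that the free variables shared by $A_G$ and $B_G$ are precisely the leaf colours $c_{1,1},\dots,c_{k,1}$ and the edge bases $e_{ij}$, while the internal vertex colours $c_{i,1}$ ($i>k$) and the edge complements $\widetilde{c_{ij}}$ are among the existentially bound~$\vec z$ in $B_G$. Any model $(s,h)$ of $A_G$ thus corresponds to a leaf $3$-colouring $\kappa(v_i) = s(c_{i,1})$ together with a disjoint placement of three-cell blocks indexed by the edges; and any model of $B_G$ further picks $3$-colours for all vertices and edges, and partitions each block $\relarray{e_{ij}}{1}{3}$ into three singletons $\relarray{e_{ij}}{c_{i,1}}{c_{i,1}}$, $\relarray{e_{ij}}{c_{j,1}}{c_{j,1}}$, $\relarray{e_{ij}}{\widetilde{c_{ij}}}{\widetilde{c_{ij}}}$.

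For $(\Leftarrow)$, suppose there is a winning strategy $\sigma$. Given any $(s,h)\models A_G$, apply $\sigma$ to the leaf colouring $\kappa$ induced by $s$ to obtain a proper $3$-colouring of all of $G$; this supplies values $\vec m$ for the existentials $\vec z$, taking each $\widetilde{c_{ij}}$ to be the unique colour not equal to $c_{i,1}$ or $c_{j,1}$. Under $s[\vec z\mapsto\vec m]$, for every edge $(v_i,v_j)$ the three offsets $c_{i,1},c_{j,1},\widetilde{c_{ij}}$ are a permutation of $(1,2,3)$, so the three singleton arrays at those offsets are pairwise disjoint and together cover exactly the block $\relarray{e_{ij}}{1}{3}$ contained in $h$. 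Combining across edges using the disjointness of the $e_{ij}$-blocks in $h$, $(s,h)\models B_G$.

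For $(\Rightarrow)$, suppose $A_G\models B_G$. Given any leaf $3$-colouring $\kappa$, pick pairwise disjoint witnesses for the $e_{ij}$ so that the blocks $\{s(e_{ij})+1,s(e_{ij})+2,s(e_{ij})+3\}$ are pairwise disjoint, and define $s$ accordingly with $s(c_{i,1})=\kappa(v_i)$ for $i\le k$; let $h$ be the union of these blocks. Then $(s,h)\models A_G$, and by hypothesis $(s,h)\models B_G$, yielding witnesses $\vec m$ for $\vec z$. The pure part of $B_G$ places all $c_{i,1},\widetilde{c_{ij}}$ in $\{1,2,3\}$, and the spatial part $*$-conjoins, per edge, three singleton arrays inside $\relarray{e_{ij}}{1}{3}$; disjointness of those three singletons forces their offsets $c_{i,1},c_{j,1},\widetilde{c_{ij}}$ to be distinct, hence a permutation of $(1,2,3)$. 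In particular $s[\vec z\mapsto\vec m](c_{i,1})\neq s[\vec z\mapsto\vec m](c_{j,1})$ for every edge, so $\kappa$ has been extended to a proper $3$-colouring of $G$. Reading the extension off $\vec m$ as a function of $\kappa$ yields a winning strategy.

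The main obstacle is the careful argument in $(\Rightarrow)$ that the spatial conjunction in $B_G$, together with the bounds $1\le c_{i,1},\widetilde{c_{ij}}\le 3$, truly pins down each triple $(c_{i,1},c_{j,1},\widetilde{c_{ij}})$ to a permutation of $(1,2,3)$ on the same heap that satisfies $A_G$; this relies on the precision of array formulas (Lemma~\ref{lem:semantics}) and on the disjointness inherent in $*$, which together force the three singletons to cover the whole $3$-cell block. A minor subtlety is engineering the $s(e_{ij})$ so that the blocks are pairwise disjoint in both directions, which is easily done since the $e_{ij}$ are free variables and the graph is finite.
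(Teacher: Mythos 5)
Your proof is correct and follows essentially the same route as the paper, which simply defers to the detailed proof of the analogous biabduction lemma (Lemma~\ref{l-colour-biabduct}): your $(\Leftarrow)$ direction is its $(b)\Rightarrow(c)$ argument, and your $(\Rightarrow)$ direction is its $(a)\Rightarrow(b)$ argument specialised to entailment (i.e.\ to the case $X=Y=\emp$), made slightly simpler here because the entailment encoding drops the $d_i\mapsto c_i$ pointers and exposes the leaf colours as free variables. Your key step --- that $*$-disjointness of the three singleton arrays inside each $3$-cell block forces $(c_{i,1},c_{j,1},\widetilde{c_{ij}})$ to be a permutation of $(1,2,3)$ --- is exactly the mechanism the paper relies on.
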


\begin{theorem}
\label{t-entail-P2-hard}
 The entailment problem \mbox{$A\models B$} is $\Pi_2^P$-hard,
 even when all variables are bounded by\/~$3$,
 $A$~is quantifier-free, and
 $A,B$ are \mbox{$\mapsto$-}free symbolic heaps
 in two-variable form.
 Moreover, the entailment problem is $\CoNP$-hard even for
 quantifier-free symbolic heaps in two-variable form.
\end{theorem}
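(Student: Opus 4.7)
The plan is to obtain the two hardness bounds separately, each by a short reduction using machinery already in hand.

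For the $\Pi_2^P$ bound I would invoke Lemma~\ref{lem:colour_to_entail} directly. Since the 2-round 3-colourability problem is $\Pi_2^P$-complete~\cite{Ajtai-etal:00}, and the heaps $A_G$ and $B_G$ of Defn.~\ref{defn:colour_to_entail} are polynomial-time constructible from $G$, that lemma already exhibits a reduction from this problem to the entailment $A_G \models B_G$. What remains is the syntactic check that $(A_G, B_G)$ lies in the claimed fragment: by inspection of Defn.~\ref{defn:colour_to_entail}, $A_G$ is quantifier-free, both sides are $\mapsto$-free, all free variables (the colour variables $c_{i,1}$ and $\widetilde{c_{ij}}$) are pinned to the set $\{1,2,3\}$ by difference constraints, and every spatial atom fits one of the shapes $\relarray{k}{1}{3}$ or $\relarray{k}{j}{j}$ (with the $e_{ij}$ treated as constants) permitted by Defn.~\ref{defn:twovar}(b).

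For the $\CoNP$ bound I would reduce from the complement of the satisfiability problem, which by Theorem~\ref{thm:sat-NP-hard} is $\CoNP$-hard already for quantifier-free, $\mapsto$-free heaps in two-variable form. The key observation is that for any symbolic heap $A$ whose spatial part contains an \textsf{array} atom, every model of $A$ has non-empty heap, so $A \models \emp$ iff $A$ is unsatisfiable. Combining this with Lemma~\ref{lem:3part_to_sat}, mapping a 3-partition instance $(B,{\cal S})$ to the entailment $A_{B,{\cal S}} \models \emp$ is a polynomial-time reduction from the complement of 3-partition to entailment, and both sides trivially remain quantifier-free and in two-variable form (with $\emp$ satisfying the restrictions vacuously).

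The only real obstacle is the routine bookkeeping of verifying that the colourability reduction really fits Defn.~\ref{defn:twovar}: one must unfold $1 \leq c_{i,1} \leq 3$ as two difference constraints and match $\relarray{e_{ij}}{c_{i,1}}{c_{i,1}}$ against the form $\relarray{k}{j}{j}$ with $e_{ij}$ treated as a numeric constant. No new conceptual ingredient is required beyond those already developed for Theorems~\ref{thm:sat-NP-hard} and~\ref{t-biabduct-P2-hard}, and the proof itself then reduces to citing the three lemmas above.
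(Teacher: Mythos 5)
Your proposal is correct and matches the paper's proof essentially step for step: the $\Pi_2^P$ bound is obtained by citing Lemma~\ref{lem:colour_to_entail} together with the syntactic inspection of Defn.~\ref{defn:colour_to_entail}, and the $\CoNP$ bound by reducing the complement of 3-partition through $A_{B,{\cal S}}$. The only (immaterial) difference is in the second reduction's right-hand side: the paper uses the unsatisfiable consequent $x < x : \emp$, whereas you use $\emp$ and justify it with the observation that every model of $A_{B,{\cal S}}$ has a non-empty heap--both work equally well.
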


\begin{proof}
 For the general case, Definition~\ref{defn:colour_to_entail} and
 Lemma~\ref{lem:colour_to_entail} establish a reduction from
 the 2-round 3-colourability problem,
 which is $\Pi^P_2$-hard~\cite{Ajtai-etal:00}.

For the quantifier-free case, the upper bound is immediate by Thm.~\ref{thm:ent_decide}.  For the lower bound, consider the entailment
\[
  A_{B,{\cal S}}\models x < x : \emp
\]
where $(B,\cal{S})$ is an instance of the 3-partition problem (see
Section~\ref{sec:satisfiability}) and $A_{B,{\cal S}}$ is the symbolic heap in two-variable form constructed
in Defn.~\ref{defn:3part_to_sat}.  Using Lemma~\ref{lem:3part_to_sat}, this entailment is valid iff there is \emph{no} complete \mbox{$3$-}partition on $\cal{S}$ w.r.t. $B$, which is a $\CoNP$-hard problem.
\end{proof}

In the general case, there is a complexity gap between our upper and lower bounds for entailment: $\Pi^\mathrm{EXP}_1 = \mathsf{coNEXP}^{\NP}$ versus $\Pi^P_2 = \CoNP^{\NP}$, respectively. It seems plausible that the lower bound is at least $\EXPTIME$: however, an encoding of, e.g., $\Pi^2_0$ Presburger arithmetic in $\ASL$ does not seem straightforward, because our pure formulas are conjunctions rather than arbitrary Boolean combinations of atomic Presburger formulas.

Nevertheless, we can detect the
essential difference between the biabduction and entailment
problems for $\ASL$ (at least in the case where the existential quantifiers in $B$ are restricted as described above).  Namely, by Theorem~\ref{t-entail-P2-hard} entailment is still \mbox{$\Pi_2^P$-}hard
whereas, by Props.~\ref{prop:runtime_NP} and~\ref{prop:biabd_qf}, the biabduction problem belongs to~$\NP$.  

\section{Related work}
\label{sec:related}

The literature most closely related to our work in the present paper divides, broadly speaking, into four main categories.

\paragraph{Separation logic over linked list segments.}

Perhaps the most popular and extensively studied part of separation logic is the symbolic heap fragment over linked lists, introduced and shown decidable in~\cite{Berdine-Calcagno-OHearn:04}.  This fragment is essentially the one employed in Facebook's \tool{Infer} tool~\cite{Calcagno-etal:15}.  Here, the pure part of symbolic heaps is a conjunction of simple equalities and disequalities between expressions (typically just variables or the constant $\nil$), while the spatial part admits points-to formulas $E \mapsto E'$, denoting a single pointer in the heap, and list segment formulas of the form $\ls{E}{E'}$, denoting a linked list in the heap from $E$ to $E'$.

Following the initial decidability result, it was shown in~\cite{Cook-etal:11} that satisfiability and entailment in this logic are in $\PTIME$.  The biabduction problem for this fragment and practical approaches to it were first studied in~\cite{Calcagno-etal:11}; in~\cite{Gorogiannis-etal:11} it was shown that the \emph{abduction} problem (where only an ``antiframe'' $X$ is computed) is in fact $\NP$-complete.

We observe that this fragment and our $\ASL$ are largely disjoint: our $\mathsf{array}$s cannot be defined in terms of $\mathsf{ls}$s, or vice versa, while $\ASL$ also employs arithmetic formulas rather than simple (dis)equality constraints. This is also reflected in the differences in their respective complexity bounds.

\paragraph{Separation logic with inductive predicates.}

There has been substantial research interest in (symbolic heap) separation logic over general \emph{inductively defined predicates}~\cite{Brotherston:07}, as opposed to fixed data structures such as lists (or indeed arrays).  Such predicates can be used to describe arbitrary data structures in memory; they might be provided to an analysis by the user, or perhaps inferred automatically (cf.~\cite{Brotherston-Gorogiannis:14,Le-etal:14}).

When arbitrary inductive definitions over symbolic heaps are permitted, the entailment problem is undecidable~\cite{Antonopoulos-etal:14} while satisfiability and even model checking (i.e., deciding whether a given stack-heap pair satisfies a given formula) become $\EXPTIME$-complete (cf.~\cite{Brotherston-etal:14} resp.~\cite{Brotherston-etal:16}).  More tractable fragments can be obtained by restricting the admissible forms of inductive definitions. A fragment in which all definitions have \emph{bounded treewidth}~\cite{Iosif-etal:13} was shown to have a decidable entailment problem by reduction to bounded-treewidth monadic second-order logic; a variant of this fragment, with different restrictions, was similarly shown decidable in~\cite{Tatsuta-Kimura:15}. However, our $\ASL$ cannot be encoded even in the unrestricted fragment, owing to the absence of arithmetic.

Very recently, in~\cite{Gu-Chen-Wu:16}, decidability of satisfiability and entailment was obtained for a fragment of symbolic-heap separation logic with restricted inductive predicates (called ``linearly compositional'') \emph{and} Presburger arithmetic constraints. However, $\ASL$ cannot be encoded in this fragment, because pointers and data variables belong to disjoint sorts, effectively disallowing pointer arithmetic. Moreover, we provide an analysis of biabduction, which is the central focus of our paper, but not considered in~\cite{Gu-Chen-Wu:16}.

Finally, also very recently, a semidecision procedure for satisfiability in symbolic-heap separation logic with inductive definitions and Presburger arithmetic appeared in~\cite{Le-Sun-Chin:16}.  $\ASL$ can be encoded in their logic, but, as far as we can tell, not into the subfragment for which they show satisfiability decidable.  We note that in any case this decidability result comes without any complexity bounds.

\paragraph{Separation logic with iterated separating conjunction.}
The \emph{iterated separating conjunction} (ISC)~\cite{Reynolds:02}, a binding operator for expressing various
unbounded data structures, was recognised early on as a way of reasoning about arrays. For example, the ISC was employed recently in a framework for reasoning about memory permissions, with the aim of enabling symbolic execution of
concurrent array-manipulating programs~\cite{Muller-etal:16}. An earlier paper employing a form of ISC and biabduction is \cite{Gulavani:09}, where the aim is to design a bottom-up shape analysis 
for unannotated code.

However, although our $\textsf{array}$ predicate can be expressed using the ISC, we do not know of any existing decision procedures for biabduction, entailment or even satisfiability in such a logic, which may be of higher complexity or become undecidable (there is certainly no investigation of these issues in either \cite{Gulavani:09} or \cite{Muller-etal:16}). Our work is aimed at underpinning compositional analyses of unannotated code; in contrast, the analysis promoted in \cite{Muller-etal:16} requires fully annotated programs and does not employ, or investigate, biabduction. As for~\cite{Gulavani:09}, arrays are not considered and arithmetic is disallowed (even though arrays are expressible with its ISC); therefore array-manipulating programs cannot be treated.

\paragraph{Other program analyses on arrays.}
A significant amount of research effort has previously focused on the verification of
array-manipulating programs either via invariant inference and theorem proving,
or via abstract interpretation (for instance \cite{Kovacs:09,Dillig-et-al:10,Cousot-et-al:11,Bouajjani:2012,Alberti:14,Stroder-etal:16}).
These approaches differ from ours technically, but also in intention. First, the emphasis in these investigations is on data constraints and, thus, tends towards proving general safety properties of programs.
Here, we intentionally restrict the language so that we can obtain sound and complete algorithms
which can be used for establishing memory safety of programs but not for
proving arbitrary safety properties. Second, such approaches are typically
whole-program analyses that cannot be used in a bottom-up fashion or on partial programs.
In contrast, our focus is on biabduction, one of the key ingredients that makes such
a compositional approach possible.

\section{Conclusions and future work}
\label{sec:conclusion}

In this paper, we investigate $\ASL$, a separation logic aimed at compositional memory safety proofs for array-manipulating programs. We focus on \emph{biabduction}, the key to interprocedural specification inference: we give a sound and complete $\NP$ algorithm for biabduction that computes solutions by finding a consistent ordering of the array endpoints, and we show that the problem is $\NP$-hard in the quantifier-free case. In addition, we show that the satisfiability problem for $\ASL$ is $\NP$-complete, and entailment is decidable, being $\CoNP$-complete for quantifier-free formulas, and at least $\Pi^P_2$-hard (perhaps much harder) in general. We believe that ours are the first decision procedures for separation logic over arrays; certainly, we believe that we are the first to treat biabduction in this context.

The obvious direction of travel for future work is to build an abductive program analysis {\`a la} \textsc{Infer}~\cite{Calcagno-etal:15} for array programs, using $\ASL$ as the assertion language. The first step is to implement an algorithm for biabduction. A direct implementation of our algorithm in Section~\ref{sec:qfree-biabduction}, using an SMT solver to find a solution seed, is the most immediate possibility, but not the only one; one might also try possibly-incomplete but fast approaches based on theorem proving (cf.~\cite{Calcagno-etal:11}). A currently extant problem is in finding biabduction solutions that are as logically weak as possible; our algorithm currently commits to a total ordering of all arrays even if a partial ordering would be sufficient.  We believe that, in practice, this could be resolved by refining the notion of a solution seed so that it carries \emph{just} enough information for computing the spatial formulas in $X$ and $Y$. A more conceptually interesting problem is how we might assess the quality of logically incomparable biabduction solutions (e.g. according to the amount of memory they occupy).

In addition, a program analysis for $\ASL$ will rely not just on biabduction but also on suitable \emph{abstraction} heuristics for discovering loop invariants; this seems an interesting and non-trivial problem for the near future.

Finally, readers might wonder about the possibility of combining $\ASL$ with other fragments of separation logic, such as the linked list fragment, for expressivity reasons. Certainly, we expect that some programs might manipulate, e.g., both linked lists \emph{and} arrays at the same time (and possibly other dynamic data structures too), and a combined language would then clearly be needed to reason about such programs.  However, it is not clear whether such a logic (with, say, arithmetic constraints, arrays and linked lists) would enjoy good computational properties; a potentially problematic issue is that a heap might simultaneously satisfy, e.g., a \mbox{$*$-conjunction} of single heap cells, an array \emph{and} a linked list, all at the same time. We consider this a very interesting area for future study.

\bibliographystyle{abbrvnat}
\bibliography{array_biab_refs}

\clearpage
\appendix

\newtheorem{innerapplemma}{Lemma}
\newenvironment{applemma}[1]
  {\renewcommand\theinnerapplemma{#1}\innerapplemma}
  {\endinnerapplemma}

\newtheorem{innerappprop}{Proposition}
\newenvironment{appprop}[1]
  {\renewcommand\theinnerappprop{#1}\innerappprop}
  {\endinnerappprop}

\newtheorem{innerappthm}{Theorem}
\newenvironment{appthm}[1]
  {\renewcommand\theinnerappthm{#1}\innerappthm}
  {\endinnerappthm}

\section{Proofs of results in Section~\ref{sec:language}}

\begin{applemma}{\ref{lem:semantics}}
For all quantifier-free symbolic heaps $A$, if \mbox{$s,h \models A$} and $s,h' \models A$, then $\dom{h} = \dom{h'}$.
\end{applemma}

\begin{proof}
Writing $A = \Pi : F$, we proceed by structural induction on the spatial part $F$.

\proofcase{$F = \emp$} By definition, $\dom{h} = \dom{h'} = \emptyset$.

\proofcase{$F = \psto{t_1}{t_2}$} By definition, $\dom{h} = \dom{h'} = \{s(t_1)\}$.

\proofcase{$F = \absarray{t_1}{t_2}$} By definition, $\dom{h} = \dom{h'} = \{s(t_1),\ldots,s(t_2)\}$.

\proofcase{$F = F_1 * F_2$}. We have $h = h_1 \circ h_2$ and $h' = h_1' \circ h_2'$, where $s,h_1 \models F_1$ and $s,h_1' \models F_1$, and $s,h_2 \models F_2$ and $s,h_2' \models F_2$. Since $s,h_1 \models F_1$ and $s,h_1' \models F_1$, we have $\dom{h_1} = \dom{h_1'}$ by induction hypothesis.  Similarly, $\dom{h_2} = \dom{h_2'}$.  Because $\circ$ is defined as the union of domain-disjoint heaps, it follows that $\dom{h_1 \circ h_2} = \dom{h_2 \circ h_2'}$.  That is, $\dom{h} = \dom{h'}$ as required.  This completes the induction.
\end{proof}

\section{Proofs of results in Section~\ref{sec:satisfiability}}

\begin{applemma}{\ref{lem:abstraction}}
Let $A$ be a quantifier-free symbolic heap and $s$ a stack.
Then,
\[
\exists h.\ s,h\models A \;\iff\; \exists h'.\ s,h'\models \abstr{A}.
\]
\end{applemma}

\begin{proof}
Let $A$ and $\abstr{A}$ be as shown in Defn.~\ref{defn:abstraction}. \medskip

\noindent{($\Rightarrow$)}
Immediate by the semantics of $*$ and the observation that
$\psto{c_j}{d_j}\models \absarray{c_j}{c_j}$ for all $j\in[1,m]$. \medskip

\noindent{($\Leftarrow$)}
Let $s,h$ be a model of $\abstr{A}$.
We define a model $s,\hat h$ such that $s,\hat h \models A$. First, by assumption
we have $s\models\Pi$. Also, there exist disjoint heaps $h_1,\ldots,h_n,h'_1,\ldots,h'_m$ such that
$h=h_1\circ\ldots\circ h_n\circ h'_1\circ \ldots\circ h'_m$ and
$s,h_i\models\absarray{a_i}{b_i}$ for $i\in[1,n]$, and
$s,h'_j\models\absarray{c_j}{c_j}$ for $j\in[1,m]$.
We define new heaps $h''_1,\ldots,h''_m$ as follows. The heap $h''_j$ is
defined by $\dom{h''_j} = \{s(c_j)\}$ and
$h''_j(s(c_j)) = s(d_j)$ for all $j\in[1,m]$. We then define a new heap
$\hat h = h_1\circ\ldots\circ h_n\circ h''_1\circ\ldots\circ h''_m$, which is
well defined by the fact that $\dom{h''_j}=\dom{h'_j}$ and the assumption that
$h$ is well defined. It is easy to see that $s,h''_j\models \psto{c_j}{d_j}$
and by the semantics of $*$ we are done.
\end{proof}

\begin{applemma}{\ref{lem:gamma}}
For any stack $s$ and any quantifier-free 
symbolic heap $A$,
\[
s \modelsPA \gamma(A)\;\iff\;  \exists h.\ s,h \models A.
\]
\end{applemma}

\begin{proof}
First, note that satisfiability of $A$ coincides with the satisfiability of $\abstr{A}$ by Lemma~\ref{lem:abstraction}.  Thus it suffices to consider the case when $A$ is $\mapsto$-free.  
We assume then that $A = \Pi : \bigsepstar^n_{i=1}\absarray{a_i}{b_i}$,
and establish each direction of the lemma separately. \\

\noindent{($\Leftarrow$)}
Suppose that $s,h \models A$. That is, $s \models \Pi$ and there exist heaps $h_1,\ldots,h_n$ such that $h = h_1 \circ \ldots \circ h_n$ and $s,h_i \models \absarray{a_i}{b_i}$ for each $i\in[1,n]$. We require to show that $s \modelsPA \gamma(A)$.

First, $s \modelsPA \Pi$ by assumption. Next, for each \mbox{$i\in[1,n]$}, we have $s(a_i) \leq s(b_i)$ because $s,h_i \models \absarray{a_i}{b_i}$; hence $s \modelsPA \textstyle\bigwedge_{1 \leq i \leq n}a_i \leq b_i$.  Finally, letting $1 \leq i < j \leq n$, we have $\dom{h_i} = \{s(a_i),\ldots,s(b_i)\}$ and $\dom{h_j} = \{s(a_j),\ldots,s(b_j)\}$.  Since $\dom{h_i}$ and $\dom{h_j}$ are disjoint by assumption, we must have either $s(b_i) < s(a_j)$ or $s(b_j) < s(a_i)$, therefore
\mbox{$s \modelsPA \textstyle\bigwedge_{1 \leq i<j \leq n}(b_i < a_j) \vee (b_j < a_i)$}.
Putting everything together, $s \modelsPA \gamma(A)$ as required. \\

\noindent{($\Rightarrow$)}
Supposing that $s \modelsPA \gamma(A)$, of the form above, we require to construct a heap $h$ such that $s,h \models A$. For each $i\in[1,n]$, define a heap $h_i$ by $\dom{h_i} = \{s(a_i),\ldots,s(b_i)\}$ (the contents may be chosen arbitrarily). We observe that $\dom{h_i}$ is well defined because $s \modelsPA a_i \leq b_i$ by assumption. By construction, we have $s,h_i \models \absarray{a_i}{b_i}$.

Next, we claim that $h_1 \circ \ldots \circ h_n$ is defined.  Supposing not (for contradiction), then there exist $h_i,h_j$ with $1 \leq i < j \leq n$ such that $\dom{h_i} \cap \dom{h_j} \neq \emptyset$.  That is,  $\{s(a_i),\ldots,s(b_i)\} \cap \{s(a_j),\ldots,s(b_j)\} \neq \emptyset$, which means that (without loss of generality) either $s(a_j)$ or $s(b_j)$ falls within the range $\{s(a_i),\ldots,s(b_i)\}$; i.e., either $s(a_i) \leq s(a_j) \leq s(b_i)$ or $s(a_i) \leq s(b_j) \leq s(b_i)$ (or both).  By assumption, we have $s \modelsPA (b_i < a_j) \vee (b_j < a_i)$, meaning that either $s(b_i) < s(a_j)$ or $s(b_j) < s(a_i)$. This gives us four cases to consider, and it is simple to see that each leads to a contradiction:
{\bf (1)}
if $s(a_i) \leq s(a_j) \leq s(b_i)$ and $s(b_i) < s(a_j)$, we immediately get $s(a_j) < s(a_j)$;
{\bf (2)} if $s(a_i) \leq s(a_j) \leq s(b_i)$ and $s(b_j) < s(a_i)$, we get $s(b_j) < s(a_j)$, contradicting $s \modelsPA a_j \leq b_j$;
{\bf (3)}
if $s(a_i) \leq s(b_j) \leq s(b_i)$ and $s(b_i) < s(a_j)$, we again get $s(b_j) < s(a_j)$;
{\bf (4)}
if $s(a_i) \leq s(b_j) \leq s(b_i)$ and $s(b_j) < s(a_i)$, we get $s(a_i) < s(a_i)$.
Putting everything together, and using the fact that $s \models \Pi$, we obtain $s,h_1 \circ \ldots \circ h_n \models A$, and are done.
\end{proof}

\begin{applemma}{\ref{lem:3part_to_sat}}
Given a 3-partition problem instance $(B,\cal{S})$, we have
\[
A_{B,\cal{S}} \mbox{ is satisfiable } \;\iff\;
 \exists \mbox{ complete 3-partition of } \cal{S} \mbox{ (w.r.t. $B$),}
\]
where $A_{B,\cal{S}}$ is the symbolic heap given by Definition~\ref{defn:3part_to_sat}.
\end{applemma}

\begin{proof}
We establish each direction of the equivalence separately. \\

\noindent($\Leftarrow$)\
 Let \mbox{$\{(k_{j_{i,1}}, k_{j_{i,2}}, k_{j_{i,3}})
  \mid 1 \leq i \leq m\}$} be a complete \mbox{$3$-}partition
 of\/~$\cal{S}$.
 We define a stack $s$ by $s(d_1) = 0$ and, for all \mbox{$1 \leq i \leq m$},
\[\begin{array}{rcl}              
s(d_{i+1}) & = & s(d_i) + B+1,
\\
s(a_{j_{i,1}}) & = & s(d_i),
\\
s(a_{j_{i,2}}) & = & s(a_{j_{i,1}}) + k_{j_{i,1}},
\\
\mbox{ and } s(a_{j_{i,3}}) & = & s(a_{j_{i,2}}) + k_{j_{i,2}}\ .
\end{array}\]
Notice that, using the equation $k_{j_{i,1}} + k_{j_{i,2}} + k_{j_{i,3}} = B$, we have
\[
s(a_{j_{i,3}}) + k_{j_{i,3}} + 1 = s(d_i) + B + 1 = s(d_{i+1})\ .
\]%
\noindent
Next we define a heap $h$ (with arbitrarily chosen contents) by
\[
\dom{h} = \{s(d_1),s(d_1)+1,\ldots,s(d_{m+1})\}\ .
\]%
We claim that \mbox{$s,h \models A_{B,\cal{S}}$},
 as defined above. 

First, we tackle the pure part. First, for each $1 \leq i \leq m$, we have $s \models d_{i+1} = d_i + B + 1$ by definition. Next, for each $1 \leq j \leq 3m$, we have by construction $a_j \geq 0 = d_1$.  Finally, for all $1 \leq j \leq 3m$ we have, by construction and using the assumed bounds on each $k$,
\[\begin{array}{rcl}
s(a_j) \leq s(a_{j_{m,3}})& = & s(d_m) + k_{j_{i,1}} + k_{j_{i,2}} \\
& \leq & s(d_i) + B/2 + B/2 \\
& < & s(d_i) + B + 1 \\
& = & s(d_{m+1})\ .
\end{array}\]%
Thus indeed $s$ satisfies the pure part of $A_{B,\cal{S}}$.

Next, we check that $s,h$ models the spatial part.  We define $m+1$ ``heaplets'' $h_{d_i}$ by $\dom{h_{d_i}} = \{s(d_i)\}$ for each $1 \leq i \leq m$, and $3m$ heaplets $h_{j_{i,\ell}}$ for each $1 \leq i \leq m$ and $\ell \in \{1,2,3\}$ by
\[\begin{array}{rcl}
\dom{h_{j_{i,1}}} & = & \{s(d_i)+1,\ldots,s(a_{j_{i,2}})\} \\
\dom{h_{j_{i,2}}} & = & \{s(a_{j_{i,2}})+1,\ldots,s(a_{j_{i,3}})\} \\
\dom{h_{j_{i,3}}} & = & \{s(a_{j_{i,3}})+1,\ldots,s(d_{i+1})-1\}
\end{array}\]%
(As before, the contents of these heaplets are irrelevant.)
\\ By construction $s,h_{d_i} \models \relarray{d_i}{0}{0}$ for each
 \mbox{$1 \leq i\leq m+1$}.
 Similarly, for each \mbox{$1 \leq i \leq m$} and
\mbox{$\ell \in \{1,2,3\}$} we have that
 \mbox{$s,h_{j_{i,\ell}} \models
 \relarray{a_{j_{i,\ell}}}{1}{k_{j_{i,\ell}}}$}.
  Since each $j_{i,\ell}$
corresponds to a unique element in the sequence $\cal{S}$,
 this gives us the following
\mbox{$s,h_j \models \relarray{a_j}{1}{k_j}$}
 for each $1 \leq j \leq 3m$.  We
define $h$ to be the $\circ$-composition of all our heaplets, i.e.,

\[
h = \bigsepcirc_{1 \leq i \leq m+1} h_{d_i}
 \circ \bigsepcirc_{1 \leq j \leq 3m} h_j \ ,
\]%
where the indexed ``big circle'' notation abbreviates a $\circ$-composition of heaps.
To see that $s,h \models A_{B,\cal{S}}$, we just need to show that $h$ is well-defined, i.e., that all of our heaplets are non-overlapping.  This holds by construction: for any $1 \leq i \leq m$ we have that $h_{d_i}$ and $h_{d_{i+1}}$ are single cells separated by a contiguous gap of $B$ cells, and the heaplets $h_{j_{i,1}}$, $h_{j_{i,2}}$ and $h_{j_{i,3}}$ are disjoint heaps occupying the gap between $h_{d_i}$ and $h_{d_{i+1}}$.  Thus $s,h \models A_{B,\cal{S}}$ as required. \\

\noindent($\Rightarrow$) Let $s,h$ be a stack-heap pair satisfying $s,h \models A_{B,\cal{S}}$.  The spatial part of $A_{B,\cal{S}}$ immediately yields a decomposition of $h$ as
\[
h = \bigsepcirc_{1 \leq i \leq m+1} h_{d_i} \circ
 \bigsepcirc_{1 \leq j \leq 3m} h_j \ ,
\]%
where $\dom{h_{d_i}} = \{s(d_i)\}$ for each $1 \leq i \leq m+1$ and
$\dom{h_j} = \{s(a_j)+1,\ldots,s(a_j)+k_j\}$ for each $1 \leq j \leq
3m$; moreover, all of these ``heaplets'' are non-overlapping.  In
addition, the spatial part of $A_{B,\cal{S}}$ yields $s(d_{i+1}) =
s(d_i) + B + 1$ for all $1 \leq i \leq m+1$, plus $s(d_1) \leq s(a_j)$
and $s(a_j) + k_j \leq d_{m+1}$ for all $1 \leq j \leq 3m$.  This
immediately implies that each heaplet $h_j$ occupies a contiguous block
of $k_j$ cells between two successive single-cell heaplets $h_{d_i}$ and
$h_{d_{i+1}}$, which are themselves separated by a block of $B$ cells.
Moreover, because of the above equation
\mbox{$\Sigma_{j=1}^{3m}k_j = mB$},
 every such block of $B$ cells must be
\emph{exactly covered} by $h_j$ heaplets.

Now, we observe that, for each $i$, the block of $B$ cells between
$h_{d_i}$ and $h_{d_{i+1}}$ must be covered by \emph{precisely three} of
our $3m$ heaplets: $h_{j_{i,1}}$, $h_{j_{i,2}}$ and $h_{j_{i,3}}$, say.
This is due to the fact that $B/4 < k_j < B/2$ for each $j$: two
heaplets are therefore insufficient to fill a gap of $B$ cells, whereas
four heaplets would occupy more than $B$ cells (and would therefore
overlap with each other or with $h_{d_i}$ or $h_{d_{i+1}}$).

Therefore, we can define a 3-partition of $\cal{S}$ by taking for each $1 \leq i \leq m$ the numbers $k_{i,1}$, $k_{i,2}$ and $k_{i,3}$ given by the sizes of the heaplets occupying the cells between $h_{d_i}$ and $h_{d_{i+1}}$. It is immediate that $k_{i,1} + k_{i,2} + k_{i,3} = B$, as required.
\end{proof}

\section{Proofs of results in Section~\ref{sec:biabduction}}
\label{s-appendix-biabduct} 

\begin{appprop}{\ref{prop:solution-to-beta-sat}}
If the biabduction problem $(A,B)$ has a solution, then $\beta(A,B)$ is satisfiable.
\end{appprop}

\begin{proof}
Let $X,Y$ be a solution for $(A,B)$. This means that
$A*X$ is satisfiable and that $A*X\models B*Y$. We conclude there exists a model
that there is a model $s,h$ such that $s,h\models A*X$ \emph{and} $s,h\models B*Y$.

Since $s,h\models A*X$ this means that there is a subheap $h'\subseteq h$ such
that $s,h'\models A$. Applying Lemma\nobreakspace \ref {lem:gamma} to $s,h'$, we obtain that
$s\models\gamma(A)$. The same reasoning on $s,h\models B*Y$ yields $s\models\gamma(B)$.
It remains to show that
\[
s\models\bigwedge_{j=1}^\ell\bigwedge_{i=1}^n (v_j<a_i \lor v_j>b_i) \land
\bigwedge_{i=1}^\ell\bigwedge_{j=1}^k (t_i\neq v_j\lor u_i=w_j )\ .
\]

Suppose the left conjunct is false. Then, there are \mbox{$j\in[1,\ell]$} and $i\in[1,n]$
for which $s\models a_i\le v_j \le b_i$. This means that the heap $h_j = \cutheap{\psto{v_j}{w_j}}$
is a \emph{subheap} of the heap $h_i = \cutheap{\absarray{a_i}{b_i}}$. Let $\xi\in\val$ such that
$\xi\neq s(w_j)$. It is easy to see that $s,h_i[s(v_j)\mapsto\xi]\models\absarray{a_i}{b_i}$
because the array predicate is insensitive to the values stored in the heap.
This also means that $s,h[s(v_j)\mapsto\xi]\models A*X$. At the same time it is
clear that $s,h_j[s(v_j)\mapsto\xi]\not\models \psto{v_j}{w_j}$. Therefore
$s,h_j[s(v_j)\mapsto\xi]\not\models B*Y$, contradiction.

Suppose the right conjunct is false. Then, there are \mbox{$i\in[1,\ell]$} and $j\in[1,k]$
such that $s\models t_i=v_j\land u_i\neq w_j$. Thus the heap $h_i=\cutheap{\psto{t_i}{u_i}}$
is well-defined, since $s,h\models A*X$.
Similarly, the heap $h_j = \cutheap{\psto{v_j}{w_j}}$ is well-defined,
because $s,h\models B*Y$. However, $s\models t_i=v_j$ meaning that $\dom{h_i}=\dom{h_j}$.
On the other hand, $h_i(s(u_i))\neq h_j(s(w_j))$ since $s\models u_i\neq w_j$. This
is a contradiction, because both $h_i$ and $h_j$ are subheaps of $h$, but they have the
same domain. This completes the proof.
\end{proof}

\begin{appthm}{\ref{thm:beta-sat-implies-seed}}
If $\beta(A,B)$ is satisfiable,
then there exists a solution seed $\Delta$ for the biabduction problem $(A,B)$.
\end{appthm}
\begin{proof}
Supposing $s \modelsPA \beta(A,B)$, we define $\Delta$ as follows:
\[
\Delta \defeq
\bigwedge_{\substack{e,f \in \abdterms{A}{B} \\ s(e) < s(f)}} e < f
\;\;\wedge\!
\bigwedge_{\substack{e,f \in \abdterms{A}{B} \\ s(e) = s(f)}} e = f .
\]%
We now check that $\Delta$ satisfies the conditions in \MakeUppercase defn.\nobreakspace \ref {defn:seed}.

\MakeUppercase condition\nobreakspace \ref {item:order} holds because $\le$ is a total order over the set
\mbox{$\{s(e) \mid e\in\abdterms{A}{B}\}$}.
Thus, the definition of $\Delta$ will introduce one of the atoms $f < e$, $e < f$ or $f =e$, for
all $e,f\in\abdterms{A}{B}$.

\MakeUppercase condition\nobreakspace \ref {item:terms} holds by construction.

\MakeUppercase condition\nobreakspace \ref {item:delta} requires that $\Delta$ is satisfiable.
This follows by construction, as clearly $s$ is a model of $\Delta$.

\MakeUppercase condition\nobreakspace \ref {item:delta} also requires that $\Delta \modelsPA \beta(A,B)$. First we show $\Delta \modelsPA \gamma(A)$.
Recall (\MakeUppercase defn.\nobreakspace \ref {defn:gamma}) that, supposing $A$ is written as in \MakeUppercase defn.\nobreakspace \ref {defn:seed}, we have
\[
\gamma(A) = \Pi\ \wedge \!\!\bigwedge_{i\in[1,n+k]} \!\!\! \hat{a}_i \leq \hat{b}_i\ \wedge \!\!
\bigwedge_{1 \leq i < j \leq n+k} \!\!\!(\hat{b}_i < \hat{a}_j) \vee (\hat{b}_j < \hat{a}_i)
\]%
where $\hat{a}_i,\hat{b}_i$ are the endpoints of arrays in $\abstr{A}$, of which there are
exactly $n+k$.
Suppose $\pi$ is a conjunct in $\Pi$. If $\pi$ is of the form $t = u$ then,
since $s \modelsPA \gamma(A)$ and thus $s \models \Pi$, we have $s \models t = u$;
therefore by construction the conjunct $(t = u)$
appears in $\Delta$ and thus trivially $\Delta \modelsPA t = u$. The case for $t < u$ is similar.
Suppose then that $\pi$ is of the form $t \le u$. Then, either $s(t) = s(u)$,
in which case $(t = u)$ appears in $\Delta$, or $s(t) < s(u)$ in which case
$(t < u)$ appears in $\Delta$.  In both cases, $\Delta \modelsPA t \le u$.
Finally, if $\pi$ is $t \neq u$ then it must be the case that
either $(t < u)$ or $(u < t)$ appears in $\Delta$, which again means that
$\Delta \modelsPA t \neq u$. Therefore $\Delta \modelsPA \Pi$.

Next, let $i\in[1,n+k]$, and observe $\hat{a}_i,\hat{b}_i \in \abdterms{A}{B}$.
Since $s \modelsPA \gamma(A)$, we have $s(\hat{a}_i) \leq s(\hat{b}_i)$,
meaning that either $s(\hat{a}_i) < s(\hat{b}_i)$ or $s(\hat{a}_i) = s(\hat{b}_i)$.
Thus, by construction, either $(\hat{a}_i < \hat{b}_i)$ or $(\hat{a}_i = \hat{b}_i)$
is a conjunct of $\Delta$, and in both cases $\Delta \modelsPA \hat{a}_i \leq \hat{b}_i$.
Therefore, \mbox{$\Delta \modelsPA \bigwedge_{i\in[1,n+k]} \hat{a}_i\le \hat{b}_i$}.

Finally, let $1 \leq i < j \leq n$, and observe $\hat{a}_i,\hat{b}_i,\hat{a}_j,\hat{b}_j$
are all in $\abdterms{A}{B}$.  Since $s \modelsPA \gamma(A)$ by assumption, we
have $s \modelsPA (\hat{b}_i < \hat{a}_j) \vee (\hat{b}_j < \hat{a}_i)$,
meaning that either $s(\hat{b}_i) < s(\hat{a}_j)$ or $s(\hat{b}_j) < s(\hat{a}_i)$.
Thus either $(\hat{b}_i < \hat{a}_j)$ or $(\hat{b}_j < \hat{a}_i)$ is a conjunct of
$\Delta$, so $\Delta \modelsPA (\hat{b}_i < \hat{a}_j) \vee (\hat{b}_j < \hat{a}_i)$.
This gives us
$\Delta \modelsPA \bigwedge_{1 \leq i < j \leq n} (\hat{b}_i < \hat{a}_j) \vee (\hat{b}_j < \hat{a}_i)$.
Putting everything together, we get $\Delta \modelsPA \gamma(A)$.
The argument that $\Delta \modelsPA \gamma(B)$ is identical.

Next, we show
$\Delta\models \bigwedge_{j=1}^\ell\bigwedge_{i=1}^n (v_j<a_i \lor v_j>b_i)$.
We know that $s\models v_j<a_i \lor v_j>b_i$ for all $j\in[1,\ell]$ and $i\in[1,n]$.
Thus $s\models v_j<a_i$ or $s\models v_j>b_i$, meaning $s(v_j)<s(a_i)$ or $s(v_j)>s(b_i)$.
By the fact $v_j,a_i,b_i\in\abdterms{A}{B}$ and the definition of $\Delta$ we know that
one of $(v_j<a_i)$ or $(b_i<v_j)$ is a conjunct of $\Delta$. Thus $\Delta\models (v_j<a_i)\lor (b_i<v_j)$
and we are done.

Finally, we show $\Delta\models \bigwedge_{i=1}^\ell\bigwedge_{j=1}^k (t_i\neq v_j\lor u_i=w_j )$.
Again, we know that $s\models t_i\neq v_j\lor u_i=w_j$ for all $i\in[1,\ell]$ and $j\in[1,k]$.
There are two cases: $s\models u_i=w_j$ or $s\models  t_i\neq v_j$.  In the first case,
$\Delta\models u_i=w_j$ by construction. In the latter case, there are two further subcases,
namely $s\models t_i<v_j$ or $s\models t_i>v_j$ and it can be easily seen that in both
of these, $\Delta\models t_i\neq v_j$.
This completes the proof.
\end{proof}

\begin{appprop}{\ref{prop:termination}}
Let $(A,B)$ be a biabduction problem of the form shown in Defn.~\ref{defn:seed}.
Let $\Delta$ be a solution seed and terms $e,f \in \abdterms{A}{B}$.
The call $\arrcov{A,\Delta}{e}{f}$:
\begin{enumerate}
\item always terminates, issuing up to $n+k$ recursive calls;
\item returns a formula (for some $q\in\nat$ and sets $I,J\subseteq\nat$)
\[\bigwedge_{i\in I} a_i = a_i'+1 \land
\bigwedge_{i\in J} t_i = t_i'+1  : \bigsepstar_{i=1}^q \absarray{l_i}{r_i}\]
where for all $i\in[1,q]$, $l_i \in \abdterms{A}{B}$;
\item for every $i\in[1,q]$, $\hat \Delta \modelsPA e \le l_i \le r_i \le f$;
\item for every $i\in[1,q-1]$, $\hat \Delta \models r_i<l_{i+1}$.
\end{enumerate}
\end{appprop}

\begin{proof}
First, note that there are exactly $n+k$ arrays in $\abstr{A}$,
hence the upper limit of $\bigsepstar$ in line~\ref{ln:let-abstr}.

Termination follows from the fact that $\arrcov{A,\Delta}{e}{f}$ either terminates
immediately when \mbox{$f \Dlt e$}, or recurses with calls of the form
\mbox{$\arrcov{A,\Delta}{b_{i_j}+1}{f}$}, where the sequence $b_{i_j}$ is $\Dlt$-increasing,
thus terminating at the first index $i_j$ such that $f \Dlt b_{i_j}+1$.
There can be up to $n+k$ such calls.

To show items 2 and 3, we examine each section of the algorithm,
and argue by induction over the recursion depth.

If $f \Dlt e$ then the algorithm terminates at line~\ref{ln:return-emp},
returning $\emp$, a result of the required form.

Otherwise, $e \Dleq f$ (by Lemma~\ref{lem:delta-ordering}).
If $a_i \Dleq e \Dleq b_i$ for some $i\in[1,n+k]$ (line~\ref{ln:t-covered}), then the recursive call
$\arrcov{A,\Delta}{b_i+1}{f}$ is issued. Since $e \Dleq b_i$, we know that
$e \Dlt b_i+1$.

Otherwise, there is no $i$ such that $a_i \Dleq e \Dleq b_i$.
If the set $E$ is empty (line~\ref{ln:e-empty}), then the algorithm terminates
returning a result that is, trivially, of the required form.

Otherwise, there is a minimal element in $E$, namely $a_i$.
In this case, a recursive call $\arrcov{A,\Delta}{b_i+1}{f}$ is issued, with $e < b_i+1$.
By the inductive hypothesis and the lifting of~$*$ to symbolic heaps, we
obtain a result of the required form.

That for every $i\in[1,q]$, $\hat \Delta \modelsPA e \le l_i \le r_i \le f$
follows by inspecting the array constructors used in the code.
In particular, $\absarray{e}{f}$ (line~\ref{ln:array-t-u}) trivially provides the
required condition (note that $e \Dleq f$ by line~\ref{ln:u-less-than-t}).
For $\absarray{e}{\hat{a}'_i}$ at line~\ref{ln:array-t-ai}, observe that
$e \Dlt \hat{a}_i$ holds by the definition of $E$ at line~\ref{ln:def-e}.
Moreover, \mbox{$\Theta_X\modelsPA \hat{a}_i = \hat{a}'_i + 1$}, thus
$\hat\Delta\modelsPA e \le \hat{a}'_i$.

Line~\ref{ln:array-t-ai} also guarantees item (4): this is the only place in the code
where multiple arrays may be returned, and we clearly have
$\hat\Delta\models\hat{a}'_i<\hat{b}_i+1$, which, combined with item~(3) completes the proof.
\end{proof}

We will use the expression $\heaplet{\nu}{\xi}$, where $\nu,\xi\in\nat$,
to denote the heap~$h$ such that $\dom{h}=\{\nu\}$ and $h(\nu)=\xi$.

\begin{applemma}{\ref{lem:alg_sat}}
Let $(A,B)$ be a biabduction instance, $\Delta$ a solution seed
and $X$ as in Defn.~\ref{alg:biabduction}.
Then, $A*X$ is satisfiable.
\end{applemma}
\begin{proof}
By \MakeUppercase defn.\nobreakspace \ref {defn:seed} we know there is a stack $\hat s$ such that $\hat s\models\Delta$.
We define a stack $s_X$ that correctly assignes values to primed terms, as added
by $\arrcovalg$.
\[
s_X(e) \defeq
\begin{cases}
\hat s(e) & e \in \abdterms{A}{B} \\
\hat s(a_i) - 1 & e\equiv a'_i\in\fv{X}, \text{ for $i\in[1,n]$} \\
\hat s(t_i) - 1 & e\equiv t'_i\in\fv{X}, \text{ for $i\in[1,k]$}
\end{cases}
\]
Observe that the variables $a'_i$ and $t'_i$ are fresh in $\Delta$ and
appear at most once in $\Theta_X$ (this is due to \MakeUppercase prop.\nobreakspace \ref {prop:termination}).
We must show that $s_X$ is well defined, i.e., there is no term $a'_i\in\fv{X}$
such that $\hat s(a_i)=0$, and equally for terms $t'_i$.

Suppose there is such an $a'_i$. Then there must be some $j\in[1,n]$ for which the
call $\arrcov{A,\Delta}{a_j}{b_j}$ reaches line~\ref{ln:array-t-ai} which
introduces the constraint $a'_i+1=a_i$. But in that branch, $e \Dlt a_i$.
Thus, it must be that $\hat s(a_i) \neq 0$.
The same argument applies to primed terms $t'_i$, meaning $s_X$ is well defined.

In addition, $s_X$ agrees with $s$ on all variables in $\Delta$, thus $s_X\models \Delta \land \Theta_X$.
A similar argument constructs another stack $s$ such that $s\models\Delta\land\Theta_X\land\Theta_Y$.

We now define several heaps.
\[\begin{array}{r@{\;\defeq\;}lr}
\Aptos_i & \heaplet{s(t_i)}{s(u_i)} & \forall i\in[1,k] \\[10pt]
\Aarrays_i & \bigsepcirc_{\nu=s(a_i)}^{s(b_i)} \heaplet{\nu}{0} & \forall i\in[1,n]\\[10pt]
\mathcal{A} & \bigsepcirc_{i=1}^k \Aptos_i \circ \bigsepcirc_{i=1}^n \Aarrays_i &
\end{array}\]
It is easy to see that, as $s\models\Delta$ and $\Delta\models\gamma(A)$, all
heaps $\Aptos_i$ and $\Aarrays_i$ are well-defined and disjoint.
As a consequence, $\mathcal{A}$ is well-defined. By construction, $s,\mathcal{A}\models A$.
We continue by defining heaps $\Xptos_i$ and $\Xarrays_i$.
\[\begin{array}{@{}r@{\,\defeq\,}l@{}r@{}}
\Xptos_i &
\begin{cases}
e & s(v_i)\in\dom{\mathcal{A}}\\
\heaplet{s(v_i)}{s(w_i)} & \text{otherwise}
\end{cases} & \forall i\in[1,\ell]\\[15pt]
\Xarrays_i &
\bigsepcirc_{\nu \in [s(c_i),s(d_i)] \setminus \dom{\mathcal{A}}}\heaplet{\nu}{0}
&\forall i\in[1,m] \\[15pt]
h & \mathcal{A} \circ \bigsepcirc_{i=1}^{\ell} \Xptos_i \circ \bigsepcirc_{i=1}^m \Xarrays_i &
\end{array}\]
First, observe that, by construction, $\Xarrays_i\hash\mathcal{A}$ ($i\in[1,m]$)
and $\Xptos_i\hash\mathcal{A}$ ($i\in[1,\ell]$).

Also, note that $\Xptos_i\hash\Xptos_j$ for $i\neq j \in[1,\ell]$ since otherwise
$s(v_i)=s(v_j)$ which contradicts $s\models\gamma(B)$, deriving from $s\models\Delta$
and $\Delta\models\gamma(B)$. Equally, $\Xarrays_i\hash\Xarrays_j$ for $i\neq j \in[1,m]$
by a similar argument. Finally, $\Xptos_i\hash\Xarrays_j$ for $i\in[1,\ell]$ and $j\in[1,m]$
as otherwise $s(c_j) \le s(v_i) \le s(d_j)$, contradicting again $s\models\gamma(B)$.
Thus, $h$ is well-defined.

It is not hard to verify that for each $i\in[1,\ell]$,
\[s,\Xptos_i\models\cutheap{\ptocov{B,\Delta}{v_i}{w_i}}\ .\]

We show the last obligation, i.e., that for $i\in[1,m]$
\[s,\Xarrays_i\models\cutheap{\arrcov{B,\Delta}{c_i}{d_i}}\ .\]
Suppose the opposite. Due to the form of the result returned by $\arrcov{B,\Delta}{c_i}{d_i}$
as guaranteed by \MakeUppercase prop.\nobreakspace \ref {prop:termination}, this means there must exist some address $\nu$ such that either
\mbox{$\nu \in\dom{\Xarrays_i}\setminus\dom{\cutheap{\arrcov{B,\Delta}{c_i}{d_i}}}$},
or conversely, $\nu \in\dom{\cutheap{\arrcov{B,\Delta}{c_i}{d_i}}}\setminus\dom{\Xarrays_i}$.

In the first case, there must be some $\absarray{e}{f}$ returned by $\arrcov{B,\Delta}{c_i}{d_i}$
such that $[s(e),s(f)]\not\subseteq\dom{\Xarrays_i}$. We know, however, that
$[s(e),s(f)]\subseteq[s(c_i),s(d_i)]$, from  \MakeUppercase prop.\nobreakspace \ref {prop:termination}. But then
$\nu\in[s(c_i),s(d_i)]$ thus, by the definition of $\Xarrays_i$, it must be that
$\nu\in\dom{\mathcal{A}}$. This contradicts both of the cases where an array is
returned by $\arrcovalg$ (lines\nobreakspace \ref {ln:array-t-u} and\nobreakspace  \ref {ln:array-t-ai}).

In the second case, there is some address $\nu\in\dom{\Xarrays_i}$ such that there is
no $\absarray{e}{f}$ returned by $\arrcov{B,\Delta}{c_i}{d_i}$, such that
$\nu\in[s(e),s(f)]$. Again, by assumption we have $\nu\in[s(c_i),s(d_i)]$.
However, it can be verified by inspecting $\arrcovalg$ that
if $\nu\in[s(c_i),s(d_i)]$ and there is no $\absarray{e}{f}$ in the result, this
is because $\nu\in\dom{\mathcal{A}}$, contradicting the fact that $\Xarrays_i\hash\mathcal{A}$.
This completes the proof.
\end{proof}

\begin{applemma}{\ref{lem:seqs-well-defined}}
All elements of $\Barrays,\Bptos,\Yarrays,\Yptos$ are well-defined,
in the sense that there exist such (unique) heaps.
\end{applemma}
\begin{proof}
Uniqueness follows by the above observation that all quantifier-free formulas
in $\ASL$ are precise. Here we show existence.

Suppose $\Bptos_i$ is not well-defined, meaning $s(v_i)\notin\dom{h}$, or that
$h(s(v_i))\neq s(w_i)$.
In the first case, it must be that $\ptocov{A,\Delta}{v_i}{w_i} = \emp$
(otherwise, by construction, $s(v_i)\in\dom{\cutheap{X}}$).
But this happens exactly when $s(v_i)\in\dom{\cutheap{A}}\subseteq\dom{h}$, contradiction.
In the second case, suppose $s(v_i)\notin\dom{\cutheap{A}}$.
This means $\ptocov{A,\Delta}{v_i}{w_i} = \psto{v_i}{w_i}$ which by construction
guarantees $h(s(v_i))=s(w_i)$. Finally, suppose $s(v_i)\in\dom{\cutheap{A}}$.
Either there is an $\absarray{a_j}{b_j}$ such that $s(v_i)\in[s(a_i),s(b_i)]$,
or there is $\psto{t_j}{w_j}$ such that $s(v_i)=s(t_j)$. The first possibility
contradicts the second conjuct of $\beta(A,B)$ and the second possibility the
third conjunct.

Suppose $\Barrays_i$ is not well-defined, meaning $[s(c_i),s(d_i)]\not\subseteq\dom{h}$.
In other words, there is $\nu\in[s(c_i),s(d_i)]$, but $\nu\notin\dom{h}$.
Clearly, $\nu\notin\dom{\cutheap{A}}$. By inspecting $\arrcovalg$, however,
we can conclude that there must be some $\absarray{e}{f}$ returned by
$\arrcov{A,\Delta}{c_i}{d_i}$ such that $\nu\in[s(e),s(f)]$. This means $\nu\in\dom{\cutheap{X}}$,
contradiction.

Suppose $\Yptos_i$ is not-well defined. This must mean $\Yptos_i\neq e$, because
trivially $e\subseteq h$. For this to happen, $\ptocov{B,\Delta}{t_i}{u_i}$
must return $\psto{t_i}{u_i}$, and $s(t_i)\notin\dom{h}$. But by assumption,
$s,h\models A*X$, therefore $s(t_i)\in\dom{h}$, contradiction.

Suppose $\Yarrays_i$ is not well-defined. Thus, there is some $\absarray{e}{f}$
returned by $\arrcov{B,\Delta}{a_i}{b_i}$ such that $[s(e),s(f)]\not\subseteq \dom{h}$.
However, we know $s(a_i)\leq s(e) \leq s(f) \leq s(b_i)$ from \MakeUppercase prop.\nobreakspace \ref {prop:termination}.
Also, by assumption, $s,h\models A*X$ thus $[s(a_i),s(b_i)]\subseteq\dom{h}$, contradiction.
\end{proof}

\begin{applemma}{\ref{lem:disjointness}}
\begin{enumerate}
\item For any sequence of heaps $\mathcal{S}$ of $\Barrays$,$\Bptos$,$\Yarrays$,$\Yptos$,
and any distinct $i,j\in[1,|\mathcal{S}|]$, $\mathcal{S}_i\hash\mathcal{S}_j$.
\item For any two distinct sequences of heaps $\mathcal{S},\mathcal{T}$ of $\Barrays$,
$\Bptos$, $\Yarrays$, $\Yptos$,
and any $i\in[1,|\mathcal{S}|]$, $j\in[1,|\mathcal{T}|]$, $\mathcal{S}_i\hash\mathcal{T}_j$.
\end{enumerate}
\end{applemma}
\begin{proof}
For $\Barrays$ and $\Bptos$, this follows from the fact that $s\models\gamma(B)$, ensuring
the separation of arrays and $\mapsto$ formulas in $B$.

We show that for any $i\neq j$, $\Yptos_i$ and $\Yptos_j$ are disjoint. By inspecting
$\ptocov{B,\Delta}{t_i}{u_i}$, we see that $\Yptos_i$ is either $\emptyset$,
or $\{s(t_i)\}$. If either of $\Yptos_i,\Yptos_j$ is the empty heap $e$, for $i\neq j \in[1,k]$,
then clearly $\Yptos_i\hash\Yptos_j$. If both are non-empty, then
their domains are $\{s(t_i)\},\{s(t_j)\}$ (line\nobreakspace \ref {ln:pto-notcovered}). But, by assumption, $s\models\gamma(A)$
which guarantees $s(t_i)\neq s(t_j)$.

For any $i\neq j$, $\Yarrays_i$ and $\Yarrays_j$ are disjoint, because item~(3) of \MakeUppercase prop.\nobreakspace \ref {prop:termination}
means that $\Yarrays_i\subseteq\cutheap{\absarray{a_i}{b_i}}$, for $i\in[1,n]$.
But $\cutheap{\absarray{a_i}{b_i}}\hash\cutheap{\absarray{a_j}{b_j}}$ for $i\neq j$,
due to $s\models\gamma(A)$.

We also need to show that for any pair of heaps from any two of these sequences, the heaps are disjoint.
In the case of heaps $\Barrays_i, \Bptos_j$, $\Barrays_i\hash \Bptos_j$
follows again from the assumption that $s\models\gamma(B)$.

Suppose it is not the case that $\Bptos_i\hash \Yptos_j$.
As argued previously, it must be that $\dom{\Yptos_j} = \{s(t_j)\}$.
At the same time, $\dom{\Bptos_i} = \{s(v_i)\}$, meaning that \mbox{$s(t_j)=s(v_i)$}.
Since, $t_j,v_i\in\abdterms{A}{B}$, it must be that $t_j\Deq v_i$. But then,
$\ptocov{B,\Delta}{t_i}{u_i}$ would return $\emp$ (line\nobreakspace \ref {ln:pto-pto}), contradiction.

Suppose it is not the case that $\Barrays_i\hash \Yptos_j$.
Again, this means $\dom{\Yptos_j}=\{s(t_j)\}$.
Since $\dom{\Barrays_i} = [s(c_i), s(d_i)]$,
we conclude that \mbox{$c_i \Dleq t_j \Dleq d_i$}.
We again have a contradiction, as in this case
$\ptocov{B,\Delta}{t_i}{u_i}$ would return $\emp$ (line\nobreakspace \ref {ln:pto-arr}).

Next, suppose it does not hold that $\Yarrays_i\hash\Yptos_j$.
As above, this means $\dom{\Yptos_j}=\{s(t_j)\}$.
In addition, there must be some $\absarray{e}{f}$ returned by
$\arrcov{B,\Delta}{a_i}{b_i}$ such that $\cutheap{\absarray{e}{f}}\hash\Yptos_j$ does not hold,
meaning that $\hat\Delta\models e \leq t_j \leq f$. By \MakeUppercase prop.\nobreakspace \ref {prop:termination}
we know that $\hat\Delta\models a_i \leq e \leq f \leq b_i$ thus $a_i\Dleq t_j\Dleq b_i$.
This contradicts the assumption $s\models\gamma(A)$.

Now suppose it is not the case that $\Bptos_i\hash\Yarrays_j$.
Note that $\dom{\Bptos_i} = \{s(v_i)\}$. As above,
there must be some $\absarray{e}{f}$ in the result of
$\arrcov{B,\Delta}{a_j}{b_j}$ such that $s(e)\le s(v_i)\le s(f)$,
thus $a_j\Dleq v_i\Dleq b_j$, contradicting the
second conjunct of \MakeUppercase defn.\nobreakspace \ref {defn:beta}.

Finally, we need to show that $\Barrays_i,\Yarrays_j$ are disjoint.
Suppose the contrary. This means that there is an array
$\absarray{e}{f}$ in the result of $\arrcov{B,\Delta}{a_j}{b_j}$
such that $c_i\Dleq e\Dleq d_i$ or $c_i\Dleq f\Dleq d_i$.
We inspect the return statements of $\arrcovalg$ where an array is
constructed.
At line\nobreakspace \ref {ln:array-t-u}, the array constructed is $\absarray{e}{f}$.
At this point there is no $q\in[1,m+\ell]$
such that $\hat{c_q}\Dleq e\Dleq \hat{d_q}$ (because of line\nobreakspace \ref {ln:t-covered})
or $\hat{c_q}\Dleq f\Dleq \hat{d_q}$ (because of line\nobreakspace \ref {ln:e-empty}).
At line\nobreakspace \ref {ln:array-t-ai}, the array constructed is $\absarray{e}{\hat{c_q}'}$
for some $q\in[1,m+\ell]$ such that $e \Dlt \hat{c_q} \Dleq f$, and $\hat{c_q}$
is the $\Dlt$-minimal such array endpoint. Clearly, there is no $r$ such that
$\hat{c_r}\Dleq e\Dleq \hat{d_r}$ (again because of line\nobreakspace \ref {ln:t-covered}).
Thus we need only show that there is no $r$ such that $\hat{c_r}\Dleq \hat{c_q}' \Dleq \hat{d_r}$.
But this is provided directly by the fact that $s\models\gamma(B)$.
\end{proof}

\begin{applemma}{\ref{lem:totality}}
\[
\dom{h} \subseteq
\bigcup_{i=1}^m \Barrays_i \cup
\bigcup_{i=1}^\ell \Bptos_i \cup
\bigcup_{i=1}^n \Yarrays_i \cup
\bigcup_{i=1}^k \Yptos_i
\]
\end{applemma}
\begin{proof}
We show that for all atomic formulas $\sigma$ of $F_{A*X}$
there is a set of heaps $\mathcal{H}$ from the above sequences such that
$\cutheap{\sigma}\subseteq \bigcirc H$.

Recall that $F_{A*X} = F_A * F_X$ and that
\[\begin{array}{r@{\,}l}
F_A &= \displaystyle\bigsepstar^n_{i=1}\absarray{a_i}{b_i} * \bigsepstar^{k}_{i=1}\psto{t_i}{u_i}\\
F_X & = \displaystyle\bigsepstar_{i=1}^{m} \arrcov{A,\Delta}{c_i}{d_i} * \bigsepstar_{i=1}^{\ell} \ptocov{A,\Delta}{v_i}{w_i}
\end{array}\]
We deal with the four subcases depending on the provenance of $\sigma$.

Let $\sigma\equiv \psto{t_i}{u_i}$ for some $i\in[1,k]$.
The call $\ptocov{B,\Delta}{t_i}{u_i}$
will return $\psto{t_i}{u_i}$ or $\emp$.  In the former case $\cutheap{\sigma}=\Yptos_i$.
Otherwise, there is some $j\in[1,\ell]$ such that $t_i\Deq v_j$
or there is $j\in[1,m]$ such that $c_j\Dleq t_i\Dleq d_j$. In the first
case, $\cutheap{\sigma}=\Bptos_j$ and in the second, $\cutheap{\sigma}\subseteq\Barrays_j$.

Let $\sigma\equiv\psto{e}{f}$, returned by $\ptocov{A,\Delta}{v_i}{w_i}$
for $i\in[1,\ell]$. By inspecting $\ptocovalg$
it can be seen that, necessarily, $e\equiv v_i$. But then, trivially, $\cutheap{\sigma}=\Bptos_i$
and we are done.

Let $\sigma\equiv\absarray{e}{f}$, returned by $\arrcov{A,\Delta}{c_i}{d_i}$
for $i\in[1,m]$. By \MakeUppercase prop.\nobreakspace \ref {prop:termination} we know that $c_i\Dleq e \Dleq f \Dleq d_i$, meaning
that $\cutheap{\sigma}\subseteq\Barrays_i$.

Let $\sigma\equiv\absarray{a_i}{b_i}$ for some $i\in[1,n]$,
and let $\HB = \bigcirc_{i=1}^\ell \Bptos_i \circ \bigcirc_{i=1}^m \Barrays_i$.
We argue that $\cutheap{\sigma}\subseteq\Yarrays_i\circ\HB$.
We do this by proving that for any $e,f\in\abdterms{A}{B}$ such that
$a_i\Dleq e\Dleq f\Dleq b_i$,
$\cutheap{\arrcov{B,\Delta}{e}{f}}\subseteq\Yarrays_i\circ\HB$, and we do this
by induction over the recursion depth.

If the depth is zero, then there is no array in $\abstr{B}$ covering $e$ and
there is no array covering $f$ (line\nobreakspace \ref {ln:array-t-u}).
Thus, $\arrcov{B,\Delta}{e}{f}=\absarray{e}{f}$, therefore trivially
$\cutheap{\arrcov{B,\Delta}{e}{f}}\subseteq\cutheap{\absarray{a_i}{b_i}}=\Yarrays_i$.

If the depth is non-zero, either there is $\absarray{\hat{c_j}}{\hat{d_j}}$
such that $s(e) \in[s(\hat{c_j}),s(\hat{d_j})]$ (line\nobreakspace \ref {ln:t-covered})
or there is no such array, but
there is a (left-most) $\absarray{\hat{c_j}}{\hat{d_j}}$ covering $f$ (line\nobreakspace \ref {ln:i-minimal}).

In the first case, $\arrcov{B,\Delta}{e}{f} = \arrcov{B,\Delta}{\hat{d_j}+1}{f}$,
where $a_i\Dleq e\Dleq \hat{d_j} \Dlt \hat{d_j}+1$. If $f \Dlt \hat{d_j}+1$ then
the call $\arrcov{B,\Delta}{\hat{d_j}+1}{f}=\emp$ and we are done. Otherwise,
the inductive hypothesis applies and we get
$\arrcov{B,\Delta}{e}{f} = \arrcov{B,\Delta}{\hat{d_j}+1}{f}\subseteq\Yarrays_i\circ\HB$.

In the second case, $\arrcov{B,\Delta}{e}{f}$ is equal to
\[
\hat{c}'_j+1=\hat{c_j}:\absarray{e}{\hat{c}'_j} * \arrcov{B,\Delta}{\hat{d}_j+1}{f}\ .
\]
This means that $\cutheap{\arrcov{B,\Delta}{e}{f}}$ is equal to
\[
\cutheap{\absarray{e}{\hat{c}'_j}} \cup \cutheap{\arrcov{B,\Delta}{\hat{d}_j+1}{f}}
\]
Clearly, $\cutheap{\absarray{e}{\hat{c}'_j}}\subseteq\Yarrays_i$ so we need to show
the same for $\cutheap{\arrcov{B,\Delta}{\hat{d}_j+1}{f}}$. This follows by an identical
argument to the previous case, via the inductive hypothesis. This completes the proof.
\end{proof}

\begin{applemma}{\ref{lem:delta-polytime}}
Let $(A,B)$ be a biabduction instance and
$\Delta$ a formula satisfying \MakeUppercase conditions\nobreakspace \ref {item:terms} and\nobreakspace  \ref {item:order} of
\MakeUppercase defn.\nobreakspace \ref {defn:seed}. Let $\Gamma=\bigwedge\bigvee \pi$ be a formula where
$\pi$ is of the form $t< u$ or $t = u$ and $t,u\in\abdterms{A}{B}$. Then,
checking $\Delta\modelsPA \Gamma$ is in $\PTIME$.
\end{applemma}
\begin{proof}
Let $\Delta=\bigwedge_{i\in I}\delta_i$.
First, we assume that $J=K=\{1\}$, i.e., that the query in question is simply $\Delta\modelsPA\pi$
for a single atomic formula $\pi$. Let $\pi$ be of the form $t < u$.
If there exists $i\in I$ such that $\delta_i \equiv \pi$ then
clearly $\Delta\modelsPA\pi$, as $\Delta$ is a conjunction. We return ``yes''.

If there is no $i\in I$ such that $\delta_i\equiv\pi$ then by the assumption
that $t,u\in\abdterms{A}{B}$ and \MakeUppercase condition\nobreakspace \ref {item:order} of \MakeUppercase defn.\nobreakspace \ref {defn:seed} we have
that $t = u$ or $u < t$ is a conjunct of $\Delta$. In both cases it is clear
that $\Delta\not\modelsPA\pi$ (again because $\Delta$ is a conjunction)
and we return ``no''.

The case where $\pi$ is of the form $t = u$ is almost identical. Thus we
can answer queries of the form  $\Delta\modelsPA\pi$ in time linear in $|\Delta|$.

Suppose now that $\Gamma = \bigvee_{k\in K} \pi_k$, that is, $I$ is a singleton.
We issue all possible queries of the form $\Delta\modelsPA\pi_k$ for $k\in K$.
If any of these queries reports ``yes'' then we report ``yes''.  Otherwise,
due to the completeness of checking these queries we have that
$\Delta\modelsPA \bigwedge_{k\in K} \lnot \pi_k$ and we report ``no''.
Therefore, queries of the form $\Delta\modelsPA\bigvee_{k\in K} \pi_k$ can be checked
in time $|\Delta|\cdot |K|$.

Finally, suppose $\Gamma=\bigwedge_{j\in J}\bigvee_{k\in K} \pi_{j,k}$.
We issue $|J|$ queries of the form $\Delta\modelsPA\bigvee_{k\in K} \pi_{j,k}$
for each $j\in J$. If all queries receive positive answers then clearly
$\Delta\modelsPA\Gamma$ and we return ``yes''. Otherwise there is $j\in J$ such
that the query $\Delta\modelsPA\bigvee_{k\in K} \pi_{j,k}$ received a negative answer,
meaning that $\Delta\not\modelsPA\bigvee_{k\in K} \pi_{j,k}$. Thus, as $\Gamma$
is a conjunction at the top-level, it is clear that $\Delta\not\modelsPA\Gamma$
and we report ``no''. This last step can take up to $|\Delta|\cdot|K|\cdot|J|$ time.
\end{proof}

\begin{appprop}{\ref{prop:runtime_NP}}
Deciding if there is a solution for a biabduction problem $(A,B)$,
and constructing it if it exists, can be done in $\NP$.
\end{appprop}

\begin{proof}
We outline an non-deterministic algorithm that runs in polynomial time in the
size of the input $(A,B)$.

First, we guess a set $T\subseteq\abdterms{A}{B} \times \{ <, = \} \times \abdterms{A}{B}$.
Next, we guess an assignment of values $s$ to the variables in $\abdterms{A}{B}$.
We limit the range of the assignment to naturals bounded by some $B$
which is exponential in $|\abdterms{A}{B}|$
(representable in polynomial space and guessable in non-deterministic polynomial time).
The precise definition of $B$ is not relevant here, and is given in
\cite[Theorem~6]{Scarpellini:84}.

We then convert the set $T$ (of quadratic size in $|\abdterms{A}{B}|$)
into a formula $\Delta$ in the obvious way.
The resulting $\Delta$ automatically satisfies \MakeUppercase condition\nobreakspace \ref {item:terms} of \MakeUppercase defn.\nobreakspace \ref {defn:seed}.
\MakeUppercase condition\nobreakspace \ref {item:order} of \MakeUppercase defn.\nobreakspace \ref {defn:seed} is checkable in quadratic time by a
nested loop over pairs of terms from $\abdterms{A}{B}$, scanning $\Delta$ in each iteration.
The formula $\beta(A,B)$ can be split into a fixed number of formulas
of the form $\bigwedge\bigvee \pi$, given that $t \le u$ is equivalent to \mbox{$t<u \lor t=u$}.
Thus, $\Delta\models\beta(A,B)$ can be checked in polynomial
time due to Lemma\nobreakspace \ref {lem:delta-polytime}. Finally, we check that $\Delta$
is satisfiable by checking whether $s\modelsPA \Delta$.
This step can be done in polynomial time and is complete
by \cite[Theorem~6]{Scarpellini:84}. If all checks pass, then $\Delta$ is a solution seed.

We now apply \MakeUppercase defn.\nobreakspace \ref {alg:biabduction} on $\Delta$ and obtain
the formulas $X$ and $Y$. By \MakeUppercase prop.\nobreakspace \ref {prop:termination}, each call
$\arrcov{A,\Delta}{c_j}{d_j}$ issues at most $n+k$ recursive calls.
The work done in each call is clearly doable in polynomial time
(cf.~Lemma\nobreakspace \ref {lem:delta-polytime}), thus completing the proof.
\end{proof}

\begin{appprop}{\ref{prop:biabd_qf}}
Let $A$ be quantifier-free, and let $B$ be such that no variable appearing in the RHS of a $\mapsto$ formula is existentially bound.
Then an instance $(A,B)$ of the biabduction problem for $\ASL$ has a solution
if and only if $(A,\qf{B})$ has a solution.
\end{appprop}

\begin{proof}
Let $B = \exists\vec{z}.\ Q$, where $Q=\qf{B}$ is quantifier-free. We tackle each direction of the equivalence separately. \medskip

\noindent ($\Leftarrow$) Let $(X,Y)$ be a solution for $(A,Q)$. We claim that $(X,Y)$ is also a solution for $(A,\exists\vec{z}. Q)$. To see this, observe that by assumption $A*X$ is satisfiable and $A * X \models Q * Y$. Since trivially $Q \models \exists\vec{z}.\ Q$, we easily have $Q * Y \models (\exists \vec{z}.\ Q) * Y$ and so $A*X \models (\exists \vec{z}.\ Q) * Y$ as required. \medskip

\noindent($\Rightarrow$) Let $(X,Y)$ be a solution for $(A, \exists\vec{z}.\ Q)$. That is, $A * X$ is satisfiable and $A * X \models (\exists\vec{z}.\ Q) * Y$. Since the free variables in $Y$ are disjoint from $\vec{z}$, this can be rewritten as $A * X \models \exists\vec{z}.\ (Q * Y)$.  Now, by assumption there is a stack-heap pair $(s,h)$ such that $s,h \models A*X$. Furthermore, $h$ is clearly independent of the data values stored in the arrays in $A * X$.  Thus we may choose $h$ such that $h(x) \neq s(w)$ for all formulas of the form $\psto{v}{w}$ occurring in $Q * Y$, and for all $x$ such that $\absarray{a}{b}$ occurs in $A * X$ and $s(a) \leq s(x) \leq s(b)$.

Now, since $A * X \models (\exists\vec{z}.\ Q) * Y$, we get $s,h \models \exists\vec{z}.\ (Q * Y)$, meaning that $s[\vec{z} \mapsto \vec{m}], h \models Q * Y$ for some $\vec{m}$.
We write $s' = s[\vec{z}\mapsto \vec{m}]$, and define the following extension of the symbolic heap $X$:
\[
X' \defeq \big(\textstyle\bigwedge_{x \in \fv{A,X,Q,Y}} x =s'(x) \big) * X
\]
We claim that $(X',Y)$ is then a solution for $(A,Q)$. First, since $s,h \models A*X$ but the variables $\vec{z}$ do not occur in $A * X$ by assumption, we also have $s', h \models A * X$. Clearly, we also have $s' \models x = s'(x)$ for any $x$, and so $s',h \models A * X'$. Hence $A * X'$ is satisfiable.

It remains to show that $A * X' \models Q * Y$.  Supposing $s'',h' \models A * X'$, we require to prove $s'',h' \models Q * Y$.
By construction of $X'$, the stack $s''$ agrees with $s'$ on all variables occurring in $A$, $X'$, $Q$ and $Y$, so in fact we have $s',h' \models A * X'$ and require to prove $s',h' \models Q * Y$.

Now, since $s',h \models A * X'$ and $s',h' \models A * X'$, we have $\dom{h'} = \dom{h}$ by Lemma~\ref{lem:semantics}.  Since $s',h \models Q * Y$, it is then easy to see that $s',h'$ satisfies all pure formulas and all $\textsf{array}$ formulas appearing in $Q * Y$.  The only difficulty is that $s',h'$ may fail to satisfy some formula of the form $\psto{v}{w}$ in $Q * Y$, because $h'(s'(v)) \neq s'(w)$.  Suppose for contradiction this is the case.

Since $\dom{h'} = \dom{h}$, we must have $s'(v) \in \dom{h}$, and since $s',h \models A * X$, it must be that $s'(v)$ is covered by some formula in $A * X$. If there is a formula of the form $\psto{t}{u}$ in $A * X$ such that $s'(t) = s'(v)$, then we have $s'(w) = h(s'(v)) = h(s'(t)) = s'(u)$, since $s',h \models A * X$ and $s',h \models Q * Y$.  But then $h'(s'(v)) = h'(s'(t)) = s'(u) = s'(w)$, since $s',h' \models \psto{t}{u}$, a contradiction.  Therefore, there must be a formula $\absarray{a}{b}$ in $A * X$ such that $s'(a) \leq s'(v) \leq s'(b)$.  But then, due to our initial choice of $h$, we know that $h(s'(v)) \neq s(w)$.  Since the existential variables $\vec{z}$ are not allowed to include $w$, this means $h(s'(v)) \neq s'(w)$, contradicting the fact that $s',h \models Q * Y$ (since it does not satisfy $\psto{v}{w}$).  This completes the proof.
\end{proof}

\begin{applemma}{\ref{l-colour-biabduct}}
 Given an instance $G$ of the
 \mbox{$2$-}round \mbox{$3$-}colouring problem,
 the following statements are pairwise equivalent:
\begin{itemize}
\item[(a)]
 The biabduction problem \mbox{$(A_G,B_G)$} has a solution.
\item[(b)]
 There is a winning strategy for the perfect colouring~$G$.
\item[(c)]
 \mbox{$A_G \models B_G$} is valid.    
\end{itemize}
 where $A_G$ and $B_G$ are the symbolic heaps given
 by Definition~\ref{d-colour-biabduct}  
\end{applemma}

\begin{proof}
 We establish each direction of the above equivalences in turn.

\noindent \mbox{$(c)\Rightarrow (a)$}\ \

\noindent
 This direction is trivial by taking
  $$ X=Y=\emp $$%

\noindent \mbox{$(b)\Rightarrow (c)$}\ \

 Suppose that there is a winning strategy such
 that every \mbox{$3$-}colouring of the leaves can be extended to a
 perfect \mbox{$3$-}colouring of the whole~$G$. We will prove that
\mbox{$A_{G}\models B_{G}$}.

\noindent
 Let \mbox{$s,h$} be a stack-heap pair satisfying
 \mbox{$s,h \models A_{G}$}.

\noindent
 The spatial part of $A_{G}$ yields a decomposition of~$h$ as
\begin{equation}%
 h = \bigsepcirc_{i=1}^{k} h_{i}\ \circ\
\hspace*{-1.5ex} \bigsepcirc_{(v_i, v_j)\in E}
\hspace*{-1.5ex} \widetilde{h_{ij}^{(e)}}
              \label{eq-h-A}
\end{equation}%
 where
 for each \mbox{$1\leq i\leq k$},
 the $h_i$ is a one-cell array, and for some~$b_i$,
\begin{equation}
 \dom{h_i} = \{s(d_i)\}, \ \ \mbox{and}\ \
  h_i(s(d_i)) = b_i                  
              \label{eq-h-i=k}
\end{equation}
 and, for each \mbox{$(v_i,v_j)\in E$},
\begin{equation}%
 \mbox{$\dom{\widetilde{h_{ij}^{(e)}}} =
     \{s(e_{ij})+1,\ s(e_{ij})+2,\ s(e_{ij})+3\}$}   
              \label{eq-h-ij-neq}
\end{equation}%

 Take the \mbox{$3$-}colouring of the leaves obtained by assigning
 the colours \mbox{$b_{i,1}$} to the leaves $v_1$, $v_2$,\dots, $v_k$
 resp..
 where \mbox{$1\leq b_{i,1} \leq 3$},\ \
 and \ \ \mbox{$b_{i,1}-1 \equiv b_i\ (\bmod {3})$}\ .

 According to the winning strategy, we can assign colours, denote
 them by $b_{i,1}$, \mbox{$i>k$}, to the rest of vertices\
 $v_{k+1}$, \dots, $v_n$, resp.,
 obtaining a \mbox{$3$-}colouring of the whole~$G$ such that
 no adjacent vertices share the same colour.    

 In addition, we mark edges \mbox{$(v_i,v_j)$}
 by $\widetilde{b}_{ij}$ complementary to $b_{i,1}$ and~$b_{j,1}$.

\noindent
 We extend the stack $s$ for quantified variables in~$B_{G}$
 so that for all \mbox{$i\leq k$},
  $$ s(c_{i,1}) = b_{i,1} = h_i(s(d_i)),$$
 and, for each \mbox{$(v_i, v_j)\in E$},  
$$ s(\widetilde{c_{ij}}) =  6 - b_{i,1} - b_{j,1}.$$%

 The fact that no adjacent vertices $v_i$ and\/ $v_j$
 share the same colour provides that\/\ \
 $$\mbox{$(s(c_{i,1}),\, s(c_{j,1}),\, s(\widetilde{c_{ij}}))$}$$%
 {\em is a permutation of\/}
 \ $$\mbox{$(1,\, 2,\, 3)$},$$%
 resulting in that
 \ \ \mbox{$s,\widetilde{h_{ij}^{(e)}}$}\ from~(\ref{eq-h-ij-neq})\ \
 is also a model for
 \[ \mbox{$
    \relarray{e_{ij}}{c_{i,1}}{c_{i,1}}
  * \relarray{e_{ij}}{c_{j,1}}{c_{j,1}}
  * \relarray{e_{ij}}{\widetilde{c_{ij}}}{\widetilde{c_{ij}}}$} \]%

 Bringing all together, we get that \mbox{$s,h$}
 satisfies \mbox{$s,h \models B_{G}$},
 which completes the proof of this direction.


~\\
 \mbox{$(a)\Rightarrow (b)$}

 Let \mbox{$A_{G}*X\models B_{G}*Y$} and \mbox{$A_{G}*X$}
 be satisfiable.

 Since \mbox{$A_{G}*X$} is satisfiable, there is a model
 of the form \mbox{$s, h_A\circ h_X$} such that
 $$ s, h_A \models A_{G}, \ \ \mbox{and}\ \  s, h_X \models X,$$%
 and, in particular,
 $$ h_A = \bigsepcirc_{i=1}^{k} h_{i}\ \circ\
\hspace*{-1.5ex} \bigsepcirc_{(v_i, v_j)\in E}
\hspace*{-1.5ex} \widetilde{h_{ij}^{(e)}},$$%
 where
 for each \mbox{$1\leq i\leq k$},
 the $h_i$ is a one-cell array such that
\begin{equation}
 \dom{h_i} = \{s(d_i)\}, \ \ \mbox{and}\ \
  h_i(s(d_i)) = s(c_{i})                  
              \label{eq-h-i=k-2}
\end{equation}
 and, for each \mbox{$(v_i,v_j)\in E$},
\begin{equation}%
 \mbox{$\dom{\widetilde{h_{ij}^{(e)}}} =
     \{s(e_{ij})+1,\ s(e_{ij})+2,\ s(e_{ij})+3\}$}   
              \label{eq-h-ij-neq-2}
\end{equation}%

 We will construct the required winning strategy in the following way.

 Assume a \mbox{$3$-}colouring of the leaves be given by assigning
 colours, say $b_{i,1}$, to the leaves $v_1$, $v_2$,\dots, $v_k$
 respectively.

 Then we modify our original stack~$s$ to a stack~$s'$
 by setting, for each \mbox{$1\leq i\leq k$},
 $$ s'(c_i) = b_{i,1}.$$%
 with modifying thereby $h_A$ to $h_A'$ by means of
 replacing each $h_i$ with the updated $h_i'$ in which
  $$ h_i'(s(d_i)) = s'(c_{i}) =  b_{i,1}.$$
 We claim that still
 $$ s', h_A' \models A_{G}, \ \ \mbox{and}\ \  s', h_X \models X,$$%
 and, therefore,
 $$ s', h_A'\circ h_X \models A_{G}* X.$$%
\begin{quote}
 {\em The crucial point is that
\begin{itemize}
\item[(a)]
 First, $c_i$ is quantified so that $X$ cannot
 refer to\/~$c_i$ explicitly.
\item[(b)]
 The only indirect possibility for\/~$X$
 to refer to\/~$c_i$ by applying $h_X$ to\/~$d_i$
 is blocked by the fact that $d_i$ is not in the domain of\/~$h_X$.
\end{itemize}
}
\end{quote}

\noindent
 Since \mbox{$A_{G}*X\models B_{G}*Y$}, we get
 $$ s', h_A'\circ h_X \models B_{G}*Y,$$%
 and for some\/\ \mbox{$h_B \subseteq h_A'\circ h_X$}\
\ and stack\/~$s_B$, which is extension of\/~$s'$
 to the existentially quantified variables in\/~$B$,
 $$ s_B, h_B \models B.$$%

\noindent
 Recall that $B_G$ has been defined as follows:
\[\begin{array}{@{}l} \exists \vec{z}.\ \big(
\textstyle\bigwedge_{i=1}^{n}(1\leq{c_{i,1}}\leq 3) \wedge
\textstyle\bigwedge_{(v_i,v_j)\in E}(1\leq \widetilde{c_{ij}}\leq 3)
\\[1ex]
 \wedge\ \
\textstyle\bigwedge_{i=1}^{k}\,(c_{i,1}-1 \equiv c_i\ (\bmod {3}))
\colon \\[1ex] 
 \textstyle\bigsepstar_{i=1}^{k} d_i \mapsto c_i \ *
 \textstyle\bigsepstar_{(v_i,v_j)\in E} 
 \relarray{e_{ij}}{c_{i,1}}{c_{i,1}}
\\[2ex]
 * \textstyle\bigsepstar_{(v_i,v_j)\in E}\,
   \relarray{e_{ij}}{c_{j,1}}{c_{j,1}}
   *\relarray{e_{ij}}{\widetilde{c_{ij}}}{\widetilde{c_{ij}}} \big).
\end{array}\]%
 where the existentially quantified variables $\vec{z}$
 are all variables occurring in\/~$B_{G}$ that are not mentioned
 explicitly in\/~$A_{G}$. 

\noindent
 Because of \mbox{$d_i \mapsto c_i$},
 we have for each \mbox{$1\leq i\leq k$},
 $$ s_B(c_i) =  h_i'(s(d_i)) = s'(c_{i}) = b_{i,1},$$%
 which means that, for \mbox{$1\leq i\leq k$},
 these \mbox{$s_B(c_i)$} represent correctly
 the original \mbox{$3$-}colouring of the leaves.

 Take the \mbox{$3$-}colouring of the whole~$G$ obtained
 by assigning the colours \mbox{$s_B(c_{i,1})$}
 to the rest of vertices $v_{k+1}$,\dots,$v_n$ respectively.

\noindent
 The part of the form
$$  \relarray{e_{ij}}{c_{i,1}}{c_{i,1}}
  *
  \relarray{e_{ij}}{c_{j,1}}{c_{j,1}}
  *\relarray{e_{ij}}{\widetilde{c_{ij}}}{\widetilde{c_{ij}}},$$
 provides that\ \mbox{$s_B(c_{i,1})\neq s_B(c_{j,1})$},\
 which results in that
 no adjacent vertices $v_i$ and\/ $v_j$ share
 the same colours \mbox{$s_B(c_{i,1})$} and\/ \mbox{$s_B(c_{j,1})$},
 with providing a perfect \mbox{$3$-}colouring of\/~$G$.

\noindent
 This completes the direction, and
 the proof.
\end{proof}

\section{Proofs of results in Section~\ref{sec:entailment}}

In order to prove Lemma~\ref{lem:entail_coding}, we make use of the following simple auxiliary lemma about the formula $\phi(-,-)$ from Definition~\ref{defn:entail_phi}.

\begin{lemma}
\label{lem:entail_phi}
Let $A$ and $B$ be symbolic heaps with respective spatial parts:
\[\begin{array}{rl}
A: & \bigsepstar^n_{i=1}\absarray{a_i}{b_i} * \bigsepstar^k_{i=1}\psto{t_i}{u_i} \\
B: & \bigsepstar^m_{j=1}\absarray{c_j}{d_j} * \bigsepstar^\ell_{j=1}\psto{v_j}{w_j}
\end{array}\]
Then we have, for any stack $s$,
\[\begin{array}{c@{\hspace{0.2cm}}l}
& s \models \phi(A,B) \\
\iff & \exists y \in \bigcup^n_{i=1}\{s(a_i),\ldots,s(b_i\} \cup \bigcup_{i=1}^k\{s(t_i)\} \mbox{ such that}\\
& \quad y \notin \bigcup_{j=1}^m\{s(c_j),\ldots,s(d_j)\} \cup \bigcup_{j=1}^\ell\{s(v_j)\}\ .
\end{array}\]
where $\phi(-,-)$ is given by Defn.~\ref{defn:entail_phi}. 
\end{lemma}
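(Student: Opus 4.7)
The plan is to prove this lemma by direct unfolding of Definition~\ref{defn:entail_phi} together with the abstraction $\abstr{-}$ from Definition~\ref{defn:abstraction}. First I would observe that $\abstr{\qf{A}}$ has spatial part consisting of $n+k$ arrays: the $n$ original arrays $\absarray{a_i}{b_i}$ together with $k$ single-cell arrays $\absarray{t_i}{t_i}$ corresponding to the abstracted pointers. Similarly $\abstr{\qf{B}}$ has $m+\ell$ arrays. Applying Definition~\ref{defn:entail_phi} to these abstractions yields
\[
\phi(A,B) \;=\; \exists x.\ \Phi_1(x) \wedge \Phi_2(x)\ ,
\]
where $\Phi_1(x)$ is the disjunction $\bigvee_{i=1}^n (a_i \leq x \leq b_i) \vee \bigvee_{i=1}^k (t_i \leq x \leq t_i)$, and $\Phi_2(x)$ is the conjunction $\bigwedge_{j=1}^m ((x<c_j)\vee(x>d_j)) \wedge \bigwedge_{j=1}^\ell ((x<v_j)\vee(x>v_j))$.

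Next, I would observe that the semantics of these subformulas match the set-theoretic statement exactly. Under any stack $s$ and with $y \in \nat$ a candidate witness, the clause $s \models a_i \leq y \leq b_i$ is equivalent to $y \in \{s(a_i),\ldots,s(b_i)\}$, and the clause $s \models t_i \leq y \leq t_i$ is equivalent to $y \in \{s(t_i)\}$. Hence $s \models \Phi_1(y)$ iff $y \in \bigcup_{i=1}^n\{s(a_i),\ldots,s(b_i)\} \cup \bigcup_{i=1}^k\{s(t_i)\}$. Symmetrically, $s \models (y<c_j) \vee (y>d_j)$ iff $y \notin \{s(c_j),\ldots,s(d_j)\}$, and $s \models (y<v_j) \vee (y>v_j)$ iff $y \neq s(v_j)$. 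Therefore $s \models \Phi_2(y)$ iff $y \notin \bigcup_{j=1}^m\{s(c_j),\ldots,s(d_j)\} \cup \bigcup_{j=1}^\ell\{s(v_j)\}$.

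Finally, I would conclude that $s \models \phi(A,B)$ iff there is some $y \in \nat$ with $s \models \Phi_1(y) \wedge \Phi_2(y)$, which by the two equivalences above is precisely the right-hand side of the lemma's statement. Both directions of the biconditional follow from this witness correspondence: left to right by taking $y$ to be the value of $x$ witnessing the existential, and right to left by taking $y$ itself as the witness. There is no real obstacle here; the proof is essentially a bookkeeping exercise reconciling the array-abstraction of pointers with the statement's explicit singleton sets $\{s(t_i)\}$ and $\{s(v_j)\}$.
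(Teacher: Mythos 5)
Your proof is correct and takes the same route as the paper, whose entire proof of this lemma is the one-line remark that it ``follows straightforwardly from the definitions of $\phi(-,-)$ and $\abstr{-}$''. Your unfolding of the abstraction into $n+k$ and $m+\ell$ arrays and the clause-by-clause semantic correspondence supplies exactly the bookkeeping the paper omits.
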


\begin{proof}
Follows straightforwardly from the definitions of $\phi(-,-)$ and $\abstr{-}$.
\end{proof}

\begin{applemma}{\ref{lem:entail_coding}}
For any instance $(A,B)$ of the $\ASL$ entailment problem, and for any stack $s$,
\[
s \modelsPA \chi(A,B) \;\iff\; \exists h.\ s,h \models A\ \mbox{ and } s,h\not\models B\ .
\]%
\end{applemma}

\begin{proof}
We assume that $A$ and $B$ are of the form given by Lemma~\ref{lem:entail_phi}, and establish each direction of the lemma separately.

\medskip\noindent($\Rightarrow$) Supposing that $s \modelsPA \chi(A,B)$, we require to construct a heap $h$ such that $s,h \models A$ and $s,h \not\models B$. Note that $s \modelsPA \gamma(A)$ by assumption, so by Lemma~\ref{lem:gamma} there is a heap $h$ such that $s,h \models A$.  Moreover, this fact is clearly independent of the data values stored in the arrays in $A$.  Thus we may choose $h$ such that $h(x) \neq s(w_j)$ for all $j \in[1,\ell]$ and for all $x \in [s(a_i),s(b_i)]$, where $i \in [1,n]$.

Now suppose for contradiction that $s,h \models B$. Thus, for some $\vec{q} \in \val^{|\vec{z}|}$ we have $s[\vec{z}\mapsto\vec{q}],h \models \qf{B}$ (where $\vec{z}$ is the tuple of existentially quantified variables in $B$). For convenience, we write $s' \defeq s[\vec{z}\mapsto\vec{q}]$. Since $\vec{z}$ does not include any variable in $A$, we also have $s',h \models A$.  Thus
\[
\dom{h} = \bigcup^n_{i=1}\{s'(a_i),\ldots,s'(b_i\} \cup \bigcup_{i=1}^k\{s'(t_i)\}
\]
and $h(s'(t_i)) = s'(u_i)$ for all $i \in [1,k]$.  Similarly, since $s',h \models \qf{B}$, we have
\[
\dom{h} = \bigcup_{j=1}^m\{s'(c_j),\ldots,s'(d_j)\} \cup \bigcup_{j=1}^\ell\{s'(v_j)\}
\]
and $h(s'(v_j)) = s'(w_j)$ for all $j \in [1,\ell]$.  We note that, because of our restrictions on existential quantification, $s'(w_j) = s(w_j)$ for all $w_j$.

Now, since $s \models \chi(A,B)$, by instantiating the universal quantifiers $\forall\vec{z}$ in the second conjunct as $\vec{q}$, we obtain
\[
s' \models \neg\gamma(\qf{B}) \vee \phi(A,B) \vee \phi(B,A) \vee \psi_1(A,B) \vee \psi_2(A,B).
\]
However, since $s',h \models \qf{B}$, we have $s' \models \gamma(\qf{B})$ by Lemma~\ref{lem:gamma}, and therefore
\[
s' \models \phi(A,B) \vee \phi(B,A) \vee \psi_1(A,B) \vee \psi_2(A,B).
\]
This gives us four disjunctive subcases to consider.

\proofcase{$s' \models \phi(A,B)$} In this case, Lemma~\ref{lem:entail_phi} and the two equations above for $\dom{h}$ imply that there exists $y \in \dom{h}$ such that $y \not\in \dom{h}$; contradiction.

\proofcase{$s' \models \phi(B,A)$} Symmetric to the case above.

\proofcase{$s' \models \psi_1(A,B)$} We have $s'(a_i) \leq s'(v_j) \leq s'(b_i)$ for some $i \in [1,n]$ and $j \in [1,\ell]$.  On the one hand, we have $h(s'(v_j)) = s'(w_j) = s(w_j)$.  On the other hand, $h$ was chosen specifically such that $h(x) \neq s(w_j)$ for any $x \in [s(a_i),s(b_i)] (=[s'(a_i),s'(b_i)])$.  Hence we have a contradiction.

\proofcase{$s' \models \psi_2(A,B)$} We have $s'(t_i) = s'(v_j)$ and $s'(u_i) \neq s'(w_j)$ for some $i \in [1,k]$ and $j \in [1,\ell]$. On the one hand we have $h(s'(t_i)) = s'(u_i) \neq s'(w_j)$, and on the other we have $h(s'(t_i)) = h(s'(v_j)) = s'(w_j)$, a contradiction.  This completes all subcases.

\medskip\noindent($\Leftarrow$) Supposing that $s,h \models A$ but $s,h\not\models B$, we need to show that $s \models \chi(A,B)$.  Since $s,h \models A$, we immediately get $s \models \gamma(A)$ by Lemma~\ref{lem:gamma}.  Then, letting $\vec{q} \in \val^{|\vec{z}|}$ be an arbitrary instantiation of the variables $\vec{z}$ and writing  $s'=s[\vec{z}\mapsto\vec{q}]$, it remains show that
\[
s' \models \neg\gamma(\qf{B}) \vee \phi(A,B) \vee \phi(B,A) \vee \psi_1(A,B) \vee \psi_2(A,B)\ .
\]
Since $\vec{z}$ does not mention any variable in $A$, we have $s',h \models A$, and thus
\[
\dom{h} = \bigcup^n_{i=1}\{s'(a_i),\ldots,s'(b_i\} \cup \bigcup_{i=1}^k\{s'(t_i)\}
\]
with $h(s'(t_i)) = s'(u_i)$ for all $i \in [1,k]$.

Now, since $s, h \not\models B$, we can instantiating the quantifiers $\vec{z}$ in $B$ by $\vec{q}$ to obtain $s',h \not\models \qf{B}$.  If $s',h \not\models \Pi'$, then immediately $s' \models \neg\gamma(\qf{B})$ and we are done.  Otherwise, $s',h$ fails to satisfy the spatial part of $\qf{B}$. By examining the satisfaction relation for spatial formulas, this yields four disjunctive subcases.
\begin{enumerate}
\item Some array in $B$ is ill-defined under $s'$, i.e. $s'(c_j) > s'(d_j)$ for some $j$.  In that case $s' \models \neg\gamma(\qf{B})$, and we are done.

\item Each array in $B$ is defined under $s'$, but $\dom{h}$ is not, because the domains of the arrays and pointers in $B$ overlap on some location.  In this case, it is again straightforward to see that $s' \models \neg\gamma(\qf{B})$.

\item The domain $\dom{h}$ is well-defined, but not equal to
\[
\textstyle\bigcup_{j=1}^m\{s'(c_j),\ldots,s'(d_j)\} \cup \bigcup_{j=1}^\ell\{s'(v_j)\}\ .
\]
In that case, using Lemma~\ref{lem:entail_phi} and the characterisation of $\dom{h}$ in terms of $A$ above, it is easy to show that either $s' \models \phi(A,B)$ or $s' \models \phi(B,A)$.

\item Finally, it might be that $\dom{h}$ agrees with the spatial part of $B$ under $s'$ (i.e. $s',h \models \abstr{\qf{B}}$), but disagrees on some pointer value, i.e., $h(s'(v_j)) \neq s'(w_j)$ for some $j \in [1,\ell]$.  We observe that $s'(v_j) \in \dom{h}$, and distinguish two further subcases, using the previous characterisation of $\dom{h}$ in terms of $A$ above.
    \begin{itemize}
    \item If $s'(v_j) \in \{s'(a_i),\ldots,s'(b_i)\}$ for some $i \in [1,n]$, then we immediately have $s' \models \psi_1(A,B)$.
    \item Otherwise, $s'(v_j) = s'(t_i)$ for some $i \in [1,k]$.  In that case, $h(s'(v_j)) = h(s'(t_i)) = s'(u_i)$, and thus $s'(u_i) \neq s'(w_j)$.  Thus $s' \models \psi_2(A,B)$, and we are done.  This completes all subcases, and the proof.
    \end{itemize}
\end{enumerate}
\end{proof}

\begin{applemma}{\ref{l-kill-phi}}
 We can rewrite \mbox{$\Phi(A,B)$}
 as a quantifier-free formula at only polynomial cost.
\end{applemma}

\begin{proof}
We write $\Phi(A,B) = \exists x. \alpha_{A,B}(x)$, so that, following Definition~\ref{defn:entail_phi}, $\alpha_{A,B}(x)$ is the formula
\[
\textstyle\bigvee_{i=1}^{n}
     a_i \leq x \leq b_i
        \wedge \bigwedge_{j=1}^{m}(x < c_j) \vee (x > d_j)\ .
\]
We claim that $\Phi(A,B)$ is then equivalent to the formula
\[
\textstyle\bigvee_{i_0=1}^{n}
  \alpha_{A,B}(a_{i_0}) \vee \bigvee_{j_0=1}^{m} \alpha_{A,B}(d_{j_0}+1) \ .
\]

One direction of the equivalence is trivial (any stack satisfying the above formula immediately satisfies $\alpha_{A,B}(x)$ for some $x$ and therefore $\Phi(A,B)$).  We show the non-trivial direction.


Assuming that $s \models \Phi(A,B)$, there exists a number $x_0$ and $k \in [1,n]$ such that
\[
s\models a_{k} \leq x_0 \leq b_{k}
 \land \textstyle\bigwedge_{j=1}^{m}(x_0 < c_j) \vee (x_0 > d_j)\ .
\]
We consider two cases, recalling that \mbox{$\phi(A,B)$} captures the property that there is an address in an array in $A$ that is not covered by any of the arrays in $B$ (cf. Lemma~\ref{lem:entail_phi}).

\begin{enumerate}
 \item Suppose that the address~\mbox{$s(a_{k})$} is not covered
 by any array in $B$, i.e., that $s(a_k) < s(c_j)$ or $s(a_k) > s(d_j)$ for all $j \in [1,m]$.  In that case, trivially, $s \models \alpha_{A,B}(a_k)$, and we are done.
 %

\item Otherwise, $s(a_k)$ is covered by an array in $B$, i.e., $s(c_j) \leq s(a_k) \leq s(d_j)$ for some $j \in [1,m]$.
 Then we choose $d_{j_0}$ such that
\[
s(d_{j_0}) = \max_{1\leq j\leq m}\{s(d_{j})\mid  s(d_{j})< x_0 \}.
\]
(That is, $d_{j_0}$ is the largest right-endpoint of an array in $B$ that is still smaller than $x_0$.)  In that case, the effect is that \mbox{$s(d_{j_0}+1)$} must still be covered by the arrays in $A$,
\[
s\models (a_{k}\leq d_{j_0}< d_{j_0}+1\leq x_0 \leq b_{k})
\]
but \mbox{$s(d_{j_0}+1)$} cannot itself be allocated in $B$
\[\begin{array}{rl}
s\models & \textstyle\bigwedge_{j=1}^{m}((d_{j_0}+1\leq x_0 < c_j) \\
    & \mathrel{\vee} ((x_0 > d_j)\land (d_{j_0}+1 > d_j)))
\end{array}\]
Hence \mbox{$\alpha_{A,B}(d_{j_0}+1)$} holds, and we are done.\qedhere
\end{enumerate}
\end{proof}

\begin{applemma}{\ref{lem:colour_to_entail}}
Let $G$ be a 2-round \mbox{3-}colouring instance,
 and let $A_G$ and $B_G$ be the symbolic heaps given
 by Defn.~\ref{defn:colour_to_entail}. Then, we have
\[
A_G \models B_G \;\iff\;
 \exists\ \mbox{winning strategy for colouring $G$.}
\]
\end{applemma}

\begin{proof}
 Similar to Lemma~\ref{l-colour-biabduct}.
\end{proof}

\end{document}